\pgfplotsset{compat=1.6}
\pgfplotsset{compat=1.6}
\newtheorem{theorem}{Theorem}
\newtheorem{remark}[theorem]{Remark} 
\newtheorem{problem}{Problem} 
\newtheorem{definition}{Definition}
\newtheorem{lemma}{Lemma}
\newtheorem{example}{Example}
\newtheorem{corollary}[theorem]{Corollary}
\newtheorem{conjecture}{Conjecture}
\newcommand{\ceil}[1]{\lceil #1 \rceil}
\newcommand{\floor}[1]{\lfloor #1 \rfloor}
\newcommand{\errLsum}{\ensuremath{L_1}}
\newcommand{\errLinf}{\ensuremath{L_\infty}}
\newcommand{\errLinfPos}{\ensuremath{L_\infty^+}}
\newcommand{\errLinfPosRel}{\ensuremath{L_{\infty,r}^+}}
\newcommand{\errLinfRel}{\ensuremath{L_{\infty,r}}}
\newcommand{\bigOnotation}[1]{\ensuremath{O\big (#1 \big )}}
\newcommand{\justTc}{\min(W \cdot k, k + n \cdot \lg k)}
\newcommand{\runtimeTc}{\bigOnotation{\min(W \cdot k, k + n \cdot \lg k)}}
\newcommand{\runtimeNiagaraWithN}{\bigOnotation{k + n \cdot \lg k}}
\newcommand{\runtimeLiftingTwosidedWithT}{\bigOnotation{W \cdot T}}
\newcommand{\runtimeLiftingOnesidedWithT}{\bigOnotation{W(k \cdot \lg W + T)}}
\newcommand{\runtimeLiftingTwosided}{\bigOnotation{W \cdot \justTc{}}}
\newcommand{\runtimeLiftingOnesided}{\bigOnotation{W(k \cdot \lg W + \justTc{})}}
\newcommand{\runtimeReductionTaTc}{\bigOnotation{W \cdot (T_A + T_C)}}
\definecolor{myblue}{RGB}{80,80,160}
\definecolor{mygreen}{RGB}{80,160,80}
\newcommand{\TRANS}[3] {\ensuremath{[{#1} \to_{#2} {#3}]}} 
\begin{document}

\title{Optimal Weighted Load Balancing in TCAMs}
\author{Yaniv Sadeh, Ori Rottenstreich and Haim Kaplan}
\maketitle

{\let\thefootnote\relax\footnotetext{
This manuscript is an extended version of the paper \cite{sadehconext}, presented at ACM CoNEXT 2020.
Yaniv Sadeh, Tel-Aviv University, Israel (yanivsadeh@mail.tau.ac.il). 
Ori Rottenstreich, Technion - Israel Institute of Technology,  Israel  (or@technion.ac.il). 
Haim Kaplan, Google and Tel-Aviv University, Israel (haimk@tau.ac.il). }}

\setlength{\textfloatsep}{3pt} 

\begin{abstract}
\looseness=-1
Traffic splitting is a required functionality in networks, for example for load balancing over multiple paths or among different servers. The capacities of the servers determine the partition by which traffic should be split. A recent approach implements traffic splitting within the ternary content addressable memory (TCAM), which is often available in switches. It is important to reduce the amount of memory allocated for this task since TCAMs are power consuming and are often also required for other tasks such as classification and routing. Previous work showed how to compute the smallest prefix-matching TCAM necessary to implement a given partition exactly. In this paper we solve the more practical case, where at most $n$ prefix-matching TCAM rules are available, restricting the ability to implement exactly the desired partition. We give simple and efficient algorithms to find $n$ rules that generate a partition closest in $L_\infty$ to the desired one. We do the same for a one-sided version of $L_\infty$ which equals to the maximum overload on a server and for a relative version of it. We use our algorithms to evaluate how the expected error changes as a function of the number of rules, the number of servers, and the width of the TCAM.
\end{abstract}

\section{Introduction}
\label{section_indtroduction}
In many networking applications, traffic has to be split into multiple possible targets. For example, this is required in order to partition traffic among multiple paths to a destination based on link capacities (e.g.~\cite{split_sigcomm05, split_conext07, split_sigcomm14}), and when sending traffic to one of multiple servers proportionally to their CPU or memory resources.

It is increasingly common to rely on network switches  to perform the split~\cite{al2008scalable,Ananta}. Equal cost multipath routing (ECMP)~\cite{RFC2992} and its generalization
WCMP (Weighted ECMP) \cite{WCMP} use hashing for this task. The possible target values are written to memory entries (with repetitions in WCMP).  Then, a flow is randomly hashed into one of the entries, generating a distribution according to the number of appearances of each possible target.

The implementation of some distributions in WCMP can be costly in terms of the number of  memory entries required. While for instance implementing a 1:2 ratio can be done with three entries (one for the first target and two for the second), the implementation of a  ratio like $1:2^W-1$ is expensive, requiring $2^W$ entries. Memory can grow quickly for particular distributions over many targets, even if they are only being approximated.

More recently, a natural approach was taken to implement traffic distributions within the Ternary Content Addressable Memory (TCAM), available in commodity switch architectures. For some distributions this allows a much cheaper representation~\cite{WangBR11, Niagara, AccurateExp}. In particular, a partition of the form $1:2^W-1$ can be implemented with only two entries. A nice feature of TCAM is that all the rules are checked in parallel, and multiple-matches are resolved to the highest priority rule (no ties), all done directly by the hardware. Unfortunately, TCAMs are power consuming and thus are of limited size~\cite{appelman2012performance, McKeownABPPRST08}. Therefore one often needs to represent a partition using a predefined TCAM quota.

Finding a representation of a partition becomes more difficult when the number of possible targets is large.
Focusing on the \emph{Longest Prefix Match model}, \cite{Niagara} suggested an algorithm named \emph{Niagara}, showed that it is very efficient in practice, and considered a tradeoff of reduced accuracy for less rules. \cite{BitMatcher} suggested an optimal algorithm named \emph{Bit Matcher} that computes a minimal size TCAM for a desired partition, and proved that Niagara is also optimal.

The work of \cite{BitMatcher} does not address the common scenario in which the available number of TCAM entries is  smaller than the minimum needed to represent the desired partition exactly and therefore an approximate solution is necessary. However \cite{BitMatcher} showed experimentally that a truncation of an implementation of the exact partition gives a good approximation. Specifically, they showed that the subsets of the `less specific' rules produced by Bit Matcher and Niagara, provide a good approximate partition according to several metrics. Unfortunately, they did not prove any worst case approximation guarantee for this approach.

\looseness=-1
In this paper we focus on the problem of finding the best approximate partition that fits  a fixed ``budget'' of TCAM entries.
This problem arises since the same TCAM is often  used for multiple tasks, one of which is traffic splitting. Thus, the number of rules allocated for traffic slitting may be limited in order to prevent starvation of other tasks that require rules, or simply because we allocate to load balancing the space remaining after higher-priority tasks had been allocated their rules.

When we must compromise the accuracy of a desired partition of $2^W$ $P = (p_1,\ldots,p_k)$ due to bounded memory size, then we need to define an appropriate notion of approximation $P' = (p'_1,\ldots,p'_k)$. Different applications may prefer different measures. Two natural measures are the $\errLinf$ and $\errLsum$ norms between partitions viewed as vectors. That is $|P-P'|_\infty = \max_i{|p_i-p'_i|}$ and $|P-P'|_1 = \sum_i{|p_i-p'_i|}$. Another interesting measure is a ``one-sided'' variant of $\errLinf$, denoted $\errLinfPos$, which equals to the maximum among the positive differences between the entries ($\max_i{(p'_i-p_i)}$).
This measure is not symmetric and equals to the
maximum of the overload on a server ignoring underloaded servers.\footnote{Note that $\errLsum$ does not have a one-sided version since the total overload equals to the total underload.} 
Another measure is minimizing the maximum relative overload, $\max_i \frac{p'_i-p_i}{p_i}$, denoted $\errLinfPosRel$. For example, if $\errLinfPosRel = 0.1$ it means that no server gets traffic larger by more than 10\% of its desired load.

In this work we focus on the $\errLinf$ distance (Sections~\ref{section_characterizing_lifting_cases}-\ref{section_solving_lifting}) and the $\errLinfPos$ and $\errLinfPosRel$ distances (Section~\ref{section_one_sided_metric}). We present efficient algorithms that find the closest partition (with respect to each of these distances) to a given one among all partitions that can be realized with at most $n$ rules.

We want to use the $\errLinfPosRel$ distance when stronger servers can tolerate larger overloads. That is, when an overload of 5\% is equally painful whether it occurs on a weak server or on a strong server. For example, when 
deviations result in delay, faster servers can tolerate larger deviations. In contrast when the deviating traffic is dropped then $\errLinfPos$ is more appropriate. When we also want to avoid servers which are severely underloaded then we should use $\errLinf$. Solving the problem for the relative version $\errLinfRel$ (minimize $\max_i \frac{|p_i'-p_i|}{p_i}$) is  open. When the desired loads on the servers are similar then optimizing with respect to $\errLinf$ and $\errLinfRel$ should give similar TCAMs.

\looseness=-1
A naive approach to get $n$ rules that may induce a close partition 
is by taking the widest $n$ rules (i.e.\ with maximum number of wildcards) of a smallest set of rules that induce the desired partition exactly. We can do this efficiently using a solution computed by either Bit Matcher or Niagara. 
Unfortunately this simple approach may not give the closest partition (according to the distances mentioned above) that we can induce by $n$ rules. This is demonstrated in Fig.~\ref{figure_tcam_trie_all} in the next section.

{\bf Our Contributions.} 
(1) We give new polynomial time algorithms that find partitions that optimally approximate
a desired partition and can be represented with a given memory constraint. To the best of our knowledge, this problem has never been studied, except for simple heuristics in \cite{Niagara,BitMatcher}. We do so for three ``approximation-measures'' between partitions: $\errLinf$, $\errLinfPos$ and $\errLinfPosRel$.
Our algorithm for $\errLinf$ runs in $\runtimeLiftingTwosidedWithT{}$ time and for $\errLinfPos$ and $\errLinfPosRel$ in $\runtimeLiftingOnesidedWithT{}$ time, where $k$ is the number of targets, $W$ is the  TCAM-width, $n$ is the number of allowed rules and $T = \runtimeTc{}$.

We observe that the problem of finding the closest partition that can be represented by at most $n$ rules is polynomially equivalent to the problem in which we want to find a partition that can be implemented by the smallest number of rules among those that have error below a fixed (specified) threshold.

\looseness=-1
To find a partition with bounded error of lowest complexity (fewest rules) we consider a more general formalization of \emph{lifting problems} (see Problem~\ref{problem_lift_x_to_y}). In a lifting problem we specify a lower bound on each coordinate in the partition and a range of legal values. We then ask for a partition of lowest complexity that obeys these constraints.
We reduce the problem of finding a partition with bounded error of lowest complexity  
to particularly simple lifting problems in which the lengths of the ranges have at most $3$ different values for $\errLinf$, and all lower bounds are $0$ for $\errLinfPos$ and $\errLinfPosRel$.
Then we show how to solve these special lifting problems. The lifting formalization can also capture similar approximation-measures.

(2) We exploit our algorithms to perform the following experimental studies: (A) We analyze the distribution of the error when approximating a partition of $2^W$ into $k$ parts, using at most $n$ rules.
We estimate how this error depends on the different parameters by computing the empirical average error and variance over a batch of random partitions with the same parameters.
We change each of the parameters while fixing the others, to 
single out the effect of each of them. We also consider the interesting case where the number of rules is proportional to the number of servers, i.e.\ $n/k=constant$.
(B) We compare the average $\errLinfPos$ and $\errLinf$ errors and conclude that they do not differ by much in practical scenarios. 
(C) We measure the error as a function of the number of rules for partitions derived from real data.
(D) We compare the error of the naive approach that truncates a smallest set of TCAM rules. It turns out that although sub-optimal, the error of this simpler approach is not much higher, and we conjecture that it is a $2$-approximation of the optimal error. 
(E) We measure the running time of our algorithms and compare it to the running time of the simpler truncation scheme. Our results suggest a trade-off between extra approximation error and the slightly faster running time of the Niagara algorithm (compared to the optimal algorithms that we develop here) which is \runtimeNiagaraWithN{}.

Note that all of these evaluations were not possible prior to our work as there was no efficient way to compute the closest partition which can be represented with at most $n$ rules.

\looseness=-1
The structure of the rest of the paper is as follows.
Section~\ref{section_model} formally defines our approximate traffic splitting problems, and the lifting problem that we mentioned before. Section~\ref{section_characterizing_lifting_cases} gives a reduction of the problem of finding the closest partition in $\errLinf$ distance of complexity at most $n$  to particular restricted  lifting problems. Section~\ref{section_solving_lifting} gives algorithms for the special lifting problems that arise from the reduction in  Section~\ref{section_characterizing_lifting_cases}. Section~\ref{section_one_sided_metric} reduces the problems of finding the closest partition in $\errLinfPos$ and $\errLinfPosRel$ distances, of complexity at most $n$, to yet another special lifting problem, and gives an efficient algorithm for this lifting problem.
Section~\ref{section_non_integer_partitions_approximation} extends all the results to non-integer input partitions.
Our experiments are described in Section~\ref{section_experiments}, due to lack of space some are provided as supplementary material. We review additional related work and we conclude with some open problems in Sections~\ref{section_related_work}~and~\ref{section_conclusions}, respectively. Finally, we provide three supplementary sections and an  implementation of our algorithms.
Section~\ref{section_review_bitmatcher_and_niagara} briefly reviews the existing Bit Matcher and Niagara algorithms, Section~\ref{section_code_info} describes the code, and Section~\ref{section_supplementary_experiments} contains more experiments.

\section{Traffic Splitting Problem}
\label{section_model}

A Ternary Content Addressable Memory (TCAM) of width $W$ is a table of entries, or \emph{rules}, each containing a \emph{pattern} and a \emph{target}. (We assume each target is an integer in $[1,\ldots,k]$.) Each pattern is of length $W$ and consists of bits (0 or 1) and wildcards ($*$).
An \emph{address} is said to match a pattern if all of the specified bits of the pattern (ignoring wildcards) agree with the corresponding bits of the address. If several rules fit an address, the first rule applies, and recall that this resolution is done by the hardware itself (in parallel). An  address $v$ is associated with the target of the rule that applies to $v$. We consider only minimal sets of rules, i.e.\ such that we cannot remove a rule without changing the mapping defined by the set. In the following we refer to a set of TCAM rules simply as a TCAM.

\looseness=-1
In this paper we assume the Longest Prefix Match (LPM) model in which wildcards appear only as a consecutive suffix of the pattern. This  model is motivated by specialized TCAM hardware such as \cite{LPM_TCAM}, and has been studied intensively \cite{WangBR11, Niagara, AccurateExp, BitMatcher}. As detailed in~\cite{NSDIJose15}, common programmable switch
architectures such as the RMT and Intel’s FlexPipe have tables of different types and in particular tables dedicated to longest prefix matching~\cite{Forwarding13, FlexPipe}.
In this setting a pattern is in fact a \emph{prefix} of bits that matches all addresses that start with this prefix. The set of addresses $S(p_1)$ and $S(p_2)$ that match different prefixes $p_1$ and $p_2$ (assume without loss of generality that $p_1$ is not longer than $p_2$) form a laminar set system. That is $S(p_1) \supseteq S(p_2)$ if $p_1$ is a prefix of $p_2$ and 
$S(p_1) \cap S(p_2) = \emptyset$ otherwise. Furthermore, in case $p_1$ is a prefix of $p_2$  then we may assume that $p_2$ appears before $p_1$ (otherwise $p_2$ is redundant). It follows that we may assume that the prefix rules are sorted in a non-increasing order of their lengths.
Finally, we assume that the set of rules ends with a rule consisting of a prefix of length 0 that matches all addresses.

A set of prefix rules $T$ corresponds to a subset of the nodes of the full binary trie (see Fig.~\ref{figure_tcam_trie_all}).\footnote{Our arguments and results cannot be generalized to the case of non-prefix rules since the correspondence with the full binary trie breaks, similar to previous works~\cite{WangBR11, Niagara, AccurateExp, BitMatcher}.}
In particular, the match-all prefix corresponds to the root, and any other nonempty prefix $p$ corresponds to the node whose path to the root gives $p$ if we change an edge to a left child to $0$ and an edge to a right child to $1$. The rule which applies to an address $v$ (which is a leaf) corresponds to the closest ancestor which represents a prefix in $T$. This is the longest prefix of the address $v$ in $T$.

\begin{figure}[t!]
    \centering
    \subfigure[Optimal exact representation. Corresponds to the sequence {$[4 \to_0 3][5 \to_0 2][3 \to_1 2][2 \to_2 1][1 \to_3 \bot]$}.]{
    \includegraphics[width=0.75\linewidth]{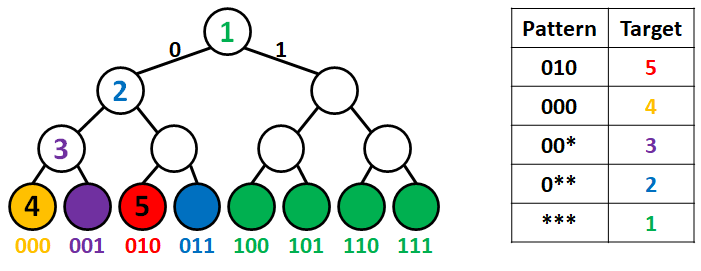}
    \label{figure_tcam_trie_exact}
    }
    
    \subfigure[Approximation via truncation (error of $3$). Corresponds to the sequence {$[2 \to_2 1][1 \to_3 \bot]$}.]{
    \includegraphics[width=0.75\linewidth]{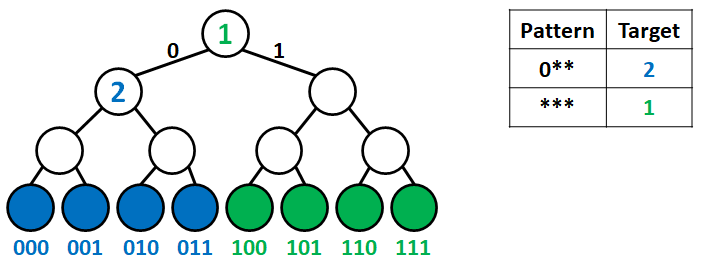}
    \label{figure_tcam_trie_truncated}
    }
    
    \subfigure[Optimal $\errLinf$,$\errLinfPos$ approximation: error of $2$. Corresponds to the sequence {$[2 \to_1 1][1 \to_3 \bot]$}.]{
    \includegraphics[width=0.75\linewidth]{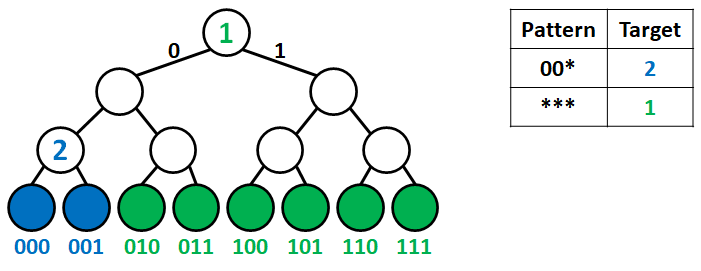}
    \label{figure_tcam_trie_opt}
    }

    \subfigure[Optimal $\errLinfPosRel$ approximation: error of $\frac{3}{4}$. Corresponds to the sequence {$[2 \to_1 0][1 \to_3 \bot]$}.]{
    \includegraphics[width=0.75\linewidth]{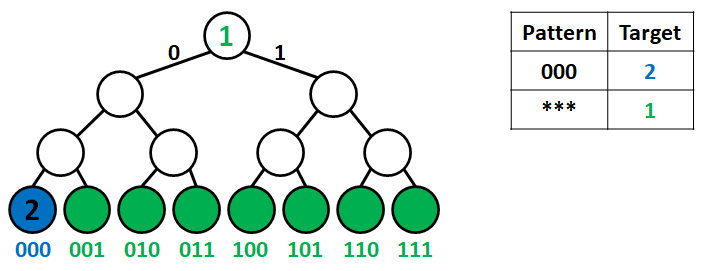}
    \label{figure_tcam_trie_optRel}
    }

  \caption{Illustration of TCAM rules, their trie representation and sequences generating the corresponding partitions. \ref{figure_tcam_trie_exact} A shortest exact representation of $P=[4,1,1,1,1]$ with $5$ rules. \ref{figure_tcam_trie_truncated}, \ref{figure_tcam_trie_opt}, \ref{figure_tcam_trie_optRel} give different approximations of $P$ with $2$ rules. \ref{figure_tcam_trie_truncated} An approximation of $P$ with the last two rules of the optimal exact representation. This induces the partition $[4,4,0,0,0]$. \ref{figure_tcam_trie_opt} A representation with two rules of $[6,2,0,0,0]$ which is closest to $P$ in $\errLinf$ and $\errLinfPos$ among all partitions that can be represented with $2$ rules. \ref{figure_tcam_trie_optRel} A representation with two rules of $[7,1,0,0,0]$ which is closest to $P$ in $\errLinfPosRel$ among all partitions that can be represented with $2$ rules. The figure presents toy-examples with very few rules, which is why the approximation ends up with less than $k$ reachable targets. When there are on average $2.5$ or more rules per target this is unlikely to happen, see Section~\ref{subsection_experiments_degeneracy} for more details.}
  \label{figure_tcam_trie_all}
\end{figure}

A TCAM of width $W$ with $k$ targets induces a partition of the address space of $2^W$ binary strings  into $k$ parts. Each address is associated with the target of the rule that applies to it. In this paper, we consider only partitions to $k$ parts which are non-negative integers that sum to $2^W$. The following natural problem has been solved in \cite{BitMatcher}.

\begin{problem}
\label{problem_tcam_exact}
Given a partition $P$ of $2^W$ into $k$ parts, find the shortest TCAM that realizes $P$, i.e.\ a TCAM that  partitions the addresses space to exactly these parts.
\end{problem}

We assume implicitly that every address is equally likely to arrive, therefore the TCAM implementation only requires each target to receive a certain number of addresses. This assumption might not hold in practice, but it can be mitigated by ignoring bits which are mostly fixed like subnet masks etc. For example, \cite{Kang2014NiagaraSL} analyzed some real-data traces and concluded that for those traces about $6{-}8$ bits out of the client's IPv4 address are ``practically uniform''. The non-uniform case in which addresses have different weights is not considered in this paper.\footnote{The exact non-uniform problem is NP-complete, since if addresses are weighted arbitrarily we can reduce the Partition problem~\cite{garey1979computers} to Problem~\ref{problem_tcam_exact}. Given an input $S=\{s_1,\ldots, s_m\}$ for Partition, consider a space of $2^W \ge m$ addresses. Give $m$ addresses weights $s_1, \ldots, s_m$, and weight $0$ to all others. Let $P = [p,p]$ for $p \equiv \frac{1}{2}\sum_{i=1}^{m}{s_i}$ be input for Problem~\ref{problem_tcam_exact}. $S$ can be partitioned into two subsets of equal weight if and only if $P$ has an exact TCAM representation.}
Moreover, independently of the uniformity assumption, the network designer may choose to reduce the granularity of the partition in order to smooth random fluctuations (even if address uniformity holds in general). For instance, if $W=32$ is over-granular, one can choose $W=10$ to begin with.

Throughout the paper we  denote vectors, such as partitions, with a capital letter such as $P$ or $X$. Each coordinate of such vector is denoted by a corresponding non-capital letter with an index as a subscript, such as $p_i$ and $x_i$, respectively.

Observe that a set of prefix rules $T$ that induces a partition $P$ into parts of sizes $p_1,\ldots, p_k$ defines (non-uniquely) a sequence $s$ of \textit{transactions} between pairs $(i,j)\in [1,\ldots,k]\times [1,\ldots,k]$, $i\not= j$ as defined below, such that after ``executing'' $s$ on $P$, all the $p_i$ are $0$ except for one $p_j$ that equals $2^W$.

\begin{definition}[Transactions]
\label{def_transaction}
We denote a transaction of size $2^\ell$ from $p_i$ to $p_j$ by $[i \to_\ell j]$. We also refer to $i$ (the index) as the ``sender'' and to $j$ as the ``receiver''. Applying a transaction means that we update the values as follows: $p_i \leftarrow p_i - 2^\ell$, $p_j \leftarrow p_j + 2^\ell$. A sequence $s$ is a collection of transactions. Fig.~\ref{figure_tcam_trie_all} provides four examples for describing sequences of transactions, each with its corresponding trie and TCAM table.
\end{definition}

\looseness=-1
Indeed, think of the representation of $T$ as a binary trie. Consider a prefix $p$ of length $W-\ell$ with target $A$ such that no descendant of $p$ is in $T$. Let $p'$ be the closest ancestor of $p$ which is also in $T$. Let $A'$ be the target of $p'$.  By our minimality assumption $A' \not= A$, so we add to $s$ a transaction moving $2^\ell$ from $A$ to $A'$ and remove $p$ from $T$. Then we iterate this step until only the match-all rule is left in $T$.
For example if we start from the original trie in Fig.~\ref{figure_tcam_trie_exact} then after adding to $s$ three transactions corresponding to the three longest prefix-rules, i.e.\ the first rules in this TCAM, the trie is as shown in Fig.~\ref{figure_tcam_trie_truncated}.

For uniformity, we add a dummy-transaction $[j \to_W \bot]$ where $\bot$ is a canonical symbol that has initial value $0$, and $j$ is the index that is the target of the match-all rule.
This transaction corresponds to deleting the match-all rule in $T$ in the process above, such that  all weights end up  zero. For a given partition $P$, we consider only sequences of transactions  that zero all the  weights (into $\bot$).

Given a sequence of transactions $s$ obtained from a TCAM $T$ as above we can reconstruct $T$ as follows: We start by a rule with a prefix of length 0 (match-all) whose target is $j$ where $[j \to_W \bot] \in s$. Then we traverse the transactions of $s$ in non-increasing sizes and act as follows. When we encounter a transaction moving $2^\ell$ from $i$ to $j$, we add a prefix rule of length $W-\ell$ with target $i$ which extends a previous prefix rule of target $j$ (i.e.\ the specified prefix of the rule of $j$, is a prefix of the new rule of $i$). In the trie representation, the prefix rule of $i$ which we add has a subtree of size $2^\ell$ and the prefix of rule $j$ which we extend has a larger subtree size.

\begin{definition}[Induced Partition]
\label{definition_induced_partition}
Let $s$ be any sequence of transactions. We say that $P$ is the \textbf{partition induced by $s$} if when we apply $s$ to $P$ all the weights become zero. Note that a sequence $s$ induces a unique partition, however different sequences may induce the same partition. When $s$ is ``read backwards'', it constructs $P$.
\end{definition}

Note that an arbitrary sequence $s$ that induces a partition $P$, may not correspond to
a TCAM $T$ that realizes $T$ by the mapping described above. Indeed, there are infinitely many sequences, but only a finite number of marked tries. Nevertheless \cite{BitMatcher} proved that any shortest sequence that induces $P$ can be converted to a sequence of the same length that corresponds to a set of prefix rules of a TCAM that realizes $P$.\footnote{In \cite{BitMatcher} the sequences are defined without the $\bot$-transaction, causing an off-by-one mismatch between TCAM size and sequence length.} 
Moreover, the previous paragraphs demonstrate that any TCAM with $n$ rules that realizes a partition $P$ also yields a sequence of length $n$ that induces $P$. Therefore, if a shortest sequence $s$ that induces $P$ is of length $n$ then this is also the size of the smallest TCAM that realizes $P$, and essentially Problem \ref{problem_sequence_exact} below is equivalent to 
Problem~\ref{problem_tcam_exact}.

\begin{problem}
\label{problem_sequence_exact}
Given a partition $P$, compute a shortest sequence that induces it.
\end{problem}

\begin{definition}[Complexity of a Partition]
\label{definition_length_of_partition}
Let $P$ be a partition. We define $\lambda(P)$ as the length of a shortest sequence that induces $P$. We  say that $\lambda(P)$ is \textbf{the complexity of $P$}. This value describes how large a TCAM realization of $P$ must be.
\end{definition}

Problems~\ref{problem_tcam_exact} and \ref{problem_sequence_exact} were studied in \cite{BitMatcher}, which showed that a shortest sequence that induces a partition $P$ can be found in $O(Wk)$ time, by their Bit Matcher algorithm or in $O(Wk \lg k)$ time by the Niagara algorithm of \cite{Niagara}. Therefore, we can compute $\lambda(P)$ for any desired partition $P$, efficiently. However, $\lambda(P)$ might be too large for a particular application. To address this issue,
we study here the problem of finding a best approximate partition of a given complexity.
 
\begin{definition}
\label{definition_partition_balls}
We define  $\Lambda_n$ to be the set of all partitions of complexity at most $n$. That is, $\Lambda_n \equiv \{ P: \lambda(P) \le n \}$. Note that $\Lambda_{n_1} \subseteq \Lambda_{n_2}$ whenever $n_1 \le n_2$.
\end{definition}

\begin{definition}
\label{definition_distances}
Let $P,P'$ be two partitions. Let $\Delta = P'-P$ be the difference vector. We denote by $D(P',P)$ the distance of $P'$ from $P$. We consider four different distances in this paper which are defined as follows:
\begin{enumerate}[leftmargin=*, noitemsep]
    \item Max two-sided: $\errLinf(P',P) = \max_i{|\Delta_i|}$
    \item Max two-sided relative: $\errLinfRel(P',P) = \max_i{\frac{|\Delta_i|}{p_i}}$
    \item Max one-sided (positive): $\errLinfPos(P',P) = \max_i{\Delta_i}$
    \item Max one-sided (positive) relative: $\errLinfPosRel(P',P) = \max_i{\frac{\Delta_i}{p_i}}$
\end{enumerate}
\end{definition}

In simple words, $\errLinf$ penalizes for the maximum deviation, while $\errLinfRel$ normalizes this relative to the capacity of each target. $\errLinfPos$ penalizes only for overloaded targets, and $\errLinfPosRel$ measures the overload relative to the capacity of each target. It is possible to define similar one-sided distances for underloads, but the practical interest in it is questionable. One-sided measures may be of interest if the network designer worries about overload but can accept underutilized resources.

\begin{problem}[Bounded-Length Approximation]
\label{problem_tcam_approx_by_rules}
Given a partition $P$, a distance $D$, and an integer $n$, find a partition $P' \in \Lambda_n$ such that $D(P',P)$ is minimized.
\end{problem}

In TCAM terminology, Problem~\ref{problem_tcam_approx_by_rules} looks for a TCAM with at most $n$ rules that induces a partition that best approximates  a desired partition $P$. We solve Problem~\ref{problem_tcam_approx_by_rules} by reducing it to the following ``dual'' problem.

\begin{definition}
\label{definition_partition_ball}
Let $P$ be a partition, $D$ a distance, and let $e > 0$. We define the open ball around $P$ with radius $e$ as the set of all partitions with distance less than $e$ from $P$. That is, $B_e(P) \equiv \{ P': D(P',P) < e\}$.
\end{definition}

\begin{problem}[Bounded-Error Approximation]
\label{problem_tcam_approx_by_error}
Let $P$ be a partition, $D$ a distance, and let $e>0$ be an error bound. Find a partition $P' \in B_e(P)$ that minimizes $\lambda(P')$.
\end{problem}

\begin{theorem}
\label{theorem_equivalent_rule_bound_error_bound1}
Let $A$ be an algorithm for solving Problem~\ref{problem_tcam_approx_by_error} for $\errLinf$, $\errLinfPos$ or $\errLinfPosRel$ that requires $T_A$ time, and let $C$ be an algorithm that checks for an integer partition $P$ and an integer $n$ whether $\lambda(P) \le n$ in $T_C$ time. Then Problem~\ref{problem_tcam_approx_by_rules} is solvable in $\runtimeReductionTaTc{}$ time. 
\end{theorem}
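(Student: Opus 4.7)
The plan is to reduce Problem~\ref{problem_tcam_approx_by_rules} to $O(W)$ invocations of $A$ followed by $C$ via a binary search on the error parameter. The key monotonicity observation is that $\phi(e) := \lambda(A(P,D,e))$ is non-increasing in $e$, since enlarging $e$ only enlarges the open ball $B_e(P)$ and the minimum complexity over a larger feasible set can only drop. Consequently there is a threshold $e^{\ast}$ such that $\phi(e) \le n$ if and only if $e > e^{\ast}$, and this $e^{\ast}$ is exactly the optimum of Problem~\ref{problem_tcam_approx_by_rules}; moreover, the optimal partition itself is the output of $A$ at any probe $e$ lying strictly between $e^{\ast}$ and the next larger achievable error value, because the returned partition $P_e$ satisfies $e^{\ast} \le D(P_e,P) < e$.

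The next step is to bound the number of binary search iterations needed to locate $e^{\ast}$ by discretizing the set of achievable error values. For $\errLinf$ and $\errLinfPos$ on integer partitions these values lie in $\{0, 1, \ldots, 2^W\}$, so a standard integer binary search pinpoints $e^{\ast}$ in $\lceil \log_2(2^W{+}1) \rceil = O(W)$ iterations. For $\errLinfPosRel$ the achievable values have the form $j/p_i$ with $p_i \in \{1, \ldots, 2^W\}$; any two distinct such values differ by at least $1/(p_{i_1} p_{i_2}) \ge 2^{-2W}$, while all of them lie in $[0, 2^W]$. A real-valued bisection on $[0, 2^W]$ therefore shrinks to an interval of length below $2^{-2W}$ after $\lceil \log_2(2^{3W}) \rceil = O(W)$ steps, and by the separation-gap bound this final interval contains at most one achievable value, which must be $e^{\ast}$.

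Each iteration performs one call to $A$ (cost $T_A$) to obtain a candidate partition $P_e \in B_e(P)$ of minimum complexity, followed by one call to $C$ (cost $T_C$) to test whether $\lambda(P_e) \le n$; the outcome dictates whether the search continues in the lower or upper half of the current interval. Summed over the $O(W)$ iterations this yields the claimed \runtimeReductionTaTc{} running time, and the algorithm returns the last $P_e$ produced with $\lambda(P_e) \le n$. The main delicate point is the non-integer case of $\errLinfPosRel$, where a bit-level integer binary search is not directly available; it is handled by the separation-gap argument above, which certifies that a real-valued bisection with $O(W)$ rounds suffices to identify $e^{\ast}$ unambiguously.
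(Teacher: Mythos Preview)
Your proof is correct and follows essentially the same approach as the paper: exploit the monotonicity of $e \mapsto \min_{P' \in B_e(P)} \lambda(P')$ to binary-search for the threshold error, with $O(W)$ probes for $\errLinf$ and $\errLinfPos$ (integer errors in $\{0,\ldots,2^W\}$) and $O(W)$ probes for $\errLinfPosRel$ via the $2^{-2W}$ separation bound between achievable rational errors. Your write-up is somewhat more explicit than the paper's in justifying why the final returned $P_e$ actually achieves error exactly $e^\ast$ (the discreteness/separation argument forcing $D(P_e,P)=e^\ast$ once the probe lands in $(e^\ast, e^\ast{+}\text{gap}]$), and in spelling out the $3W$ iteration count for the relative case; otherwise the two proofs coincide.
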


\begin{proof}
First consider $\errLinf$ or $\errLinfPos$. If $e_2 \ge e_1$ then $B_{e_1}(P) \subseteq B_{e_2}(P)$ and therefore  $\lambda(P'_{e_2})\le \lambda(P'_{e_1})$. By this monotonicity we can use binary search to find the smallest ball around $P$ that contains a partition $P'$ of complexity at most $n$. Since $\errLinf(P',P)$ and $\errLinfPos(P',P)$ are integers no larger than $2^W$, the number of iterations of this binary search is at most $W$.

In each iteration of this binary search we call $A$ once, and then post-process its output to check whether $\lambda(P'_e) \le n$. This totals to $\runtimeReductionTaTc{}$ time. Note that by \cite{BitMatcher} we know that $T_C$ is polynomial.

For $\errLinfPosRel$, since different distance values are separated by at least $\epsilon = 2^{-2W}$, we can apply the search down to a resolution of $\epsilon$ instead of $1$, which still takes $\runtimeReductionTaTc{}$ time.
\end{proof}

\begin{theorem}
\label{theorem_equivalent_rule_bound_error_bound2}
Let $A$ be an algorithm for solving Problem~\ref{problem_tcam_approx_by_rules} for $\errLinf$, $\errLinfPos$ or $\errLinfPosRel$ that requires $T_A$ time. Then Problem~\ref{problem_tcam_approx_by_error} is solvable in $O((\log W + \log k) \cdot (T_A + k))$ time.
\end{theorem}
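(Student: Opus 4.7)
The plan is to mirror the proof of Theorem~\ref{theorem_equivalent_rule_bound_error_bound1}, but to binary search on the target complexity $n$ rather than on the error radius $e$. The required monotonicity is immediate from Definition~\ref{definition_partition_balls}: $\Lambda_{n_1} \subseteq \Lambda_{n_2}$ whenever $n_1 \le n_2$, so the function $f(n) := \min_{P' \in \Lambda_n} D(P',P)$ is non-increasing in $n$. Consequently there is a well-defined threshold $n^\ast := \min\{n : f(n) < e\}$, and I will argue that if $P^\ast_n$ denotes the partition returned by $A$ on input $n$, then $P^\ast_{n^\ast}$ is an optimal solution to Problem~\ref{problem_tcam_approx_by_error}.

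Next I would run binary search over $n \in [1,N]$, where $N$ is a universal upper bound on $\lambda(P')$ over all $k$-partitions of $2^W$. The Bit Matcher analysis of \cite{BitMatcher} gives $N = O(Wk)$, so $\log N = O(\log W + \log k)$ and the search uses $O(\log W + \log k)$ iterations. Each iteration invokes $A$ once to obtain $P^\ast_n$ in $T_A$ time, then evaluates $D(P^\ast_n,P)$ coordinate-wise in $O(k)$ time (all four measures of Definition~\ref{definition_distances} allow this) and compares it to $e$ to decide which half to recurse into.

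For correctness at termination, the output $P^\ast_{n^\ast}$ lies in $B_e(P)\cap\Lambda_{n^\ast}$, so $\lambda(P^\ast_{n^\ast}) \le n^\ast$. If strict inequality held, then $P^\ast_{n^\ast} \in \Lambda_{n^\ast-1}\cap B_e(P)$, contradicting the minimality of $n^\ast$; hence $\lambda(P^\ast_{n^\ast}) = n^\ast = \min\{\lambda(P') : P' \in B_e(P)\}$, which is exactly the optimum of Problem~\ref{problem_tcam_approx_by_error}. The total running time is $O(\log(Wk)(T_A + k)) = O((\log W + \log k)(T_A + k))$, matching the claim.

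The main obstacle is securing the bound $\log N = O(\log W + \log k)$: a naive choice such as $N = 2^W$ would give a factor of $W$ instead of $\log W$ and break the statement, so I must appeal to the $O(Wk)$ TCAM-size bound from the Bit Matcher construction rather than any obvious counting argument on leaves of the trie. A minor point to check is that the search is over integer values of $n$ (which is fine since $\lambda$ is integer valued), and that for $\errLinfPosRel$ the argument only uses monotonicity of $f$ and not the granularity of its values, so no additional resolution scaling (as was needed inside the proof of Theorem~\ref{theorem_equivalent_rule_bound_error_bound1}) is necessary here.
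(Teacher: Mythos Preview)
Your proposal is correct and matches the approach the paper indicates: the paper omits the proof, stating only that it is ``similar to that of Theorem~\ref{theorem_equivalent_rule_bound_error_bound1}'' and immediately remarking that $0 \le \lambda(P) \le Wk$, which is precisely the bound $N$ you need to make the binary search over $n$ terminate in $O(\log W + \log k)$ iterations. Your monotonicity argument via $\Lambda_{n_1}\subseteq\Lambda_{n_2}$, the per-iteration cost of $T_A + O(k)$, and the observation that searching over integer $n$ avoids the granularity issue that arose for $\errLinfPosRel$ in Theorem~\ref{theorem_equivalent_rule_bound_error_bound1} are all exactly the ingredients the paper has in mind.
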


The proof of Theorem~\ref{theorem_equivalent_rule_bound_error_bound2} is similar to that of Theorem~\ref{theorem_equivalent_rule_bound_error_bound1} and hence omitted. We remark that for any integer partition $P$, $0 \le \lambda(P) \le Wk$. Problem~\ref{problem_lift_x_to_y} below generalizes Problem~\ref{problem_tcam_approx_by_error}.

\begin{definition}[Lifting Vector]
Let $W$ be an integer and let $X \equiv [x_1,\ldots,x_k]$ be a vector of integer \textbf{weights} such that $\sum_i x_i \le 2^{W}$.
Let $C \equiv [c_1,\ldots,c_k]$ be a vector of integer \textbf{capacities} such that $\sum_i (x_i+c_i) \ge 2^{W}$.
We say that a vector $Y$ is a \textbf{lifting} of $X$ with respect to $C$ and $W$ if
(1) $\forall i: x_i \le y_i \le x_i + c_i$, and
(2) $\sum_i y_i = 2^W$.
\end{definition}

\begin{problem}[Lifting Problem]
\label{problem_lift_x_to_y}
Given an input $(X,C,W)$, find a lifting vector $Y$ with minimum $\lambda(Y)$.
\end{problem}

Problem~\ref{problem_lift_x_to_y} looks for a shortest partition in a $k$-dimensional box: $X$ specifies the coordinate-wise smallest corner of the box, and $C$ specifies the width of the box in each dimension,\footnote{Of course only partitions are considered, so the points of interest in the box lie in its intersection with the hyperplane $\sum_i p_i = 2^W$.} see Fig.~\ref{figure_lifting_visualization} for a 2-dimensional illustration. This generalizes Problem~\ref{problem_tcam_approx_by_error} for each of the distances $\errLinf$, $\errLinfRel$, $\errLinfPos$ and $\errLinfPosRel$. In the case of $\errLinf$ this box is in fact a cube, defined by its center $P$ and radius (half-width) $e$.\footnote{Since we do not allow negative values in partitions, there could be  cases where the box $B_e(P)$ in Problem~\ref{problem_tcam_approx_by_error} is not a cube. This happens 
when $p_i < e$ for some $i$, so the cube is trimmed through intersecting with the positive orthant. Consider $P=[1,5]$ with $e=4$ as an example. The cube around $P$ is $[-2,4]\times [2,8]$ but its intersection with the positive orthant is $[0,4]\times [2,8]$.} In general, Problem~\ref{problem_lift_x_to_y} captures any definition of distance in which the error defines a box around $P$. This box is not required to be centered at $P$.

\begin{figure}[!t]
	\centering
	
	\includegraphics[width=1.0\linewidth]{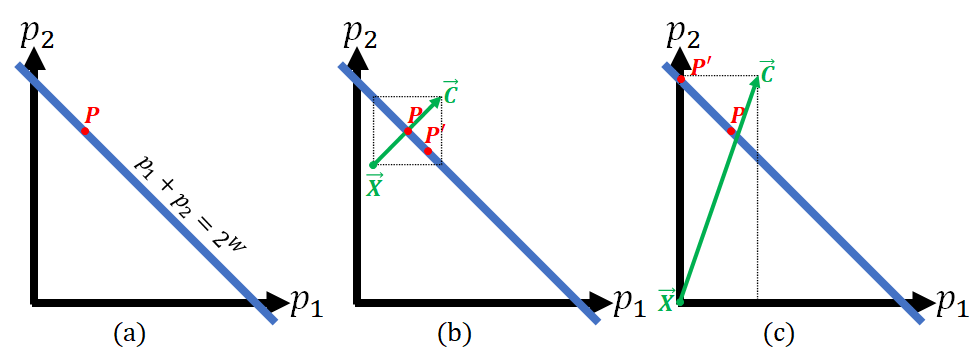}
	
	\looseness=-1
    \caption{A visualization of a lifting problem in $2$ dimensions. (a) Every partition resides on the line (hyper-plane) $p_1 + p_2 = 2^W$. (b) For $\errLinf$ the box is a square ($2$D cube) around $P$. (c) For $\errLinfPos$ and $\errLinfPosRel$ the rectangle ($2$D box) around $P$ is typically not centered at $P$ but instead anchored at the origin. We look for a lifting vector $\vec{Y}$ by starting from the corner with smallest coordinates, $\vec{X}$, and slowly increasing the coordinates inside the box without passing the corner $\vec{X}+\vec{C}$, until ``hitting'' the hyper-plane of partitions at some $P'$ that minimizes $\lambda(P')$ inside the box.}
    \label{figure_lifting_visualization}
\end{figure}

Our focus is on Problem~\ref{problem_tcam_approx_by_error}.
In Section~\ref{section_characterizing_lifting_cases} we identify  
 two special cases of Problem~\ref{problem_lift_x_to_y} that will allow us to solve Problem~\ref{problem_tcam_approx_by_error}  for $\errLinf$. We solve each of these special cases in Section~\ref{section_solving_lifting} and explain how together they suffice for solving Problem~\ref{problem_tcam_approx_by_error}. In Section~\ref{section_one_sided_metric} we identify another special case of Problem~\ref{problem_lift_x_to_y} that we use to solve Problem~\ref{problem_tcam_approx_by_error} for $\errLinfPos$ and $\errLinfPosRel$ and explain how to solve it. While it is simple to reduce Problem~\ref{problem_tcam_approx_by_error} for $\errLinfRel$ to an instance of Problem~\ref{problem_lift_x_to_y}, it does not result in a special case which we know how to solve, so the problem remains open for $\errLinfRel$.

At this point, we provide Fig.~\ref{figure_roadmap} as a summary of the reductions and special cases. The details will be clarified in the subsequent sections.

\begin{figure}[t!]
    \subfigure[Solving for closest-partition via reduction to lifting problems. Adding a scalar to a vector means adding it to each coordinate.]{
    \includegraphics[width=0.95\linewidth]{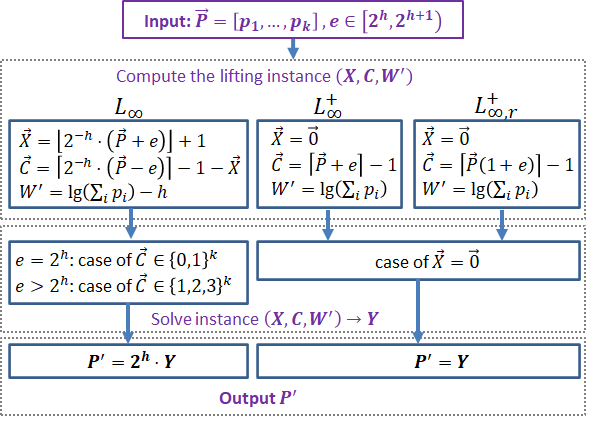}
    \label{figure_roadmap_reduction}
    }
    
    \subfigure[Solving special cases of lifting problems (Problem~\ref{problem_lift_x_to_y}).]{
    \includegraphics[width=0.95\linewidth]{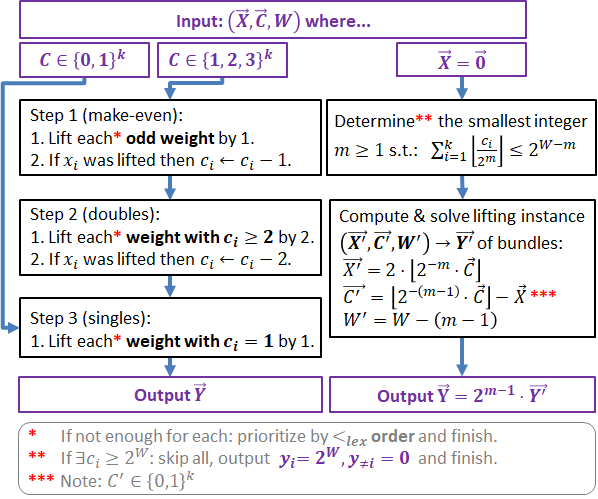}
    \label{figure_roadmap_solvers}
    }

  \caption{High-level overview of the algorithms for computing a closest partition for the different distance functions, by translating the problem to special lifting problems. \ref{figure_roadmap_reduction} is an overview of the reduction (e.g.\ Section~\ref{section_characterizing_lifting_cases} for $\errLinf$), \ref{figure_roadmap_solvers} overviews the solution steps (e.g.\ Section~\ref{section_solving_lifting} for $\errLinf$).}
  \label{figure_roadmap}
\end{figure}

\section{Characterizing Special Lifting Problem Cases}
\label{section_characterizing_lifting_cases}

In this section we characterize special cases of the lifting problem (Problem~\ref{problem_lift_x_to_y}) that are sufficient for solving Problem~\ref{problem_tcam_approx_by_error} for $\errLinf$. The first Lemma states that in order to find a partition $P' \in B_e(P)$ that minimizes $\lambda(P')$ it is sufficient to consider only the subset of $B_e(P)$ consisting of all partitions whose entries are multiples of the largest power of $2$ smaller or equal to the approximation error  $e$.

\begin{lemma}
\label{lemma_fine_rules_unnecessary}
Let $P$ be a desired partition and let $2^h \le e < 2^{h+1}$ for a non-negative integer $h$. Let some $P' \in B_e(P)$. Then there exists $P^* \in  B_e(P)$ such that $\lambda(P^*) \le \lambda(P')$ and $2^h$ divides every coordinate of $P^*$.
\end{lemma}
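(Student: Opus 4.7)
My plan is to construct $P^*$ from a shortest sequence for $P'$ by stripping out the fine transactions and, if necessary, further adjusting to stay in $B_e(P)$. Fix a shortest sequence $s'$ inducing $P'$, so $|s'|=\lambda(P')$, and split $s'$ into its coarse part $s^{c}$ (transactions of size $\ge 2^h$) and its fine part $s^{f}$ (transactions of size $<2^h$). The partition $Q$ induced by $s^{c}$ alone has every coordinate divisible by $2^h$, and $\lambda(Q)\le|s^{c}|\le|s'|$. If $Q$ already lies in $B_e(P)$ I simply take $P^*=Q$ and stop.

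In the remaining case $Q\notin B_e(P)$ (which does happen---for instance with $P=(3,3,4,6)$, $P'=(4,4,3,5)$, $e=2$, a natural shortest sequence yields $Q=(4,4,4,4)$ whose last coordinate overshoots the ball), I would repair $Q$ by iterating the following operation: pick a fine transaction still present in the current sequence and replace it with a coarse transaction of size exactly $2^h$ that moves mass from a currently-overshooting coordinate of the induced partition to a currently-undershooting one. Because $\sum_i(q_i-p_i)=0$, overshoots of the current induced partition are always balanced by undershoots, so such a pair exists; and because $e\ge 2^h$, shifting $2^h$ between them does not push either endpoint beyond its coordinate-wise ball around $p_i$. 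Each such replacement keeps the sequence length fixed, decreases the number of fine transactions by one, and strictly reduces the total out-of-ball violation. After at most $|s^{f}|$ iterations the sequence is entirely coarse and induces a partition $P^*\in B_e(P)$ with every coordinate divisible by $2^h$, yielding $\lambda(P^*)\le|s'|=\lambda(P')$.

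The main obstacle will be showing rigorously that a valid swap always exists at each step. I expect this to follow from the observation that an overshoot at coordinate $i$ in the current induced partition must be caused by the net contribution of fine transactions at $i$---the analogous coordinate of $P'$ itself is not overshooting, so some fine transaction ending at $i$ is available to retarget---and the pigeonhole principle then guarantees a matching undershooting partner. A potential function combining the fine-transaction count and the total ball violation decreases strictly with each step, giving termination without sequence growth. As a fallback I would keep in reserve a non-constructive alternative: show directly that $B_e(P)\cap 2^h\mathbb{Z}^k$ contains a partition summing to $2^W$ (feasible because each interval $(p_i-e,p_i+e)$ has length $\ge 2\cdot 2^h$), and then argue, via a reduction to a width-$(W{-}h)$ sub-problem where the induced ball becomes a nontrivial box, that the minimum-$\lambda$ such partition already satisfies $\lambda\le\lambda(P')$.
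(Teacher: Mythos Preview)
Your high-level plan---strip fine transactions and replace them with coarse ones of size $2^h$---is exactly the paper's approach. But the repair step you sketch has a genuine gap, and it is precisely where the paper spends all its effort.

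You propose to add a coarse transaction that ``moves mass from a currently-overshooting coordinate to a currently-undershooting one.'' After deleting a single fine transaction $[a\to_m b]$ from a sequence whose induced partition was in $B_e(P)$, only the two endpoints $a,b$ can leave the ball, and they need not \emph{both} leave it. If only $a$ overshoots (or only $b$ undershoots), there is no violating partner, and your rule does not apply. You cannot simply grab any coordinate with the opposite sign of deviation either: you must exhibit a third coordinate $c$ that can absorb a shift of $2^h$ without itself exiting the ball. The paper handles exactly this via a four-case analysis on which of $a,b$ violate, and in the one-sided cases uses the zero-sum identity $\sum_i p^s_i=\sum_i p_i$ to prove by contradiction that a suitable $c$ exists. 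Your ``pigeonhole'' sketch gestures at this but does not establish it.

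Relatedly, your framing of first stripping \emph{all} fine transactions to obtain $Q$, and only then repairing, creates a budget problem you never resolve. With bulk removal, the total out-of-ball violation of $Q$ can be large, and ``each step strictly reduces violation'' does not by itself bound the number of steps by $|s^f|$. The paper sidesteps this entirely by removing fine transactions \emph{one at a time} and restoring the invariant $P''\in B_e(P)$ after each removal; the budget bound $|s''|\le|s'|$ is then immediate. If you reorganize your argument to maintain that invariant step-by-step, you will be forced into the paper's case analysis---which is the missing ingredient.
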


\begin{proof}
If every coordinate of $P'$ is divisible by $2^h$ then $P^*=P'$ and we are done. Otherwise, let $s'$ be a shortest sequence that induces $P'$. Because $P'$ has elements that are not divisible by $2^h$, some of the transactions of $s'$ must be of size smaller than $2^h$. We show how to modify $s'$
to a sequence $s''$ that has one less 
 transaction of size smaller than $2^h$, $|s''|\le |s'|$,
 and $s''$ induces a partition 
 $P'' \in  B_e(P)$. The lemma then follows by applying this argument iteratively until we get a partition that is induced by a sequence without small transactions.
 
We eliminate a single small transaction as follows. Denote the first transaction in $s'$ which is of size smaller than $2^h$ by $[a \to_m b]$ (for some indices $a,b$ and $m<h$). We delete the transaction $[a \to_m b]$ from $s'$ and  denote the new temporary sequence by $s$. The sequence $s$ induces a partition $P^s$, in which $p^s_a = p'_a+2^m$ and $p^s_b = p'_b-2^m$. The rest of the proof splits into the following four possible cases.

\begin{enumerate}[leftmargin=*]
    \item Both $|p^s_a-p_a| < e$ and $|p^s_b-p_b| < e$: We set
    $s''=s$.
    \label{case_1}
    
    \item Both $|p^s_a-p_a| \ge e$ and $|p^s_b-p_b| \ge e$: because $|p'_a-p_a| < e$, and because $p'_a < p^s_a$ and $m<h$, we get that $p_a \le p'_a < p^s_a$ (if $p'_a < p_a$ then $|p^s_a-p_a| < e$). Thus $p^s_a - 2^h > p_a - 2^h \ge p_a - e$, 
    so by reducing
    $2^h$ from $p^s_a$
    we would bring it back to within distance $e$ from $p_a$.
    For similar reasons, we find that $p_b \ge p'_b > p^s_b$ and by adding $2^h$ to $p^s_b$ we would bring it back to within distance $e$ from $p_b$. We set $s'' = s\cup \{ [a \to_h b]\}$.
    \label{case_2}
    
    \item $|p^s_a-p_a| \ge e$ and $|p^s_b-p_b| < e$ (i.e.\ need to fix $a$): The same reasoning as in case~\ref{case_2} above implies that $p^s_a > p_a$, and that we can fix $p^s_a$ by adding a transaction $[a \to_h c]$  for some index $c$. It remains to show that there exists some weight $p^s_c$ that can accommodate such an increase without violating the constraint, i.e.\ that $|p^s_c + 2^h -p_c| < e$.
   
   Notice that for any $c \ne a,b$ we have $p^s_c = p'_c$, and therefore $|p^s_c - p_c| = |p'_c - p_c| < e$ because $|P'-P|_\infty < e$. Furthermore, in this case $|p^s_b - p_b| < e$ by assumption, hence for any $c \ne a$ (including $b$) we have $|p^s_c - p_c| < e$.
  
   Assume by contradiction that $\forall c \ne a$ we also have $|p^s_c + 2^h - p_c| \ge e$. Then it means that $p^s_c \ge p_c$ (if $p^s_c < p_c$ then $|p^s_c + 2^h - p_c| < e$). But then, because $p^s_a > p_a$, we get that $\sum p^s_i > \sum p_i = 2^W$, which is a contradiction because $P^s$ is also a partition and its sum is $2^W$. Therefore such index $c$ does exist, and we set $s''$ to  be $s \cup \{ [a \to_h c] \}$.
    \label{case_3}
    
    \item $|p^s_a-p_a| < e$ and $|p^s_b-p_b| \ge e$ (i.e.\ need to fix $b$): This case is symmetric to case~\ref{case_3}. The same reasoning as in case~\ref{case_2} implies that $p^s_b < p_b$. We find some index $c$ such that we can add the transaction \TRANS{c}{h}{b} while still maintaining $|p^s_c - 2^h -p_c| < e$. Notice that for any $c \ne b$ we have $|p^s_c - p_c| < e$ (for $c \ne a$ this is because $p^s_c = p'_c$ and for $c=a$ we assume that $|p^s_a - p_a| < e$ in this case). Assuming, by contradiction, that $\forall c \ne b: |p^s_c - 2^h - p_c| \ge e$, we deduce that $p^s_c \le p_c$. From this we get that $\sum p^s_i < \sum p_i = 2^W$, which is a contradiction, and we conclude that such index $c$ as required does exist. Accordingly, we set $s''$ to  be $s \cup \{ \TRANS{c}{h}{b} \}$.
    \label{case_4} \qedhere
\end{enumerate}
\end{proof}

\begin{example}  
Let $W=4$, $k=3$, $P = [11,4,1]$, $P' = [6,9,1]$, and $e=6$. Indeed, $P' \in B_e(P)$ since $|P'-P|_\infty = 5 < e$. One can verify that  $\lambda(P') = 4$ since the sequence $s' = [3 \to_0 2][2 \to_1 1][1 \to_3 2][2 \to_4 \bot]$ is a shortest sequence that induces $P'$. By Lemma~\ref{lemma_fine_rules_unnecessary} we can find 
$P^*\in B_e(P)$ such that $\lambda(P^*)\le 4$ and each coordinate of $P^*$ is divisible by $4$. We can find $P^* = [8,4,4]$, which is induced by $s^* = [3 \to_2 2][1 \to_3 2][2 \to_4 \bot]$. Each weight of $P^*$ is divisible by $4$, $\lambda(P^*) = 3 \le \lambda(P') = 4$, and $|P^*-P|_\infty = 3 < e$.
\end{example}

Based on Lemma \ref{lemma_fine_rules_unnecessary}, when we look for the shortest partition  $P' \in B_e(P)$, it is sufficient to consider only the lattice points consisting of coordinates which are integer multiples of $2^h$ (where we recall that $h$ is such that  $2^h \le e < 2^{h+1}$). This  reduces the search space substantially.

Lemma~\ref{lemma_spread_values} below characterizes more precisely the number of multiples of $2^h$ in the vicinity of any given value. Fig.~\ref{figure_number_line_examples} illustrates the number of non-negative multiples, in each of the three cases.

\begin{lemma}
\label{lemma_spread_values}
Let $x \ge 0$, $2^h\le e< 2^{h+1}$ for a non-negative integer $h$. Let $A(x) = \{ x' \mid |x' -x| < e \wedge x' \in \mathbb{Z} \wedge (2^h\ divides\ x')\}$ and define the functions $\Delta_e(x) = |A(x)|$ and $\Delta^{+}_e(x) = |A(x) \cap \{ x' \mid x'\ge 0\}|$. Then the possible values of
$\Delta_e(x)$ and $\Delta^{+}_e(x)$ depend on $e$ as follows:
\begin{enumerate}[label=(\arabic*),leftmargin=*,noitemsep]
    \item $e=2^h$: both $\Delta_e(x),\Delta^+_e(x) \in \{1,2\}$
    \label{lemma_regret_size_case1}
    
    \item $2^h < e  \le \frac{3}{2} \cdot 2^h$: both $\Delta_e(x),\Delta^+_e(x) \in \{2,3\}$
    \label{lemma_regret_size_case2}
    
    \item $\frac{3}{2} \cdot 2^h < e< 2^{h+1}$: $\Delta_e(x) \in \{3,4\}, \Delta^{+}_e(x) \in \{2,3,4\}$
    \label{lemma_regret_size_case3}
\end{enumerate}
\end{lemma}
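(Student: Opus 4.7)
The plan is to reduce the question to counting integers in a normalized open interval and then exhaust the possibilities by checking which small integers land inside it. First I would observe that $A(x)$ is exactly the set of integer multiples of $2^h$ in the open interval $(x-e,x+e)$, which has length $2e\in[2^{h+1},2^{h+2})$. Writing $x=q\cdot 2^h+r$ with $q\in\mathbb{Z}_{\ge 0}$ and $r\in[0,2^h)$, and setting $f=e/2^h\in[1,2)$ and $s=r/2^h\in[0,1)$, the bijection $n\mapsto(q+n)\cdot 2^h$ turns $A(x)$ into $J:=\{n\in\mathbb{Z}:s-f<n<s+f\}$, so that $\Delta_e(x)=|J|$ and $\Delta_e^{+}(x)=|\{n\in J:n\ge -q\}|$. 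This reduces the lemma to a statement about a $2$--parameter family of open intervals of length in $[2,4)$.

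Next I would compute $|J|$ by checking which integers in $\{-2,-1,0,1,2\}$ lie in $J$ (no others can, since $|J|\le\lceil 2f\rceil\le 4$). Direct endpoint arithmetic gives: $0\in J$ always; $1\in J$ unless $(f,s)=(1,0)$; $-1\in J$ iff $s<f-1$; $2\in J$ iff $s>2-f$; and $-2\notin J$ because $f<2$. Substituting the three ranges of $f$ from the statement yields $\Delta_e\in\{1,2\}$ in case~\ref{lemma_regret_size_case1}, $\Delta_e\in\{2,3\}$ in case~\ref{lemma_regret_size_case2}, and $\Delta_e\in\{3,4\}$ in case~\ref{lemma_regret_size_case3}. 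The value $4$ in case~\ref{lemma_regret_size_case3} is exactly the new phenomenon: when $f>3/2$ one has $2-f<f-1$, so the conditions ``$-1\in J$'' and ``$2\in J$'' can hold simultaneously on the nonempty interval $s\in(2-f,f-1)$.

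For $\Delta_e^{+}$ I would note that the only element of $J$ that can be discarded is $n=-1$, since $-2\notin J$, and it is discarded only when $q=0$. Hence $q\ge 1$ always gives $\Delta_e^{+}=\Delta_e$, preserving the ranges of cases~\ref{lemma_regret_size_case1} and~\ref{lemma_regret_size_case2}. The $q=0$ subcase needs a small check: in case~\ref{lemma_regret_size_case1} the condition $s<f-1=0$ is impossible so nothing is lost; in case~\ref{lemma_regret_size_case2} losing $-1$ converts $\{-1,0,1\}$ into $\{0,1\}$, which turns $\Delta_e=3$ into $\Delta_e^{+}=2$ and keeps the range $\{2,3\}$; in case~\ref{lemma_regret_size_case3} losing $-1$ converts $\{-1,0,1\}$ into $\{0,1\}$ (giving $\Delta_e^{+}=2$) or $\{-1,0,1,2\}$ into $\{0,1,2\}$ (giving $\Delta_e^{+}=3$), while choices with $q\ge 1$ still realize $\Delta_e^{+}=4$. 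Combining these gives $\Delta_e^{+}\in\{2,3,4\}$, exactly as claimed.

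The step I expect to require the most care is the bookkeeping in case~\ref{lemma_regret_size_case3}: one has to notice the overlap $2-f<f-1$ in order to justify that $\Delta_e=4$ is attainable, and then verify by choosing $(q,s)$ appropriately that all three values $\{2,3,4\}$ of $\Delta_e^{+}$ are actually attained (the value $2$ requiring $q=0$). Everything else reduces to routine endpoint checking for open intervals.
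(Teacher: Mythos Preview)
Your argument is correct and reaches the same conclusions as the paper, but the route is somewhat different. The paper's proof is a length argument: it sets $r=2e/2^h$, observes that an open interval of length $r$ (in units of $2^h$) contains either $\lceil r\rceil-1$ or $\lceil r\rceil$ multiples of $2^h$, and then just reads off $\lceil r\rceil\in\{2,3,4\}$ in the three cases. For the non-negativity part it argues abstractly that at most the single multiple $-2^h$ can be lost and that whenever it is present so are $0$ and $2^h$. Your proof instead normalizes via $x=q\cdot 2^h+r$, $f=e/2^h$, $s=r/2^h$, and explicitly tests membership of each candidate integer $n\in\{-2,-1,0,1,2\}$ in the shifted interval $(s-f,s+f)$.

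What each approach buys: the paper's counting-by-length argument is shorter and uniform across cases, but it only shows $\Delta_e\in\{\lceil r\rceil-1,\lceil r\rceil\}$ without explicitly exhibiting that both endpoints are attained; similarly its $\Delta_e^+$ analysis is terse. Your explicit enumeration is longer but more self-contained: it verifies attainability of every listed value (including the delicate point in case~\ref{lemma_regret_size_case3} that $\Delta_e=4$ requires the overlap $2-f<f-1$, and that $\Delta_e^+=2$ requires $q=0$), which the paper leaves implicit. Either proof is fine for the lemma; yours is arguably more complete on attainability, the paper's is more economical.
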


\begin{figure}[ht!]
    \centering
    \subfigure[The three cases of Lemma~\ref{lemma_spread_values}.]{
    \includegraphics[width=0.99\linewidth]{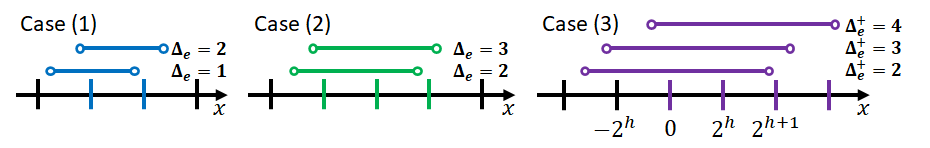}
    \label{figure_number_line_cases}
    }
    
    \subfigure[Breaking a segment into units of $2^h$.]{
    \includegraphics[width=0.75\linewidth]{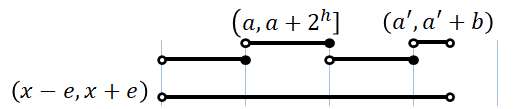}
    \label{figure_number_line_breakdown}
    }

  \caption{Lemma~\ref{lemma_spread_values}: Counting non-negative multiples of $2^h$ covered by a range. \ref{figure_number_line_cases} shows the three possible cases of Lemma~\ref{lemma_spread_values}, \ref{figure_number_line_breakdown} shows a break-down of a segment, for counting the multiples it covers.}
  \label{figure_number_line_examples}
\end{figure}

\begin{proof}
\looseness=-1
The length of the segment $|x-x'| < e$ is exactly $2e$, therefore its length in units of $2^h$ is $r = \frac{2e}{2^h}$. Each full-unit $(a,a+2^h]$ contains exactly one integer multiple of $2^h$, and a partial-unit $(a,a+b)$ where $0 < b \le 2^h$ may or may not contain a multiple (see Fig.~\ref{figure_number_line_breakdown}). Since we can break the segment $(x-e,x+e)$ to $\ceil{r} - 1$ consecutive full-units and another partial-unit, we get that the segment covers either $\ceil{r} - 1$ or $\ceil{r}$ multiples of $2^h$.
We divide to cases depending on $e$:
\begin{enumerate}[label=(\arabic*),leftmargin=*,noitemsep]
    \item $e=2^h$: Then $r = \frac{2e}{2^h} = 2$, and therefore $\ceil{r} = 2$.
    
    \item $2^h < e \le \frac{3}{2} \cdot 2^h$: Then $2 < r = \frac{2e}{2^h} \le 3$, and $\ceil{r} = 3$.
    
    \item $\frac{3}{2} \cdot 2^h < e < 2^{h+1}$: Then $3 < r = \frac{2e}{2^h} < 4$, and $\ceil{r} = 4$.
\end{enumerate}

When adding the requirement that $x' \ge 0$,
negative multiples of $2^h$ in the corresponding interval  become illegal. Since $x \ge 0$ and $e < 2^{h+1}$, we have that $-2^{h+1} \notin A(x)$ and if $-2^h \in A(x)$ then $0,2^h \in A(x)$.
This means that even if we lose a multiple of $2^h$ due to non-negativity constraint, then there are at least two non-negative multiples of $2^h$ within the range. In Case~\ref{lemma_regret_size_case1}
the range does not contain $-2^h$ so we do not lose any multiple. In Case~\ref{lemma_regret_size_case2} we may lose one multiple if we originally had $3$. In Case~\ref{lemma_regret_size_case3} we may lose one if we originally had $3$ or $4$. In the former case, $\Delta_e(x)$ goes down from $3$ to $2$ as stated (see Fig.~\ref{figure_number_line_cases}).
\end{proof}

Lemma~\ref{lemma_fine_rules_unnecessary} shows that
it suffices to work with partitions consisting of elements of sizes which are integer
multiples of $2^h$, and Lemma~\ref{lemma_spread_values} characterizes how many such multiples we have to consider per coordinate.
Therefore, we can now simplify the presentation by 
dividing all values by $2^h$.
 
The benefit of this simplification is that it allows us to reduce our problem to
a simple lifting problem with
small capacities. Notice that although the capacities are small (at most $3$), we still cannot 
solve the resulting lifting problem efficiently by traversing all possible partitions in $B_e(P)$ as their number is still exponential in $k$.
The following theorem gives the details of this reduction. In the next section we  show how to efficiently solve the lifting problem.

\begin{theorem}[Reduction from Bounded-Error to Lifting]
\label{theorem_special_case_lifting_reduction}
Let $(P,e)$ be the input to Problem~\ref{problem_tcam_approx_by_error} for $\errLinf$ where $P$ is a partition and $2^h \le e < 2^{h+1}$ is the error bound.
Then the problem reduces to one of the following two types of lifting problems, depending on the value of $e$:
\begin{enumerate}
    \item $e = 2^h \Rightarrow$ Capacities $C$ such that  $\forall i: c_i \in \{0,1\}$.
    \item $e > 2^h \Rightarrow$ Capacities $C$ such that  $\forall i: c_i \in \{1,2,3\}$.
\end{enumerate}
\end{theorem}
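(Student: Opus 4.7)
The plan is to combine Lemmas~\ref{lemma_fine_rules_unnecessary} and \ref{lemma_spread_values} to encode Problem~\ref{problem_tcam_approx_by_error} as a lifting instance of the advertised form. I would proceed in three steps: a coarsening that rescales by $2^h$, the setup of the associated lifting instance with the rescaled constraints, and a direct substitution into Lemma~\ref{lemma_spread_values} to read off the capacity bounds.

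First (coarsening), by Lemma~\ref{lemma_fine_rules_unnecessary} it suffices to search for a shortest $P' \in B_e(P)$ among partitions whose coordinates are all integer multiples of $2^h$. Writing $P' = 2^h \cdot Y$, the partition constraint becomes $\sum_i y_i = 2^{W-h}$ with $Y$ a non-negative integer vector. I would then argue that the map $[a \to_m b] \mapsto [a \to_{m+h} b]$ (including the $\bot$-transaction, whose size scales from $2^{W-h}$ to $2^W$) is a length-preserving bijection between transaction sequences inducing $Y$ and those inducing $P'$, so $\lambda(P') = \lambda(Y)$. Thus the original problem reduces to finding a shortest $Y$, viewed as a partition of $2^{W-h}$, whose coordinates satisfy the rescaled box membership.

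Second (lifting instance), for each coordinate $i$ the admissible set for $y_i$ is exactly the rescaling by $1/2^h$ of the set $A(p_i)\cap\mathbb{Z}_{\ge 0}$ from Lemma~\ref{lemma_spread_values}. Since any two consecutive multiples of $2^h$ differ by $2^h$, the admissible values for $y_i$ form a block of consecutive integers, which I write as $\{x_i, x_i+1, \ldots, x_i+c_i\}$ with $c_i = \Delta_e^+(p_i) - 1$. This yields precisely a lifting instance $(X, C, W-h)$ in the sense of Problem~\ref{problem_lift_x_to_y}. Its feasibility conditions $\sum_i x_i \le 2^{W-h} \le \sum_i (x_i+c_i)$ follow from non-emptiness of the rescaled feasible set, which is guaranteed by applying Lemma~\ref{lemma_fine_rules_unnecessary} to the trivially admissible $P' = P$.

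Third (capacity bounds), I would simply read off the values of $c_i = \Delta_e^+(p_i) - 1$ from Lemma~\ref{lemma_spread_values}. When $e = 2^h$ (case~\ref{lemma_regret_size_case1}), $\Delta_e^+(p_i) \in \{1,2\}$ gives $c_i \in \{0,1\}$. When $e > 2^h$, pooling cases~\ref{lemma_regret_size_case2} and \ref{lemma_regret_size_case3} gives $\Delta_e^+(p_i) \in \{2,3,4\}$ and hence $c_i \in \{1,2,3\}$, as claimed. The delicate part of the argument is really the first step, i.e.\ checking that $\lambda$ is invariant under the uniform rescaling that multiplies every coordinate (and correspondingly every transaction size) by $2^h$; once that is in hand, the remaining work is the straightforward dictionary between interval-admissibility of the $y_i$ and the lifting formulation, plus a substitution into Lemma~\ref{lemma_spread_values}.
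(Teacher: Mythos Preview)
Your proposal is correct and follows essentially the same approach as the paper: invoke Lemma~\ref{lemma_fine_rules_unnecessary} to restrict to multiples of $2^h$, rescale by $2^h$ to obtain a lifting instance with width $W-h$, and read off the capacity values from Lemma~\ref{lemma_spread_values}. Your treatment is slightly more explicit than the paper's in two places---the bijection argument for $\lambda$-invariance under rescaling (which the paper states as a one-line remark) and the feasibility check for the lifting instance---but these are elaborations rather than departures.
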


\begin{proof}
By Lemma~\ref{lemma_fine_rules_unnecessary} it suffices to look for a partition $P'$ such that $\forall i \in [k]$, $p'_i$ is a multiple of $2^h$ and $|p_i - p'_i| < e$. Let $x_i$ be the lowest non-negative possible multiple, and let $z_i$ be the largest possible multiple. Any multiple in between is also possible, so we define the lifting problem with the base vector $X = [x_1/2^h, \ldots, x_k/2^h]$ and capacities $C = [c_1, \ldots, c_k]$ where $c_i = (z_i - x_i)/2^h$. The third parameter to the lifting problem is $W = W_P - h$ where $W_P$ is the original ``width'' of $P$, that is $\sum_{i\in[k]} p_i = 2^{W_P}$.

Lemma~\ref{lemma_spread_values} characterizes $\Delta_e^+(p_i)$, the number of multiples of $2^h$ between $x_i$ and $z_i$ (including both), and it follows  that $c_i + 1 = \Delta_e^+(p_i)$.
We get  two types of lifting problems. If  $e=2^h$ then by Case~\ref{lemma_regret_size_case1} of Lemma~\ref{lemma_spread_values} we get $c_i \in \{0,1\}$.
If  $e>2^h$ then by Cases \ref{lemma_regret_size_case2} and  \ref{lemma_regret_size_case3}
we get $c_i \in \{1,2,3\}$.

As a final remark, note that in the lifting problem we get a width of $W_P-h$ rather than $W_P$ (the width of $P$). In other words, since we divided all values  by $2^h$ the sum of the parts  of the lifting should be $2^{W_P-h}$. The solution of the lifting problem is scaled-up by multiplying each coordinate by $2^h$ to get a solution to the original problem. The scaling doesn't change the complexity of the partition.
\end{proof}

\section{Solving the  Lifting Problems}
\label{section_solving_lifting}

In this section we consider the special cases of Problem \ref{problem_lift_x_to_y} derived in Theorem \ref{theorem_special_case_lifting_reduction}. The algorithm for the second case uses the algorithm for the first case as a sub-routine.

\subsection{Lifting Problems with \texorpdfstring{$C \in \{0,1\}^k$}{Ci in (0,1)}}
\label{subsection_lifting_01}
In this section we give an algorithm for lifting instances $(X,C,W)$ such that $\forall i\in [k]: c_i \le 1$. We do so by ranking the weights with capacity $1$ in bit-lexicographic order (defined below), and lift the first $2^W-\sum_i x_i$ largest weights in this order. See Algorithm~\ref{alg_lift_max_capacity_1}.

\begin{definition}[Bit-Lexicographic Order]
\label{definition_lexicorder}
Let $x,y$ be non-negative integers. We say that $\boldsymbol{x <_{lex} y}$ or that $x$ is \textbf{bit-lexicographic smaller than} $y$ if at the lowest bit $\ell$ such that $x[\ell]\ne y[\ell]$ we have $x[\ell]=0<y[\ell]=1$. 
Note that $x <_{lex} y$ if and only if
$x^r < y^r$ where $x^r$ and $y^r$ are obtained from $x$ and $y$ by reversing their binary representations (with respect to a fixed word size), respectively.
\end{definition}
\begin{example}
\label{example_lexicorder}
$5 <_{lex} 3$ ($5=101b$, $3 = 011b$). Also, any even number is lexicographically smaller than any odd number.
\end{example}

\begin{algorithm}[tb!]
    \DontPrintSemicolon
    \KwIn{
         Non-negative integer weights $X = [x_1,\ldots,x_k]$ and capacities $C \in \{0,1\}^k$ such that $\sum_i x_i \le 2^W \le \sum_i (x_i+c_i)$.
    }
    \KwOut{
        An optimal lifting $Y = [y_1,\ldots,y_k]$.
    }
    
    \begin{enumerate}
        \item Let $A = \{x_i \mid c_i = 1 \} \subseteq X$. Let $n = 2^W - \sum_i x_i$.
        
        \item Set $y_i = x_i + 1$ for the $n$ highest weights of $A$ by $<_{lex}$ order (Definition~\ref{definition_lexicorder}), otherwise $y_i=x_i$. Return $Y$.
    \end{enumerate}
    \caption{Lifting with max capacity of 1}
    \label{alg_lift_max_capacity_1}
\end{algorithm}

In order to prove the correctness of Algorithm~\ref{alg_lift_max_capacity_1}, we introduce additional notation.

\begin{definition}[Excess]
\label{definition_excess}
Let $(X,C,W)$ be the input to a lifting problem. We define the \textbf{excess} of $X$ as $e(X) = 2^W - \sum_{i}{x_i}$. Note that $e(X)$ depends only on $X$ and $W$, and is exactly the total amount required to be added to get a lifting of $X$.
\end{definition}

\begin{definition}[Lifting Notation]
\label{definition_lifting}
Let $Y$ be some valid lifting of a vector $X$. If $y_i > x_i$ we say that $x_i$, or the index $i$, has been lifted $y_i-x_i$ times, and if $y_i = x_i$ we say that it was not lifted. We denote a single lift of $x_i$ by  $[i+]$. 
\end{definition}

\begin{definition}[Lifting Sequence]
\label{definition_sequence_of_lifting}
Let $Y$ be a lifting of $X$. Then $\sum_i y_i = 2^W$, and therefore $\lambda(Y)$ is well defined. We define a \textbf{lifting sequence} of $Y$ to be a sequence 
consisting of the lifting from $X$ to $Y$ concatenated to a shortest sequence of transactions that induces $Y$. We denote by $s(Y)$ a lifting sequence of $Y$ in which the subsequence that induces $Y$ is generated by the Bit Matcher algorithm.
\end{definition}

\begin{example}
\label{example_lift_sequence}
Let $X=[1,2,3]$, $W=3$, $C=[1,2,3]$. The lifting $Y=[2,2,4]$ has $s(Y) = [1+][3+][1 \to_1 2][2 \to_2 3][3 \to_3 \bot]$. Its lifting part is $[1+][3+]$, and the transactions part is $[1 \to_1 2][2 \to_2 3][3 \to_3 \bot]$, which is  a Bit Matcher sequence that induces $Y$.
\end{example}

\begin{lemma}
\label{lemma_prioritize_when_capacity_1_E_2h}
Let $(X,C,W)$ be an input to a lifting problem such that $\sum_i x_i = 2^W - 1$, and all capacities are $0$ except for $c_a$ and $c_b$ that equal $1$. 
Denote by $Y^a$ the lifting in which $x_a$ is lifted, and similarly denote by $Y^b$ the lifting in which $x_b$ is lifted. If $x_a <_{lex} x_b$ then $\lambda(Y^b) \le \lambda(Y^a)$.
\end{lemma}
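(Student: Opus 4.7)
Let $\ell$ denote the lowest bit at which $x_a$ and $x_b$ disagree. By the bit-lexicographic hypothesis, $x_a[\ell] = 0$, $x_b[\ell] = 1$, and $x_a, x_b$ agree on bits $0, 1, \ldots, \ell - 1$. Writing $t_a, t_b$ for the numbers of trailing $1$s in $x_a, x_b$, a short case analysis on the relationship between $t_a$ and $\ell$ yields $t_a \le \ell \le t_b$, so either $t_a = t_b$ (when $t_a < \ell$) or $t_a = \ell < t_b$. Geometrically, the two partitions $Y^a$ and $Y^b$ differ only at coordinates $a$ and $b$, swapping which of $x_a+1$ or $x_b+1$ appears in each slot. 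A trivial argument gives $\lambda(Y^b) \le \lambda(Y^a) + 1$ by appending one level-$0$ transaction $[b \to_0 a]$ to any sequence for $Y^a$; the point of the lemma is to save this extra transaction.

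The key quantitative consequence of $x_a <_{lex} x_b$ that I would exploit is that the entry $y^b_b = x_b + 1$ has at least as long a run of trailing zeros---equivalently, it was produced by a carry cascade that cleared at least as many consecutive low-order ones---as the entry $y^a_a = x_a + 1$. To turn this into a sequence-length inequality I would run Bit Matcher in parallel on $Y^a$ and $Y^b$, choosing pairings that agree on all indices outside $\{a, b\}$ at level~$0$, and choosing the direction of the level-$0$ transaction involving $a$ (respectively $b$) so that the carry from this transaction is absorbed by the long trailing run available at $y^b_b$ but not at $y^a_a$. After this single level of processing, both partial partitions are divisible by $2$, and they differ only at positions $a, b$ in a controlled way.

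The argument then proceeds by strong induction on $W$: dividing the two partial partitions by $2$ yields an instance of width $W-1$ in which the pair of new values at coordinates $a, b$ continues to satisfy an analogous bit-lexicographic relationship, so the induction hypothesis applies. The base case $W \le 1$ is verified by direct inspection of the very few partitions that arise. The main obstacle I foresee is arranging the reduction so that the induction hypothesis can be invoked cleanly; this likely requires proving a slightly stronger statement than the lemma as stated---for instance, an analogous claim for two arbitrary partitions that agree outside two coordinates $a, b$ whose values stand in a fixed lex relationship, rather than exclusively for unit-excess liftings. Once the strengthened statement is in place, the level-$0$ pairings outside $\{a, b\}$ can be chosen identically on both sides, and the exchange at the pair $\{a, b\}$ yields exactly the savings needed.
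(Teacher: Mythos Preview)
Your inductive approach is genuinely different from the paper's, but as stated it has a concrete gap. The assertion that after level-$0$ processing the two partitions ``differ only at positions $a,b$ in a controlled way'' is generally false: the sets of odd-valued indices in $Y^a$ and $Y^b$ are $\{i: x_i \text{ odd}\}\,\triangle\,\{a\}$ and $\{i: x_i \text{ odd}\}\,\triangle\,\{b\}$ respectively, so the level-$0$ pairing that involves $a$ or $b$ typically touches a third index $c$ whose post-transaction value need not agree across the two runs. Even exploiting Bit Matcher's freedom in matching the lex-upper and lex-lower halves, you would have to argue---separately for $\ell=0$ versus $\ell>0$, for each parity pattern of $x_a,x_b$, and for each possible position of $x_a,x_b$ among the other odd weights in lex order---that you can pick the same partner $c$ and the same transaction direction in both runs while remaining optimal, and that the resulting width-$(W{-}1)$ instance again satisfies a unit-excess lex hypothesis (possibly at indices other than $a,b$). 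Your own remark that a strengthening is ``likely required'' is on point, but the strengthening and its verification are exactly the nontrivial content you have not supplied.

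The paper sidesteps all of this with a short pivot argument. It forms the auxiliary partition $Z$ of $2^{W+1}$ with $z_i = 2x_i$ for $i\ne a,b$ and $z_a=2x_a+1$, $z_b=2x_b+1$, and shows $\lambda(Y^b)\le\lambda(Z)-1\le\lambda(Y^a)$. The right inequality follows by prepending $[b\to_0 a]$ to a doubled optimal sequence for $Y^a$; the left uses that in $Z$ the \emph{only} odd entries are $z_a$ and $z_b$ with $z_a<_{lex}z_b$, so any Bit Matcher sequence for $Z$ begins with $[a\to_0 b]$, after which what remains is exactly $2Y^b$. Doubling manufactures a fresh bit level at which $a$ and $b$ are the sole participants, eliminating precisely the interference from other odd indices that derails the straightforward induction.
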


\begin{proof}
Consider the  partition $Z$ of $2^{W+1}$ defined as follows: $\forall i \ne a,b: z_i = 2x_i$, $z_a = 2x_a + 1$, and $z_b = 2x_b + 1$. This is indeed a partition of $2^{W+1}$  since $\sum_i z_i = 2\sum_i x_i + 2 = 2^{W+1}$.
We establish the lemma by proving that $\lambda (Y^b) = \lambda (Z) - 1 \le \lambda (Y^a)$. 

First we show that $ \lambda (Z)-1 \le \lambda (Y^a)$. 
Consider a sequence $s^z$ that induces $Z$, which we define as follows.
It starts with the transaction
$[b \to_0 a]$. After applying this transaction, we have $Z=2Y^a$. 
So we complete $s^z$ by taking a shortest sequence $s^a$ that induces $Y^a$ and adding a transaction $[i \to_{\ell+1} j]$ for every transaction 
$[i \to_\ell j]$ in $s^a$ (double size).
This gives that $\lambda (Z) \le \lambda (Y^a) + 1$ as required.

 Note that a similar argument shows that  $\lambda (Z)\le \lambda (Y^b) + 1 $.

To conclude the proof, we show that $\lambda (Y^b) + 1 \le \lambda (Z)$. Among all shortest sequences of $Z$, consider a Bit Matcher sequence $s$ for $Z$. This sequence begins by matching odd weights according to bit-lexicographic order. Since only $z_a$ and $z_b$ are odd and $z_a <_{lex} z_b$ (because $x_a <_{lex} x_b$), the first transaction in $s$ is $[a \to_0 b]$. After applying this transaction we get that $\forall i \ne b: z_i = 2x_i$ and $z_b = 2x_b + 2$ so at this point $Z = 2 Y^b$, and all the remaining transactions are of size at least $2$.
It follows that we can convert the suffix of $s$ following its first transaction into a sequence of the same length that induces $Y^b$, by replacing every transaction 
$[i \to_{\ell} j]$ by
$[i \to_{\ell-1} j]$.
 Hence we conclude that $\lambda (Y^b) \le \lambda (Z) - 1$.
\end{proof}

Note the importance of the fact that $x_a$ and $x_b$ in Lemma~\ref{lemma_prioritize_when_capacity_1_E_2h} have capacity of $1$.
Consider for example the lifting problem $(X,C,W)$, where
$X=[2,4,6,1]$ with capacities $C=[1,3,0,0]$ and $W=4$. Although $x_2 <_{lex} x_1$, the optimal lifting is $Y=[2,7,6,1]$
with $\lambda(Y)=4$ and  $s(Y)=[2+][2+][2+][4 \to_0 2][1 \to_1 3][3 \to_3 2][2 \to_4 \bot]$.
If we shift one lifting from
$x_2$ to $x_1$ we get the lifting
 $Y = [3,6,6,1]$ with $\lambda(Y)=5$. Note also that if we lift $x_2$ twice so its remaining capacity is $1$, then $x_1=2,x_2=6$ and the lemma holds for the remaining lift because $2 <_{lex} 6$. However, the lemma does not help us foresee this.

\begin{theorem}
\label{theorem_opt_lifting_max_capacity_1}
Algorithm~\ref{alg_lift_max_capacity_1} generates an optimal lifting for Problem~\ref{problem_lift_x_to_y} for any input $(X,C,W)$ where  $C \in \{0,1\}^k$.
\end{theorem}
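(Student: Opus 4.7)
The plan is to establish optimality by an exchange argument that locally invokes Lemma~\ref{lemma_prioritize_when_capacity_1_E_2h}. Let $Y$ be the lifting returned by Algorithm~\ref{alg_lift_max_capacity_1} on input $(X,C,W)$, and let $Y^*$ be any optimal lifting. Since every $c_i \in \{0,1\}$, both $Y$ and $Y^*$ are completely determined by their \emph{lift-sets} $L, L^* \subseteq A$, and both sets have the same cardinality $n = 2^W - \sum_i x_i$ (which is well-defined and at most $|A|$ by the hypothesis $\sum_i(x_i+c_i)\ge 2^W$). If $L = L^*$ then $Y = Y^*$ and the theorem is trivial. Otherwise I would pick any $a \in L^* \setminus L$ and $b \in L \setminus L^*$; such a pair exists because the two sets are distinct but equinumerous, and by the algorithm's selection rule (pick the $n$ lex-largest elements of $A$) we must have $x_a \le_{lex} x_b$.

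Assume first the strict case $x_a <_{lex} x_b$. I would define the ``swapped'' vector $Y'$ that agrees with $Y^*$ on every coordinate except $y'_a = x_a$ and $y'_b = x_b+1$, and show $\lambda(Y') \le \lambda(Y^*)$. To reduce this single swap to Lemma~\ref{lemma_prioritize_when_capacity_1_E_2h}, I would construct a sub-instance: set $x'_i = y^*_i$ for every $i \ne a$ and $x'_a = x_a$, giving $\sum_i x'_i = 2^W - 1$; assign auxiliary capacities $c'_a = c'_b = 1$ and $c'_i = 0$ otherwise. This sub-instance has exactly two legal liftings, the one incrementing $x'_a$ reproduces $Y^*$ and the one incrementing $x'_b$ reproduces $Y'$. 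Crucially, because $a,b \notin \{\text{indices lifted inside }X'\}$ by construction, the coordinates $x'_a$ and $x'_b$ equal $x_a$ and $x_b$, so the lex-comparison hypothesis is inherited and Lemma~\ref{lemma_prioritize_when_capacity_1_E_2h} yields $\lambda(Y') \le \lambda(Y^*)$. In the tie case $x_a = x_b$, the multisets of weights of $Y'$ and $Y^*$ coincide, so $\lambda(Y') = \lambda(Y^*)$ by symmetry of $\lambda$ under index permutations.

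Then I would iterate: each swap replaces one mismatched index of $L^*$ by one closer to $L$ without increasing $\lambda$, so $|L \cap L^*|$ strictly increases. After at most $n$ iterations the lift-set coincides with $L$, giving a lifting equal to $Y$ whose complexity is at most $\lambda(Y^*)$, hence $\lambda(Y) \le \lambda(Y^*)$ as required. The main obstacle in the argument is the reduction step: one must verify both that the sub-instance's lex comparison is indeed inherited from the original $X$ (which holds because indices $a$ and $b$ themselves are untouched when forming $X'$) and that the sub-instance's capacity vector $C'$ rules out any lift other than at $a$ or $b$, so Lemma~\ref{lemma_prioritize_when_capacity_1_E_2h} applies verbatim. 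Once these details are nailed down, the exchange argument runs smoothly.
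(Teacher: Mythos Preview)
Your proposal is correct and follows essentially the same exchange argument as the paper: both pick a mismatched pair $a\in L^*\setminus L$, $b\in L\setminus L^*$, build the auxiliary instance with $x'_i=y^*_i$ for $i\ne a$, $x'_a=x_a$, capacities $1$ at $a,b$ and $0$ elsewhere, and invoke Lemma~\ref{lemma_prioritize_when_capacity_1_E_2h}. The only cosmetic differences are that the paper phrases the argument as a contradiction (choose $Y^*$ maximizing $|L\cap L^*|$) rather than an explicit iteration, and it silently assumes the strict case $x_a <_{lex} x_b$, whereas you explicitly dispatch the tie $x_a=x_b$ via the permutation-invariance of $\lambda$.
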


\begin{proof}
For a lifting $Y$ of $(X,C,W)$ we denote by
$A(Y)$ the set of lifted indices. That is, the indices $i$ such that $y_i=x_i+1$.

Let $Y^1$ denote the lifting produced by Algorithm~\ref{alg_lift_max_capacity_1}. Assume by contradiction that $Y^1$ is not optimal and let $Y^*$ be an optimal lifting of $(X,C,W)$ that maximizes $|A(Y^*)\cap A(Y^1)|$.

Since $Y^* \ne Y^1$, $A(Y^*) \ne A(Y^1)$.
Furthermore, since all liftings of
$(X,C,W)$ contain the same number of lifts, then $|A(Y^*)| = |A(Y^1)|$.
This means that 
$ A(Y^*) \setminus A(Y^1) \ne \emptyset$ and
$ A(Y^1) \setminus A(Y^*) \ne \emptyset$.
We pick $a \in A(Y^*) \setminus A(Y^1)$ and $ b \in A(Y^1) \setminus A(Y^*)$. By the definition of Algorithm~\ref{alg_lift_max_capacity_1}, we have that $x_a <_{lex} x_b$.

We now consider the lifting problem $(X',C',W)$ derived from $Y^*$ as follows. We set $x'_a = y^*_a - 1$ and $x'_i = y^*_i$ for every $i\not= a$. We also set $c'_a=1$, $c'_b =1$ and $c'_i=0$ for $i\not=a,b$.

By Lemma~\ref{lemma_prioritize_when_capacity_1_E_2h}, there is an optimal lifting $Y'$ of $(X',C',W)$ that lifts $x'_b$. It follows that $Y'$ is at least as good as $Y^*$ and hence also optimal for $(X,C,W)$. But $|A(Y')\cap A(Y^1)| > |A(Y^*)\cap A(Y^1)|$ which contradicts the choice of 
$Y^*$.
\end{proof}

\subsection{Lifting Problems with \texorpdfstring{$C \in \{1,2,3\}^k$}{Ci in (1,2,3)}}
\label{subsection_lifting_123}

In this section we present Algorithm~\ref{alg_lift_capacity_1_2_3} that computes
optimal liftings for instances $(X,C,W)$ such that $C \in \{1,2,3\}^k$. We break the excess assignment into phases, and maintain the guarantee that at least one optimal solution survives following each phase. More formally, a partial assignment of the excess induces a new problem $(X',C',W)$ where $x_i \le x'_i \le x_i + c_i$, $c'_i+x'_i=c_i+x_i$, and the guarantee is that an optimal solution $Y'$ for $(X',C',W)$ is an optimal solution for $(X,C,W)$. The algorithm consists of three phases as follows:
\begin{enumerate}
    \item The first phase makes all weights even by lifting by one each odd weight.
    \item The second phase makes pairs of lifts of weights with capacity at least two (keeping the weights even).
    \item The last phase allocates the remaining excess by single lifts (the remaining capacities are at most $1$).
\end{enumerate}

\noindent
When the remaining excess at a phase is too small we use Algorithm~\ref{alg_lift_max_capacity_1} on an appropriately defined lifting problem to prioritize the allocation of the remaining excess and finish. We establish the correctness of each phase of the algorithm separately (Lemma~\ref{lemma_excess_too_little}, Lemma~\ref{lemma_odd_or_all_even}, Theorem~\ref{theorem_opt_lifting_capacity_1_2_3}). 

\begin{algorithm}[tb!]
    \DontPrintSemicolon
    \KwIn{
         Non-negative integer weights $X = [x_1,\ldots,x_k]$ and capacities $C \in \{1,2,3\}^k$ such that $\sum_i x_i \le 2^W \le \sum_i (x_i+c_i)$
    }
    \KwOut{
        An optimal lifting  $Y = [y_1,\ldots,y_k]$.
    }
    
    \If {$e(X) \le |\{i \mid  x_i\ is\ odd\}|$ (see Definition~\ref{definition_excess})}{
    Return Algorithm\ref{alg_lift_max_capacity_1}$(X,[1,\ldots,1],W)$}
    \Else{
        Define $X',C'$ s.t.:
        \begin{itemize}
            \item if $x_i$ is even then $x'_i = x_i,c'_i=c_i$
            \item if $x_i$ is odd then $x'_i = x_i+1,c'_i=c_i-1$
        \end{itemize}
        
        \If{$e(X') < 2 \cdot |\{i \mid  c'_i \ge 2\}|$}{
            Return $2 \cdot$Algorithm\ref{alg_lift_max_capacity_1}$(X'/2,C'/2,W-1)$
        
            \looseness=-1 \emph{$/{*}$ Division and multiplication of vectors are coordinate-wise, over integers (e.g.\ $3/2 = 1$)\ ${*}/$ }
        }
        \Else{
        Define $X'',C''$ s.t.:
        \begin{itemize}
            \item if $c'_i \le 1$ then $x''_i = x'_i,c''_i=c'_i$
            \item if $c'_i \ge 2$ then $x''_i = x'_i+2,c''_i=c'_i-2$
        \end{itemize}
        Return
        Algorithm\ref{alg_lift_max_capacity_1}$(X'',C'',W)$
        }
    }
    \caption{Lifting with capacities in range \{1,2,3\}}
    \label{alg_lift_capacity_1_2_3}
\end{algorithm}

\begin{lemma}
\label{lemma_make_more_even_weights}
Let $Y$ be an optimal lifting of $X$, for capacities such that  $c_i > 0$ for all $i$.
If $\exists i\ne j$ such that $y_i$ and $y_j$ are odd, 
$y_i < x_i + c_i$, and $y_j > x_j$ then there exists another optimal lifting $Y'$ in which both $y'_i,y'_j$ are even.
\end{lemma}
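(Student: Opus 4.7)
My plan is to take the natural candidate $Y'$ obtained from $Y$ by moving a single unit from coordinate $j$ to coordinate $i$: set $y'_i = y_i + 1$, $y'_j = y_j - 1$, and $y'_k = y_k$ for every $k \notin \{i,j\}$. The hypotheses $y_i < x_i + c_i$ and $y_j > x_j$ ensure that $Y'$ stays within the box $[X, X+C]$ coordinate-wise, the total is preserved so $\sum_k y'_k = 2^W$, and both $y'_i$ and $y'_j$ are even, since adding or subtracting $1$ from an odd integer yields an even integer. It therefore remains to show that $Y'$ is optimal, which, given the optimality of $Y$, amounts to establishing $\lambda(Y') \le \lambda(Y)$.

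I would establish this inequality by a direct reduction to Lemma~\ref{lemma_prioritize_when_capacity_1_E_2h}. Define the auxiliary lifting instance $(G,\widetilde{C},W)$ by $g_i = y_i$, $g_j = y_j - 1$, and $g_k = y_k$ for $k \notin\{i,j\}$, with capacities $\widetilde{c}_i = \widetilde{c}_j = 1$ and $\widetilde{c}_k = 0$ otherwise. Then $\sum_k g_k = 2^W - 1$ (note that $g_j = y_j - 1 \ge x_j \ge 0$ by the hypothesis $y_j > x_j$), matching the sum condition of Lemma~\ref{lemma_prioritize_when_capacity_1_E_2h}. Moreover, the only two valid liftings of $(G,\widetilde{C},W)$ are obtained by lifting coordinate $i$, which gives $Y'$, or coordinate $j$, which gives $Y$. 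So comparing $\lambda(Y')$ to $\lambda(Y)$ is exactly the comparison that Lemma~\ref{lemma_prioritize_when_capacity_1_E_2h} resolves.

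To finish, I would compare $g_i$ and $g_j$ in bit-lexicographic order. Because $y_i$ is odd, bit $0$ of $g_i = y_i$ equals $1$; because $y_j$ is odd, $g_j = y_j - 1$ is even, so bit $0$ of $g_j$ equals $0$. Hence $g_i$ and $g_j$ already disagree at bit $0$, giving $g_j <_{lex} g_i$. Applying Lemma~\ref{lemma_prioritize_when_capacity_1_E_2h} then yields $\lambda(Y') \le \lambda(Y)$, and combined with the optimality of $Y$ we conclude $\lambda(Y') = \lambda(Y)$, so $Y'$ is optimal.

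The main obstacle is simply recognizing that this lemma is essentially a repackaging of Lemma~\ref{lemma_prioritize_when_capacity_1_E_2h}. Once one spots that the oddness of $y_i$ and $y_j$ forces the lexicographic comparison to be decided at bit $0$, and that the two slack hypotheses $y_i < x_i + c_i$ and $y_j > x_j$ are precisely what is needed for both $Y$ and $Y'$ to be legal liftings of the auxiliary instance, the rest is a short bookkeeping check.
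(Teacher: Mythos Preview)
Your proof is correct and takes a different route from the paper. The paper argues directly on the lifting sequence $s(Y)$: it uses that $s(Y)$ must contain a lift $[j+]$ (since $y_j>x_j$) and size-$1$ transactions touching $i$ and $j$ (since $y_i,y_j$ are odd), and then does a three-case analysis on the sender/receiver roles of $i$ and $j$ in those size-$1$ transactions, in each case swapping one $[j+]$ for an $[i+]$ and patching a single transaction to keep the sequence length from increasing. Your argument instead packages the comparison $\lambda(Y')\le\lambda(Y)$ as a black-box call to Lemma~\ref{lemma_prioritize_when_capacity_1_E_2h}, via the auxiliary instance $(G,\widetilde{C},W)$ with $G$ obtained by stripping one unit from $y_j$; the parity hypothesis then pins down the bit-lexicographic order at bit~$0$, so the lemma applies directly. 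Both approaches produce the same $Y'$. Your reduction is shorter and more modular, while the paper's proof is self-contained and makes the transaction-level manipulation explicit (which is in the same spirit as the subsequent arguments in Lemma~\ref{lemma_excess_too_little} and Lemma~\ref{lemma_odd_or_all_even}).
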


\begin{proof}
\looseness=-1
Since $y_j > x_j$ it follows that $s(Y)$ contains at least one lift $[j+]$ (it could contain more such lifts). Moreover, because $y_i$ and $y_j$ are odd, $s(Y)$ contains transactions of size 1 in which $y_i$ and $y_j$ participate. Note that both $y_i$ and $y_j$ must participate in such a transaction, because at the end of $s(Y)$ they are both $0$. We split the rest of the proof into cases 
according to the roles of $i$ and $j$ in these transactions of size $1$. 
\begin{enumerate}
    \item If $j$ is a sender, $[j \to_0 a] \in s(Y)$ for some coordinate $a$:
    Then we replace $[j+][j \to_0 a]$ by $[i+][i \to_0 a]$.
    
    \item If $i$ is a receiver, $[a \to_0 i] \in s(Y)$ for some coordinate $a$:
    Then we replace $[j+][a \to_0 i]$ by $[i+][a \to_0 j]$.
    
    \item If $j$ is a receiver, and $i$ is a sender: Then
    without loss of generality we may assume $[i \to_0 j] \in s(Y)$, and
    we replace $[j+][i \to_0 j]$ by $[i+][i \to_1 j]$.
\end{enumerate}
In all the cases above, we obtain a lifting $Y'$
 from $Y$ by trading a single  lift
$[j+]$ for the lift $[i+]$. The argument shows
that $\lambda(Y')\le \lambda(Y)$.
In $Y'$
both $y_i'$ and $y_j'$ are even.
\end{proof}

\begin{corollary}
\label{corollary_minimal_odds}
Let $(X,C,W)$ be an input of a lifting problem. Then among the set of optimal lifting vectors, there exists $Y$ such that either every odd $y_i$ satisfies $y_i=x_i$ (none are lifted), or every odd $y_i$ satisfies $y_i > x_i$ (all are lifted).
\end{corollary}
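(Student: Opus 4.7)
The plan is a finite descent argument: starting from an arbitrary optimal lifting, repeatedly apply Lemma~\ref{lemma_make_more_even_weights} until no further improvement is possible, and show that the terminal lifting exhibits the claimed dichotomy.

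Call an optimal lifting $Y$ of $(X,C,W)$ \emph{mixed} if there exist distinct indices $i,j$ such that $y_i$ and $y_j$ are both odd, with $y_i=x_i$ (unlifted) and $y_j>x_j$ (lifted). The corollary is equivalent to asserting that some optimal lifting is not mixed, since the two alternatives (``every odd $y_i$ is unlifted'' or ``every odd $y_i$ is lifted'') together form precisely the negation of mixedness.

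Let $Y$ be any optimal lifting. If $Y$ is not mixed we are done. Otherwise choose witnessing indices $i,j$. Because all capacities satisfy $c_i\ge 1$ in the setting of this section, $y_i=x_i<x_i+c_i$; and $y_j>x_j$ by choice of $j$. The hypotheses of Lemma~\ref{lemma_make_more_even_weights} are therefore fulfilled, and it produces another optimal lifting $Y'$ in which $y'_i$ and $y'_j$ are both even. Inspecting the three cases in its proof shows that the rewriting is local: it swaps a single lift $[j+]$ for a lift $[i+]$ and modifies at most one transaction involving $\{i,j\}$, leaving all coordinates other than $y_i$ and $y_j$ unchanged. Hence the total number of odd coordinates strictly decreases by two when passing from $Y$ to $Y'$.

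Iterating this construction yields a sequence of optimal liftings whose odd-count is a strictly decreasing sequence of non-negative integers, so the process must terminate after at most $\lfloor k/2\rfloor$ steps at some optimal lifting that is not mixed. For such a lifting, the set of odd coordinates is either entirely unlifted or entirely lifted, establishing the corollary. I do not expect any substantive obstacle here; the only delicate point is verifying that a single application of Lemma~\ref{lemma_make_more_even_weights} indeed leaves the remaining coordinates (and in particular, other odd entries) untouched, so that the odd-count does not accidentally grow elsewhere; this is immediate from the local nature of the three replacements used in the lemma's proof.
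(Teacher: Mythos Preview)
Your proof is correct and matches the intended derivation: the paper states the corollary immediately after Lemma~\ref{lemma_make_more_even_weights} without a separate proof, and the natural (and only) way to obtain it is exactly the finite descent you describe---repeatedly apply the lemma to a mixed pair, observe from the lemma's three replacement cases that only $y_i$ and $y_j$ change (each swapping parity from odd to even), so the odd count drops by two each step, and terminate at a non-mixed optimal lifting. Your verification that the replacements are local to $\{i,j\}$ is the one point that deserves a sentence, and you handle it correctly.
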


Lemma \ref{lemma_excess_too_little} below proves the correctness of the first ``if'' in Algorithm \ref{alg_lift_capacity_1_2_3} and motivates the definition of $X'$ and $C'$ in the ``else'' part of this ``if''.

\begin{lemma}
\label{lemma_excess_too_little}
Let $(X,C,W)$ be an input to the lifting problem where $C \in \{1,2,3\}^k$. Denote $X^{odd} = \{i \mid x_i \ is \ odd\}$. If $e(X) \le |X^{odd}|$ then
an optimal lifting for $(X,[1,\ldots,1],W)$ 
is also optimal for $(X,C,W)$.
If $e(X) > |X^{odd}|$
then
let $(X',C',W)$ be defined such that 
 if $x_i$ is even
then $x'_i =x_i$ and  $c'_i=c_i$ and if $x_i$ is odd then  $x'_i =x_i+1$ and $c'_i=c_i-1$. An optimal solution to $(X',C',W)$
is also optimal for $(X,C,W)$.
\end{lemma}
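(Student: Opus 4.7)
The plan is to apply Corollary \ref{corollary_minimal_odds} to obtain a structured optimal lifting $Y^*$ for $(X,C,W)$ and then handle the two cases. The corollary guarantees that $Y^*$ takes one of two forms: (a) every odd $y_i^*$ satisfies $y_i^* = x_i$, or (b) every odd $y_i^*$ satisfies $y_i^* > x_i$. A useful immediate consequence of form (b) is that every odd $x_i$ must be lifted in $Y^*$, since otherwise $y_i^* = x_i$ would itself be an unlifted odd coordinate, contradicting (b).

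For Case 1 ($e(X) \le |X^{odd}|$), if $Y^*$ is in form (b) then each odd $x_i$ contributes at least one unit of lift, so $e(X) \ge |X^{odd}|$. Combined with the hypothesis this forces $e(X) = |X^{odd}|$, every odd $x_i$ is lifted exactly once, and no even $x_i$ is lifted; in particular $Y^*$ is itself a $\{0,1\}$-lifting and hence optimal for $(X,[1,\ldots,1],W)$ as well. If $Y^*$ is in form (a), then a lift by $1$ or $3$ on an even coordinate would create $y_i^* > x_i$ odd, contradicting (a), so every even coordinate is either unlifted or lifted by $2$, and every odd coordinate is either unlifted or lifted to an even value. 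Writing $e(X) = a_1 + 3a_3 + 2b$ with $a_1, a_3$ the number of odd coordinates lifted by $1$ and $3$ respectively, and $b$ the number of even coordinates lifted by $2$, the hypothesis yields $|X^{odd}| - a_1 - a_3 \ge 2(a_3+b)$, so at least $2(a_3+b)$ odd coordinates remain unlifted. I would then use an exchange argument in the spirit of Lemma \ref{lemma_make_more_even_weights} to trade each lift-by-$2$ of an even coordinate, and the ``extra $2$'' inside any lift-by-$3$ of an odd coordinate, for a pair of single lifts on two previously unlifted odd coordinates, producing a $\{0,1\}$-lifting of complexity at most $\lambda(Y^*)$.

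For Case 2 ($e(X) > |X^{odd}|$), if $Y^*$ is in form (b) then every odd $x_i$ is lifted, so $y_i^* \ge x_i + 1 = x'_i$ for odd $i$ and $y_i^* \ge x_i = x'_i$ for even $i$; the upper bounds $y_i^* \le x_i+c_i = x'_i+c'_i$ also hold, so $Y^*$ is a feasible lifting for $(X',C',W)$. Since every feasible lifting for $(X',C',W)$ is also feasible for $(X,C,W)$, an optimal $Y'$ of $(X',C',W)$ satisfies $\lambda(Y') \le \lambda(Y^*)$, hence $Y'$ is optimal for $(X,C,W)$. If $Y^*$ is in form (a), the same exchange argument converts it into an equally good lifting in form (b), reducing to the previous sub-case.

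The main technical obstacle is the exchange argument for form (a). Making it rigorous requires showing that replacing a lift-by-$2$ on an even coordinate, or the surplus $2$ within a lift-by-$3$ on an odd coordinate, by two single lifts on previously unlifted odd coordinates does not increase $\lambda$. I expect this to follow from a Bit Matcher style analysis on $s(Y^*)$: the two newly lifted odd coordinates become even and would have been paired at the lowest bit in a Bit Matcher sequence for $Y^*$, so substituting that pairing while also dropping the higher-bit transactions associated with the now-eliminated surplus lift preserves the total length. The parity observation that $Y^*$ must have an even number of odd coordinates (since its sum $2^W$ is even) guarantees that $|X^{odd}| - a_1 - a_3$ is even, so the swaps can indeed be performed in pairs.
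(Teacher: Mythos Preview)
Your overall plan is the paper's: invoke Corollary~\ref{corollary_minimal_odds}, dispose of form (b) directly, and for form (a) run an exchange that trades a double lift at some $j$ for single lifts on two unlifted odd indices. The gap is in the mechanics of that exchange. You propose to ``substitute that pairing while also dropping the higher-bit transactions associated with the now-eliminated surplus lift,'' but nothing is dropped and the substitution is not the one you sketch. The correct swap is a one-for-one replacement: pick $a,b$ that Bit Matcher \emph{actually} pairs in $s(Y^*)$, i.e.\ $[a\to_0 b]\in s(Y^*)$ (you cannot choose two arbitrary unlifted odd indices and assume they are matched), and replace the triple $[j+][j+][a\to_0 b]$ by $[a+][b+][a\to_1 j]$. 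The lift count stays $e(X)$ and the transaction count is unchanged, so the resulting sequence has the same length and witnesses $\lambda(Y')\le\lambda(Y^*)$. There is no separate ``higher-bit transaction associated with the surplus lift'' to delete; the inserted $[a\to_1 j]$ is precisely what absorbs the two units removed from $j$.

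Two further points you need once this is fixed. After a swap, $y'_a,y'_b,y'_j$ are all even and every remaining odd coordinate is still unlifted, so you remain in form (a) and may iterate. In Case~2 you must also argue that the iteration empties $Y^{odd}$ rather than first exhausting double lifts: if it stalled with some odd unlifted coordinate left and no index lifted twice, then (still in form (a)) every lift would be a single lift landing on an odd $x_i$, giving $e(X)\le|X^{odd}|$ and contradicting the hypothesis of Case~2.
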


\begin{proof}
Let $Y$ be an optimal lifting for the input $(X,C,W)$ and let $Y^{odd} = \{i \mid y_i \ is \ odd \}$. By Corollary \ref{corollary_minimal_odds}, we choose $Y$ such that either all members of $Y^{odd}$ are lifted (Lifted), or none are (Not). We consider the case of $Y^{odd} = \emptyset$ as one in which all members of $Y^{odd}$ are not lifted. We also split based on whether $e(X) \le |X^{odd}|$ (Less) or $e(X) > |X^{odd}|$ (More). The proof splits to 
cases: (1) Lifted-Less; (2) Lifted-More; (3) Not-Less; (4) Not-More.

Case 1 is impossible, because there is not enough excess to make all $X^{odd}$ even and then also lift some weights such that $Y^{odd}$ is not empty (recall that we consider the case $Y^{odd} = \emptyset$ as one in which all members of $Y^{odd}$ are not lifted).

Case 2 is easy. Indeed, if $Y^{odd}$ are all lifted, this means that any weight in $X^{odd}$ has been lifted (either becoming even or remaining odd, but lifted). 
It follows that $Y$ is a also an optimal lifting of $(X',C',W)$ as stated.

To deal with Cases 3 and 4 we describe a process that eliminates weights from $Y^{odd}$ if  there exists an index $j$ that has been lifted more than once, $Y^{odd} \ne \emptyset$, and all members of $Y^{odd}$ are not lifted.
Assume that indeed $j$ has been lifted more than once, $Y^{odd} \ne \emptyset$, and all members of $Y^{odd}$ are not lifted. Since $Y^{odd} \ne \emptyset$ and $\sum_i y_i = 2^W$
is even, we must have that $|Y^{odd}| \ge 2$. Since  $s(Y)$ consists of a sequence of Bit Matcher, there exist $a,b \in Y^{odd}$ such that $[a \to_0 b] \in s(Y)$ (Bit Matcher first matches odd weights together). 

We modify $Y$ to $Y'$ by replacing $[j+][j+][a \to_0 b]$ in $s(Y)$ by $[a+][b+][a \to_1 j]$ (we can lift 
$a$ and $b$ since all members of $Y^{odd}$ are not lifted and all capacities are at least $1$). The resulting lifting $Y'$ is also optimal since $\lambda(Y') \le \lambda(Y)$.

\looseness=-1
Observe that $a,b \notin ({Y'})^{odd}$ and that $j$ is not lifted more than once in $Y'$, since its capacity was initially $c_j \le 3$.
Moreover, $j \notin ({Y'})^{odd}$. This follows since $y_j$ was lifted in $Y$, and therefore $j \notin Y^{odd}$,  $y_j$ is even, and ${y'}_j = y_j - 2$ is also even.
 
We handle Cases 3 and 4 by repeatedly applying the process to the optimal lifting at hand, until it cannot be applied any more. Denote the final lifting that we get by $Y'$.

In Case 3 there is barely or not enough excess to make all the odd weights of $X$ even. Therefore in $Y'$ there is no index that is lifted more than once (if $e(X) = |X^{odd}|$ we will also get that $({Y'})^{odd} = \emptyset$). It follows that $Y'$ is also a  solution of the more restricted lifting problem $(X,\vec{1},W)$ ($\vec{1} \equiv [1,\ldots,1]$). Since $Y'$ is optimal for $(X,C,W)$  this implies that any optimal lifting of
$(X,\vec{1},W)$ is also an optimal lifting of  $(X,C,W)$.

In Case 4 there is enough excess to eliminate $Y^{odd}$, so  $({Y'})^{odd} = \emptyset$ and we get that $Y'$ is a solution of $ (X',C',W)$. Since $Y'$ is optimal for $(X,C,W)$ this implies that any optimal lifting of $(X',C',W)$ is also an optimal lifting of $(X,C,W)$.

To conclude this proof, observe that when $e(X) > |X^{odd}|$
(Cases 2 and 4) it is safe to apply a partial lifting of all $X^{odd}$ such that they become even, and that when $e(X) \le |X^{odd}|$ (Case 3) the problem $(X,C,W)$ reduces to $(X,\vec{1},W)$.
\end{proof}

Lemma \ref{lemma_excess_too_little} guarantees that given a lifting problem $(X,C,W)$,
if the  excess is large compared to the number of odd weights in $X$ then we can reduce it to the lifting problem $(X',C',W)$ as defined in Algorithm~\ref{alg_lift_capacity_1_2_3}. After this partial lifting, all the weights $x'_i$ are even, and  $\forall i: c'_i \in \{0,1,2,3\}$. The range of capacities now includes $0$ since some capacities of $1$ may have been fully used to make $x'_i$ even, while others may not have been used at all.
The following lemma implies the correctness of the second phase of Algorithm~\ref{alg_lift_capacity_1_2_3} (the second ``if''). 
 
\begin{lemma}
\label{lemma_odd_or_all_even}
Let $(X,C,W)$ be an input of a lifting problem such that $\forall i: x_i$ is even. Define $D = 2(C/2)$, that is, we obtain $D$ by subtracting $1$ from every odd capacity.
If $e(X) \le \sum_i d_i$ then there is an optimal lifting $Y^*$ such that $\forall i: y^*_i$ is even, and if $e(X) \ge \sum_i d_i$ then there is an optimal lifting $Y^*$ for $(X,C,W)$ such that $\forall i: y^*_i \ge x_i + d_i$.
\end{lemma}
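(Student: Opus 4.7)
I would prove both halves by exchange arguments in the style of Lemma~\ref{lemma_make_more_even_weights}, starting from an optimal $Y^*$ chosen extremally (for the first half, minimizing the number of odd coordinates; for the second half, minimizing the potential $\Phi(Y)=\sum_i\max(0,d_i-(y_i-x_i))$) and deriving a contradiction if $Y^*$ violates the claimed structure.

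For the first half, assume some coordinate of $Y^*$ is odd. Because every $x_i$ is even and $\sum y^*_i=2^W$ is even, there are at least two odd indices $a\ne b$, each lifted an odd number of times. If at least one of $a,b$ is unsaturated, Lemma~\ref{lemma_make_more_even_weights} directly produces an equally short lifting with $y'_a,y'_b$ even, contradicting the extremal choice. Otherwise every odd index is saturated, so $c_a,c_b\in\{1,3\}$ and in particular there are no odd coordinates with $c_i=2$; grouping the summands of $e(X)=\sum_i(y^*_i-x_i)$ by the pair (value of $c_i$, parity of $y^*_i$) shows that if every remaining capacity $c_i-(y^*_i-x_i)$ were at most $1$, then $e(X)$ would equal $\sum_i d_i$ plus the number of odd indices, which is at least $\sum_i d_i+2$, contradicting $e(X)\le\sum_i d_i$; hence some index $c$ has remaining capacity at least $2$. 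I then exchange by un-lifting each of $a,b$ once and lifting $c$ twice, and preserve length by transforming a shortest sequence for $Y^*$ whose first transaction is $[a\to_0 b]$ into a sequence for $Y'$ that begins with $[c\to_1 b]$ (with $[a+][b+]$ replaced by $[c+][c+]$ in the lifting prefix); a direct coordinate-wise check shows the states immediately after these operations coincide, so the tail of the original sequence zeros out $Y'$.

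For the second half, assume $\Phi(Y^*)>0$ and pick $j$ with $y^*_j-x_j<d_j$; then $c_j\in\{2,3\}$ and $y^*_j-x_j\le 1$. The total excess constraint $e(X)\ge\sum_i d_i$ combined with the deficit at $j$ forces some $i$ with $y^*_i-x_i>d_i$, which in turn forces $c_i$ odd and $y^*_i=x_i+c_i$, so $y^*_i$ is odd. Let $Y'$ be $Y^*$ with one lift transferred from $i$ to $j$; a direct computation shows $\Phi(Y')=\Phi(Y^*)-1$, so it remains to show $\lambda(Y')\le\lambda(Y^*)$. When $y^*_j$ is odd, the exchange is Lemma~\ref{lemma_make_more_even_weights} applied with $j$ playing the unsaturated-lifted role and $i$ the saturated-lifted role. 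When $y^*_j$ is even, parity produces an auxiliary odd $b\ne i,j$; taking a shortest sequence whose first size-$0$ transaction pairs $i$ with $b$ and replacing $[i+]$ by $[j+]$ in the lifting prefix together with $[i\to_0 b]$ by $[j\to_0 b]$ in the transaction part yields a sequence of equal length, because the state right after these modified operations matches the state right after the originals coordinate-wise.

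The most delicate step is ensuring that one can assume a shortest sequence of $Y^*$ begins with the particular size-$0$ transaction we wish to modify: Bit Matcher pairs odd coordinates by bit-lexicographic order, and an arbitrary target pair $(a,b)$ or $(i,b)$ need not be its first choice. The argument therefore hinges on showing that the initial round of size-$0$ transactions can be re-paired without increasing the total length, which I expect to follow because once all odd coordinates have been matched together the resulting multiset of even carries passed to higher bit levels is the same regardless of the matching, so subsequent bit levels still admit an optimal completion of the same length. Given this, iterating the exchange in each half drives the extremal quantity to zero in finitely many steps, and the resulting lifting has the required structure.
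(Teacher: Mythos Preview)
Your exchange strategy is close in spirit to the paper's, but the step you yourself flag as ``most delicate'' is a genuine gap, and the justification you offer for it is incorrect. Your claim that ``the resulting multiset of even carries is the same regardless of the matching'' is false: the multiset after the size-$0$ round depends on which odd indices are senders and which are receivers, not merely on how they are paired. For instance with $Y^*=(1,3)$, the transaction $[1\to_0 2]$ yields $(0,4)$ and total length $2$, whereas $[2\to_0 1]$ yields $(2,2)$ and total length $3$. What \emph{is} true is that once Bit Matcher's bipartition of the odd indices into senders $A_l$ and receivers $A_h$ is fixed, re-pairing within that bipartition preserves the multiset. Hence you may only assume $[a\to_0 b]$ (respectively $[i\to_0 b]$) is available when $a\in A_l$ and $b\in A_h$ (respectively $i\in A_l$). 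In your first half this is easily arranged, since in the ``otherwise'' branch every odd index is saturated and you can simply take $(a,b)$ from an actual Bit Matcher transaction. In your second half it is not: you chose the surplus index $i$ without ensuring $i\in A_l$, and if $i\in A_h$ your replacement $[i\to_0 b]\mapsto[j\to_0 b]$ is unavailable. A repair exists (first look for an \emph{odd} deficit index to play the role of $j$ and invoke Lemma~\ref{lemma_make_more_even_weights}; only if none exists is every odd index surplus, and then one may take $i\in A_l$), but you have not made it.

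The paper sidesteps all of this by reversing the order of choices: it first takes a transaction $[b\to_0 a]$ that actually appears in the Bit Matcher sequence $s(Y)$, and only then identifies the slack index $i$ with $y_i<x_i+d_i$, selecting specifically the transaction containing $i$ when $y_i$ is odd. This yields three cases ($i=a$ contradicts optimality of $Y$; $i=b$; and $i\notin\{a,b\}$, which forces $y_i$ even), each handled by a direct replacement with no re-pairing whatsoever. The paper also runs a single unified process for both halves---reduce the number of odd coordinates as long as a slack index exists---and reads off both conclusions from the stopping condition, rather than your two separate extremal arguments with different potentials.
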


\begin{proof}
In order to prove the cases, we describe a process which we will use to take an optimal lifting $Y$ and convert it to another optimal lifting $Y'$ with less odd weights. For this process, we require the existence of $j$ such that $y_j$ is odd, and also an index $i$ such that $y_i < x_i + d_i$.

Assuming $Y$ has odd weights, there must be at least two (because the sum over $Y$ is even). Denote by $y_a$ and $y_b$ two weights such that $[b \to_0 a] \in s(Y)$. Such a transaction exists because the transactions are generated by Bit Matcher which first makes transactions between pairs of odd weights. Moreover, if $y_i$ is odd, we pick a transaction in which either $i=a$ or $i=b$. Because $x_a,x_b$ are even and $y_a,y_b$ are odd, then $[a+],[b+] \in s(Y)$.

(1) If $i=a$ then we can replace $[a+][b+][b \to_0 a]$ in $s(Y)$ by $[a+][a+]$ which gives a lifting $Y'$ with a shorter lifting sequence in contradiction to the optimality of $Y$. So we must have that $i \ne a$.
    
(2) If $i=b$ then we can replace $[a+][b+][b \to_0 a]$ in $s(Y)$ by $[b+][b+][b \to_1 a]$, and get a new lifting $Y'$ with a lifting sequence of the same length as the lifting sequence of $Y$. In $Y'$, $y'_a$ and $y'_b$ are even so the number of odd weights in $Y'$ is two less than in $Y$.

(3) If $i \ne a,b$, then $y_i$ is even (if it was odd, then we must have chosen $i=a$ or $i=b$ by definition). Since $x_i$ and $d_i$ are also even, and $y_i < x_i + d_i$ we get $y_i \le x_i + d_i - 2$. Thus $i$ has capacity for two extra lifts. So we replace $[a+][b+][b \to_0 a]$ by $[i+][i+][i \to_1 a]$, and get a new lifting $Y'$ 
with a lifting sequence of the same length as the lifting sequence of $Y$.
In $Y'$, $y'_a$ and $y'_b$ are even so the number of odd weights in $Y'$ is two less than in $Y$.

Let $Y$ be an optimal lifting, and apply the process we described repeatedly, until we can no longer continue.

Consider the case of $e(X) \le \sum_i d_i$. In this case, as long as there are odd weights, there must be $i$ such that $y_i < x_i + d_i$ as required by the process. Indeed, either an odd weight satisfies that, or if every odd weight satisfies $y_a = x_a + d_a + 1$ then there is not enough excess left for the even weights, and there must be some other (even) weight that satisfies $y_i < x_i + d_i$. Therefore, we conclude that the process stops at an optimal lifting $Y^*$ such that $\forall i: y^*_i$ is even as the lemma claims.

Consider the case when $e(X) \ge \sum_i d_i$. In this case, the process stops either if $Y^*$ has no odd weights, or when there is no $i$ such that $y^*_i < x_i + d_i$. In the latter case, we get $\forall i: y^*_i \ge x_i + d_i$ as stated by the lemma. In the former case, since $x_i + d_i$ is the highest even value $y^*_i$ can reach, we get: $e(X) = \sum_i{(y^*_i-x_i)} \le \sum_i {d_i} \le e(X)$, and we conclude that $\sum_i d_i = e(X)$ and that $\forall i: y^*_i = x_i + d_i$.
\end{proof}

\begin{theorem}
\label{theorem_opt_lifting_capacity_1_2_3}
Algorithm \ref{alg_lift_capacity_1_2_3} generates an optimal lifting for Problem \ref{problem_lift_x_to_y} if $C \in \{1,2,3\}^k$.
\end{theorem}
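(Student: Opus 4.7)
The strategy is to verify that each of the three execution paths of Algorithm~\ref{alg_lift_capacity_1_2_3} reduces $(X,C,W)$ to a $\{0,1\}$-capacity instance while preserving at least one optimal lifting, after which Theorem~\ref{theorem_opt_lifting_max_capacity_1} finishes the job. Correctness therefore amounts to chaining Lemmas~\ref{lemma_excess_too_little} and \ref{lemma_odd_or_all_even} at the appropriate branches, and verifying that the feasibility inequalities $\sum x_i \le 2^W \le \sum (x_i + c_i)$ demanded by Algorithm~\ref{alg_lift_max_capacity_1} survive each reduction.

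In the outer \texttt{if} branch (when $e(X) \le |\{i : x_i \text{ is odd}\}|$), the first clause of Lemma~\ref{lemma_excess_too_little} directly asserts that an optimal lifting of $(X,[1,\ldots,1],W)$ is also optimal for $(X,C,W)$; the bound $\sum (x_i + 1) \ge 2^W$ required by Algorithm~\ref{alg_lift_max_capacity_1} is exactly the case condition $e(X) \le k$. In the outer \texttt{else} branch, the second clause of Lemma~\ref{lemma_excess_too_little} lets us replace $(X,C,W)$ by $(X',C',W)$, where every odd $x_i$ has been pre-lifted by $1$, making $X'$ entrywise even and $C' \in \{0,1,2,3\}^k$. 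Defining $D$ as in Lemma~\ref{lemma_odd_or_all_even} yields $d_i \in \{0,2\}$ with $\sum d_i = 2\,|\{i : c'_i \ge 2\}|$, so the inner test in the algorithm is exactly the comparison of $e(X')$ with $\sum d_i$ from Lemma~\ref{lemma_odd_or_all_even}.

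If $e(X') < \sum d_i$, the first clause of Lemma~\ref{lemma_odd_or_all_even} supplies an optimal lifting of $(X',C',W)$ whose coordinates are all even. The map $Y \mapsto Y/2$ then gives a $\lambda$-preserving bijection between such even liftings and arbitrary liftings of $(X'/2, \lfloor C'/2 \rfloor, W-1)$: a shortest sequence for an all-even partition can always be taken to use only transactions of size at least $2$, and halving every transaction size produces an equally short sequence for the halved partition. Since $\lfloor C'/2 \rfloor \in \{0,1\}^k$, Theorem~\ref{theorem_opt_lifting_max_capacity_1} applies to the halved instance, and doubling the returned lifting yields an optimum for $(X',C',W)$. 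If instead $e(X') \ge \sum d_i$, the second clause of Lemma~\ref{lemma_odd_or_all_even} guarantees an optimum $Y^*$ with $y^*_i \ge x'_i + d_i$ for every $i$, so pre-lifting each coordinate by $d_i$ to obtain $(X'',C'',W)$ loses no optimality; here $C'' = C' - D \in \{0,1\}^k$, and Theorem~\ref{theorem_opt_lifting_max_capacity_1} again closes the argument.

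The main obstacle is bookkeeping rather than a genuine mathematical difficulty: verifying that the feasibility inequalities pass through each substitution (which follows directly from the case conditions on $e(X)$, $e(X')$, and $\sum d_i$) and justifying that halving an all-even partition preserves $\lambda$. Both reduce to short verifications once Lemmas~\ref{lemma_excess_too_little} and \ref{lemma_odd_or_all_even} are in hand, so the real content of the proof is the correct invocation of each lemma in the matching branch of the algorithm.
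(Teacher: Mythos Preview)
Your proposal is correct and follows essentially the same route as the paper's proof: you branch on the same three cases, invoke Lemma~\ref{lemma_excess_too_little} for the outer split and Lemma~\ref{lemma_odd_or_all_even} for the inner split, and in each terminal case reduce to a $\{0,1\}$-capacity instance handled by Theorem~\ref{theorem_opt_lifting_max_capacity_1}. The only difference is that you are slightly more explicit than the paper about feasibility bookkeeping and about why halving an all-even partition preserves $\lambda$, neither of which the paper spells out in detail.
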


\begin{proof}
We use the terminology from Algorithm \ref{alg_lift_capacity_1_2_3} throughout the proof. If $e(X) \le |\{i \mid x_i\ is\ odd\}|$ then by Lemma~\ref{lemma_excess_too_little} the algorithm returns a correct result. 

If $e(X) > |\{i \mid x_i\ is\ odd\}|$ then 
 by Lemma~\ref{lemma_excess_too_little}  an optimal lifting of $(X',C',W)$ is also an optimal lifting of $(X,C,W)$.
So it remains to show that in this case Algorithm \ref{alg_lift_capacity_1_2_3} computes an optimal lifting of 
$(X',C',W)$.

In $(X',C',W)$ all weights are even. By Lemma~\ref{lemma_odd_or_all_even} if $e(X')\ge 2 \cdot |\{i \mid  c'_i \ge 2\}|$ then an optimal solution of $(X'',C'',W)$ (in which each weight $x_i'$ with $c'_i\ge 2$ is lifted twice) is indeed an optimal solution of $(X',C',W)$. All the capacities in $(X'',C'',W)$ are either $0$ or $1$ so we can solve it optimally using Algorithm~\ref{alg_lift_max_capacity_1}.
 
If $e(X') < 2 \cdot |\{i \mid  c'_i \ge 2\}|$ then also by Lemma~\ref{lemma_odd_or_all_even} we are guaranteed that an optimal solution $Y^*$ exists in which all lifted weights are even. Since in both $X'$ and $Y^*$ all the weights are even, $Y^*/2$ must be an optimal solution of $(X'/2,C'/2,W-1)$. And since $c_i \le 3 \Rightarrow \floor{c'_i/2} \le 1$, this is a problem with maximum capacity of $1$ which can be solved optimally using Algorithm~\ref{alg_lift_max_capacity_1}.
\end{proof}

\begin{theorem}
\label{theorem_runtime_l_infinity}
Assuming $W$-bit word operations take $O(1)$ time, Problem~\ref{problem_tcam_approx_by_error} can be solved in $O(k)$ time, and Problem~\ref{problem_tcam_approx_by_rules} can be solved in $\runtimeLiftingTwosided{}$ time, for $\errLinf$.
\end{theorem}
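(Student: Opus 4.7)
My plan is to combine the lifting-algorithm machinery of this section with the reductions established earlier. For Problem~\ref{problem_tcam_approx_by_error} with the $\errLinf$ distance, I would first invoke Theorem~\ref{theorem_special_case_lifting_reduction} to transform the input $(P, e)$ into a lifting instance $(X, C, W-h)$ whose capacities lie in $\{0,1\}^k$ or $\{1,2,3\}^k$, where $h = \floor{\lg e}$. The transformation only needs to compute, per coordinate, the smallest and largest multiples of $2^h$ inside $(p_i - e, p_i + e) \cap \mathbb{Z}_{\ge 0}$; under the word-RAM model each of these is $O(1)$, so the reduction takes $O(k)$ time overall.

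Next I would run Algorithm~\ref{alg_lift_capacity_1_2_3}, which performs $O(k)$ preprocessing (parity adjustments and coordinate-wise halvings) and then calls Algorithm~\ref{alg_lift_max_capacity_1} a constant number of times, so the total cost reduces to bounding Algorithm~\ref{alg_lift_max_capacity_1}. Its only nontrivial step is selecting the top $2^W - \sum_i x_i$ elements of $\{x_i : c_i = 1\}$ in bit-lexicographic order. By Definition~\ref{definition_lexicorder} this coincides with the ordinary integer order applied to the bit-reversals $x_i^r$. Under the $O(1)$ word-operation assumption each such reversal is computed in constant time, and a standard deterministic linear-time selection routine then identifies the required top elements in $O(k)$ time. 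Correctness of the returned vector follows from Theorems~\ref{theorem_opt_lifting_max_capacity_1} and \ref{theorem_opt_lifting_capacity_1_2_3}. Hence Problem~\ref{problem_tcam_approx_by_error} for $\errLinf$ is solved in $O(k)$ time.

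For Problem~\ref{problem_tcam_approx_by_rules} I would then invoke Theorem~\ref{theorem_equivalent_rule_bound_error_bound1} with $T_A = O(k)$ (the bound just obtained) and with $C$ being either Bit Matcher or Niagara, giving $T_C = \bigOnotation{\min(W k, k + n \lg k)}$. The theorem yields a total running time of $O(W \cdot (T_A + T_C)) = O(W \cdot \min(W k, k + n \lg k)) = \runtimeLiftingTwosided{}$, matching the claim.

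The main obstacle, such as it is, lies in the $O(k)$ selection step: bit-lexicographic order is not the native integer comparison on the machine, so it depends on either computing bit-reversed words in $O(1)$ time or, equivalently, locating the least significant differing bit between two $W$-bit words in $O(1)$, both standard under the word-RAM model assumed in the theorem statement. Beyond this point, the argument is essentially bookkeeping on top of the already-established correctness of the lifting algorithms and of the reductions in Theorems~\ref{theorem_special_case_lifting_reduction} and \ref{theorem_equivalent_rule_bound_error_bound1}.
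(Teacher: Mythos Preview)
Your proposal is correct and follows essentially the same route as the paper: bound $T_A = O(k)$ via Theorem~\ref{theorem_special_case_lifting_reduction} plus Algorithms~\ref{alg_lift_max_capacity_1}--\ref{alg_lift_capacity_1_2_3}, bound $T_C = \runtimeTc{}$ via Bit~Matcher or early-stopped Niagara, and then plug both into Theorem~\ref{theorem_equivalent_rule_bound_error_bound1}. The only difference is that you spell out how the bit-lexicographic top-$n$ selection in Algorithm~\ref{alg_lift_max_capacity_1} is done in $O(k)$ time (via bit-reversal and linear-time selection), whereas the paper simply asserts that step is $O(k)$; your extra detail is a welcome clarification rather than a divergence.
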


\begin{proof}
Recall  the statement of Theorem~\ref{theorem_equivalent_rule_bound_error_bound1}, in which we define the solution time of Problem~\ref{problem_tcam_approx_by_error} as $T_A$, and argue that Problem~\ref{problem_tcam_approx_by_rules} can be solved in $O(W \cdot (T_A+ T_C))$ time.
Our bounds follow by 
 bounding $T_A$ and $T_C$ as follows.

We begin with $T_A$. We reduce an input $(P,e)$ to a lifting problem $(X,C,W')$ in $O(k)$ time since we compute each coordinate of $X$ and $C$ in $O(1)$ time.
Then, we use either Algorithm~\ref{alg_lift_max_capacity_1} or Algorithm~\ref{alg_lift_capacity_1_2_3}  to solve $(X,C,W')$. Algorithm~\ref{alg_lift_max_capacity_1} requires $O(k)$ time to  identify the weights which should receive a lift. Algorithm~\ref{alg_lift_capacity_1_2_3} has a constant number of phases, in each phase it does some processing that takes $O(k)$ time, and then applies Algorithm~\ref{alg_lift_max_capacity_1}. It follows that  Algorithm~\ref{alg_lift_capacity_1_2_3} also requires $O(k)$ running time. We conclude that $T_A = O(k)$.

We now bound $T_C$ which we recall is the time required to compute for a partition $P$ and an integer $n$ whether $\lambda(P) \le n$. This can be done by finding a shortest sequence that induces $P$
and comparing the length of the resulting sequence to $n$.

We can find a shortest sequence that induces $P$
by running either Bit Matcher or Niagara algorithms on $P$. Bit Matcher runs in  $O(Wk)$ time. If $n$ is small, it is possible to modify Niagara to stop after $n$ steps and conclude that $\lambda(P) > n$. It follows that we can decide with Niagara whether $\lambda(P) \le n$ in $\runtimeNiagaraWithN{}$ time (the initialization is linear in $k$). We conclude that $T_C = \runtimeTc{}$.
\end{proof}

\section{One-sided approximations}
\label{section_one_sided_metric}

In this section we solve Problem~\ref{problem_tcam_approx_by_rules} for $\errLinfPos$ and $\errLinfPosRel$.

\begin{lemma}
\label{lemma_reduction_to_one_sided_lifting}
We can reduce 
Problem~\ref{problem_tcam_approx_by_error} with $\errLinfPos$ or $\errLinfPosRel$ to a lifting problem restricted to inputs in which all weights are  $0$. 
\end{lemma}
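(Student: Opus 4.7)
The plan is to give an explicit reduction that maps any instance $(P,e)$ of Problem~\ref{problem_tcam_approx_by_error} for $\errLinfPos$ or $\errLinfPosRel$ to an instance of Problem~\ref{problem_lift_x_to_y} with $X=\vec{0}$ and appropriate capacities $C$, such that the feasible sets coincide. Since the complexity $\lambda$ depends only on the partition itself, minimizing $\lambda$ over one set minimizes it over the other, establishing the reduction.

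The first step is to define $C$ so that the lifting-feasibility condition $0\le y_i \le c_i$ exactly captures the constraint induced by the distance. For $\errLinfPos$ (where $e$ may be taken to be a positive integer, as in the proof of Theorem~\ref{theorem_equivalent_rule_bound_error_bound1}) I would set $c_i = p_i + e - 1$; the condition $p'_i < p_i + e$ on integers is equivalent to $p'_i \le c_i$. For $\errLinfPosRel$ I would set $c_i$ to be the largest non-negative integer strictly less than $p_i(1+e)$ (with the convention $c_i = 0$ when $p_i = 0$). The key conceptual observation is that these one-sided distances impose only upper bounds on the coordinates of $P'$ (no lower bound beyond $p'_i \ge 0$), so the ball $B_e(P)$ is an axis-aligned box anchored at the origin, which is exactly what the restriction $X=\vec 0$ of the lifting problem describes; see Fig.~\ref{figure_lifting_visualization}(c).

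Next I would verify the two conditions in the definition of a lifting vector. With $X=\vec 0$, the condition $x_i \le y_i \le x_i + c_i$ reduces to $0 \le y_i \le c_i$, which by construction is equivalent to membership in $B_e(P)$ coordinate-wise. The condition $\sum_i y_i = 2^W$ is exactly the partition requirement. Feasibility of the resulting lifting problem, i.e.\ $\sum_i (x_i+c_i) = \sum_i c_i \ge 2^W$, follows immediately from $P\in B_e(P)$: since $D(P,P)=0<e$ we have $p_i\le c_i$ for every $i$, hence $\sum_i c_i \ge \sum_i p_i = 2^W$.

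Combining these, the map $Y \mapsto Y$ is a bijection between the liftings of $(\vec 0, C, W)$ and the integer partitions in $B_e(P)$, and it preserves $\lambda$, so an optimal lifting is an optimal solution to Problem~\ref{problem_tcam_approx_by_error}. I do not expect any serious obstacle; the only care needed is in the definition of $c_i$ to correctly handle the strict inequality and, in the relative case, a possibly non-integer threshold $p_i(1+e)$. Taking $c_i$ to be the largest integer strictly below the threshold dispatches both issues uniformly.
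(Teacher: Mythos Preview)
Your proposal is correct and follows essentially the same approach as the paper: set $X=\vec{0}$, define $c_i$ as the largest integer strictly below the one-sided threshold ($p_i+e$ for $\errLinfPos$, $(1+e)p_i$ for $\errLinfPosRel$), and observe that the feasible liftings coincide with the integer partitions in $B_e(P)$. The paper writes the capacities uniformly as $c_i=\ceil{p_i+e}-1$ and $c_i=\ceil{(1+e)p_i}-1$, which agrees with your definitions (and subsumes your integer-$e$ assumption for $\errLinfPos$); your added feasibility check $\sum_i c_i\ge 2^W$ is a nice explicit touch that the paper leaves implicit.
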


\begin{proof}
Let $(P,e)$ be the input to Problem~\ref{problem_tcam_approx_by_error}. Set $x_i = 0$ for all $i$. For $\errLinfPos$, set $\forall i: c_i = \ceil{p_i + e} - 1$ (largest integer strictly-smaller than $p_i + e$),
and we have that $Y$ is a lifting of $X$ (with respect to $C$) iff $\forall i: y_i \le c_i$, which happens iff $\forall i: y_i - p_i < e$.
For $\errLinfPosRel$, set $\forall i: c_i = \ceil{(e+1) \cdot p_i} - 1$, and we have that $Y$ is a lifting of $X$ iff $\forall i: y_i \le c_i$, which happens iff $\forall i: \frac{y_i - p_i}{p_i} < e$.
\end{proof}

\begin{algorithm}[t]
    \DontPrintSemicolon
    \KwIn{
         Non-negative integers $W$ and capacities $C = [c_1 \ldots, c_k]$ such that $2^W \le \sum_i c_i$.
    }
    \KwOut{
        An optimal lifting  $Y$ for $(X = \vec{0},C,W)$.
    }
    
    \looseness=-1
    If $\exists c_i \ge 2^W$: return a lifting $Y$ s.t.\ $y_i = 2^W, y_{\ne i} = 0$. Else:
    
    \textbf{Recursive Version:}
    \begin{enumerate}
    
    \item Let $D = 2(C/2)$.
    
    \item If $\sum_i d_i \le 2^W$:
    return Algorithm\ref{alg_lift_max_capacity_1}$(D,C-D,W)$\\ $/{*}$ \emph{weights vector $D$ and capacities vector $C-D$} ${*}/$
    
    \item Else, $\sum_i d_i > 2^W$:
    recursion: \\
    return $2 \cdot$Algorithm\ref{alg_lift_one_sided_dualversion}$(\vec{0},C/2,W-1)$
    
    \end{enumerate}
    
    \textbf{Equivalent Iterative Version:}
    \begin{enumerate}
    
    \item Find the smallest integer $m \ge 1$ such that $\sum_{i=1}^{k}{\floor{C_i/2^m}} \le 2^{W-m}$. \label{condition_iterative}
    
    \item Set $D^*=2 \cdot (C_i/2^m)$ and $C^* =(C_i/2^{m-1})$. Return $2^{m-1} \cdot$Algorithm\ref{alg_lift_max_capacity_1}$(D^*,C^* - D^*,W-(m-1))$
    
    \end{enumerate}

    \caption{Lifting for one-sided error (2 versions)}
    \label{alg_lift_one_sided_dualversion}
\end{algorithm}

\begin{theorem}
\label{theorem_one_sided_lifting}
Algorithm \ref{alg_lift_one_sided_dualversion} returns an optimal lifting for an input consisting of arbitrary capacities and all weights zero.
\end{theorem}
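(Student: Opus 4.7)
I would prove this theorem by induction on $W$, following the structure of the recursive version of Algorithm~\ref{alg_lift_one_sided_dualversion}, and then verify that the iterative version unrolls the recursion.

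For the base case, if some $c_i \ge 2^W$, then the lifting $Y$ with $y_i = 2^W$ and $y_j = 0$ for $j \neq i$ satisfies $\lambda(Y) = 1$, realized by the single transaction $[i \to_W \bot]$. Since any lifting has $\lambda \ge 1$, this is trivially optimal. For the inductive step, assume $c_i < 2^W$ for all $i$. Since $X = \vec{0}$ has all even entries and $e(X) = 2^W$, Lemma~\ref{lemma_odd_or_all_even} applies with $D = 2(C/2)$.

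In the branch $\sum_i d_i \le 2^W$ (Case A), Lemma~\ref{lemma_odd_or_all_even} guarantees an optimal lifting $Y^*$ with $y_i^* \ge d_i$ for all $i$. The shifted vector $Z = Y^* - D$ is then a lifting of $(D, C-D, W)$ where $C - D \in \{0,1\}^k$; crucially, $Y^* = D + Z$ as a partition, so minimizing $\lambda$ over liftings of $(D, C-D, W)$ coincides with minimizing $\lambda$ over liftings of $(\vec{0}, C, W)$ satisfying $y_i \ge d_i$. Thus Algorithm~\ref{alg_lift_max_capacity_1} applied to $(D, C-D, W)$ returns an optimal lifting for the original problem by Theorem~\ref{theorem_opt_lifting_max_capacity_1}.

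In the branch $\sum_i d_i > 2^W$ (Case B), Lemma~\ref{lemma_odd_or_all_even} gives an optimal lifting $Y^*$ with every $y_i^*$ even. Setting $Z^* = Y^*/2$, since $y_i^*$ is even and bounded by $c_i$ we get $y_i^* \le d_i$ and hence $z_i^* \le \lfloor c_i/2 \rfloor$, while $\sum_i z_i^* = 2^{W-1}$, so $Z^*$ is a valid lifting for $(\vec{0}, C/2, W-1)$. The premise $\sum_i \lfloor c_i/2 \rfloor > 2^{W-1}$ follows from the Case B condition, so the recursive instance is well-posed. The key identity $\lambda(Y^*) = \lambda(Z^*)$ is the crux and I would argue it as follows: because $Y^*$ is all-even, applying Bit Matcher (which is optimal, as in \cite{BitMatcher}) to $Y^*$ skips the initial bit-$0$ phase and produces a shortest sequence in which every transaction has size $\ge 2$; halving every transaction size yields a sequence of equal length that induces $Z^*$, so $\lambda(Z^*) \le \lambda(Y^*)$. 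Conversely, any sequence for $Z^*$ can be doubled (size-wise) into a sequence for $Y^*$ of equal length, giving $\lambda(Y^*) \le \lambda(Z^*)$. By the inductive hypothesis the recursive call returns an optimal lifting $Z^*$ for $(\vec{0}, C/2, W-1)$, and $Y = 2 Z^*$ is then optimal for $(\vec{0}, C, W)$.

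Termination follows because Case B strictly decreases $W$, so the recursion bottoms out in at most $W$ levels; the premise $\sum_i c_i \ge 2^W$ is preserved at each recursive call. For the iterative version, I would unroll the recursion: at depth $j$ the problem has capacities $\lfloor C/2^j \rfloor$ and width $W-j$, and Case A triggers at the smallest depth $m-1$ for which $\sum_i \lfloor c_i/2^m \rfloor \le 2^{W-m}$. Tracing the doubling through $m-1$ recursive levels yields exactly the final call to Algorithm~\ref{alg_lift_max_capacity_1} on $(D^*, C^* - D^*, W-(m-1))$ with the $2^{m-1}$ scaling, matching the iterative version line-for-line. The main obstacle in this proof is the equality $\lambda(Y^*) = \lambda(Z^*)$ for $Y^* = 2Z^*$; everything else is bookkeeping around Lemma~\ref{lemma_odd_or_all_even} and induction.
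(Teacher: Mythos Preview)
Your proof is correct and follows essentially the same approach as the paper's proof: both hinge on applying Lemma~\ref{lemma_odd_or_all_even} to the all-zero instance to split into the two cases, reducing Case~A to Algorithm~\ref{alg_lift_max_capacity_1} on $(D,C-D,W)$ and Case~B to a half-scaled recursive instance. You supply more detail than the paper does---an explicit base case, the $\lambda(2Z^*)=\lambda(Z^*)$ argument via doubling/halving transactions, termination, and the unrolling to the iterative version---but the skeleton is identical.
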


\begin{proof}
We prove for the recursive version. The lifting problem begins from weights that are all zeros, so all weights are even and $e(X) = 2^W$. Let $D = 2(C/2)$. By Lemma~\ref{lemma_odd_or_all_even}, if $\sum d_i \le 2^W$, then there exists an optimal solution $Y^* \ge D$. So $Y^*$ is also optimal for a new lifting problem $(X',C',W)$ where $X' = D, C' = C-D$. It follows that any optimal solution of $(X',C',W)$ is optimal for the original problem $(\vec{0},C,W)$. But since $\forall i: c'_i = c_i-d_i \le 1$, the new problem can be solved optimally by Algorithm~\ref{alg_lift_max_capacity_1}.

In the other case, when $\sum d_i > 2^W$, Lemma~\ref{lemma_odd_or_all_even} guarantees an optimal solution $Y^*$ such that all of its weights are even. But since $Y^*$ has only even weights, it is not only optimal for $(\vec{0},C,W)$ but also for $(\vec{0},D,W)$. Since all the quantities are even here, then it is optimal iff $Y^*/2$ is optimal for $(\vec{0}/2,D/2,W-1) = (\vec{0},C/2,W-1)$, which proves the correctness of the recursive step.
\end{proof}

\begin{theorem}
\label{theorem_runtime_one_sided}
Assuming $W$-bit word operations take $O(1)$ time, Problem~\ref{problem_tcam_approx_by_error} can be solved in $O(k \cdot \lg W)$ time and Problem~\ref{problem_tcam_approx_by_rules} can be solved in $\runtimeLiftingOnesided{}$ time, for both $\errLinfPos$ and $\errLinfPosRel$.
\end{theorem}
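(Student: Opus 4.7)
The plan is to mirror the structure of the proof of Theorem~\ref{theorem_runtime_l_infinity}: invoke Theorem~\ref{theorem_equivalent_rule_bound_error_bound1} to reduce the rule-bounded problem to $W$ calls of the error-bounded problem (together with a complexity check), and then bound $T_A$ and $T_C$ separately. The value $T_C = \runtimeTc{}$ is unchanged from before, since it only concerns deciding $\lambda(P) \le n$ via Bit Matcher or an early-terminated Niagara, so the entire task boils down to bounding $T_A$, the cost of one call to Problem~\ref{problem_tcam_approx_by_error} under $\errLinfPos$ or $\errLinfPosRel$.

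First I would note that Lemma~\ref{lemma_reduction_to_one_sided_lifting} turns $(P,e)$ into a lifting instance $(\vec 0, C, W)$ in $O(k)$ time, since each $c_i$ is a constant-time arithmetic expression in $p_i$ and $e$ (we use the assumption that $W$-bit word operations take $O(1)$ time). Then I would analyze Algorithm~\ref{alg_lift_one_sided_dualversion} in its iterative form. The only nontrivial cost is locating the smallest $m \ge 1$ for which $\sum_i \floor{c_i/2^m} \le 2^{W-m}$; every subsequent step just evaluates $X^*$, $C^*$ coordinate-wise and invokes Algorithm~\ref{alg_lift_max_capacity_1}, which by the analysis in the proof of Theorem~\ref{theorem_runtime_l_infinity} runs in $O(k)$ time.

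The main obstacle, and the step I would spend the most care on, is justifying that $m$ can be found in $O(\lg W)$ probes rather than $O(W)$ probes. The key observation is that the predicate is monotone in $m$: using $\floor{c_i/2^{m+1}} \le \tfrac{1}{2}\floor{c_i/2^m}$ coordinate-wise, one gets $\sum_i \floor{c_i/2^{m+1}} \le \tfrac{1}{2}\sum_i \floor{c_i/2^m} \le \tfrac{1}{2}\cdot 2^{W-m} = 2^{W-m-1}$ whenever the condition holds at $m$, so once it holds it continues to hold. Moreover, after the early-exit handling $c_i \ge 2^W$, we have $c_i < 2^W$ for every $i$, which guarantees the condition holds at $m = W$; hence a standard binary search over $m \in \{1,\dots,W\}$ is well-defined and converges in $O(\lg W)$ iterations, each of which evaluates the two sides of the inequality in $O(k)$ time. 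This yields $T_A = O(k) + O(k \lg W) + O(k) = O(k \lg W)$, proving the claimed bound for Problem~\ref{problem_tcam_approx_by_error}.

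Finally, plugging $T_A = O(k \lg W)$ and $T_C = \runtimeTc{}$ into Theorem~\ref{theorem_equivalent_rule_bound_error_bound1} gives $O(W \cdot (T_A + T_C)) = \bigOnotation{W(k \cdot \lg W + \justTc{})} = \runtimeLiftingOnesided{}$ for Problem~\ref{problem_tcam_approx_by_rules}. For $\errLinfPosRel$ the same bound applies because Theorem~\ref{theorem_equivalent_rule_bound_error_bound1} already argues that the finer $\epsilon = 2^{-2W}$ resolution still fits within $O(W)$ binary-search iterations, so the overall asymptotics are identical for both one-sided measures.
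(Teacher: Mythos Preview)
Your proposal is correct and follows essentially the same approach as the paper's proof: both invoke Theorem~\ref{theorem_equivalent_rule_bound_error_bound1}, reuse $T_C = \runtimeTc{}$ from Theorem~\ref{theorem_runtime_l_infinity}, and bound $T_A = O(k \lg W)$ by arguing that the threshold $m$ in the iterative version of Algorithm~\ref{alg_lift_one_sided_dualversion} can be binary-searched thanks to monotonicity of the predicate $\sum_i \lfloor c_i/2^m\rfloor \le 2^{W-m}$. Your write-up is in fact slightly more explicit than the paper's, spelling out the inequality $\lfloor c_i/2^{m+1}\rfloor \le \tfrac{1}{2}\lfloor c_i/2^m\rfloor$ and noting that the early-exit guarantees the predicate holds at $m=W$, but the argument is the same.
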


\begin{proof}
As in the proof of Theorem~\ref{theorem_runtime_l_infinity}, we just need to plug-in values for $T_A$ and $T_C$ into the bound of Theorem~\ref{theorem_equivalent_rule_bound_error_bound1}. $T_C = \runtimeTc{}$ as argued in Theorem~\ref{theorem_runtime_l_infinity}.

Although the recursive version of Algorithm~\ref{alg_lift_one_sided_dualversion} may require $O(W)$ recursive calls, its iterative version allows to binary-search for the value of $m$, requiring only $O(\lg W)$ iterations, each consisting of $O(k)$ division and sum operations. We can use binary search since the condition is monotone: $\sum_{i=1}^{k}{\floor{C_i/2^m}}$ decreases at least as fast as $2^{W-m}$ due to potential losses by the integer division.
For the correct $m$, executing Algorithm \ref{alg_lift_max_capacity_1} takes another $O(k)$ time as explained in Theorem~\ref{theorem_runtime_l_infinity}. Therefore we conclude that $T_A = O(k \cdot \lg W)$.

Observe that unlike in Theorem~\ref{theorem_runtime_l_infinity}, in this case $T_A$ is not necessarily dominated by $T_C$, and we get $O(W \cdot (T_A + T_C)) = \runtimeLiftingOnesided{}$.
\end{proof}

\section{Approximations for Non-integer Partitions}
\label{section_non_integer_partitions_approximation}

Our discussion so far assumed that a partition is specified by a vector of $k$ integers that sum to $2^W$ for some width $W$. However, in practice it could be that the desired partition is given by an arbitrary vector of positive numbers such as 1:2:3, or 0.1:0.1:0.8. We may still normalize this vector to sum to $2^W$ (and indeed we assume that this is the case in this section), but the weights are no longer integers. The natural problem that arises is a simple generalization of Problem~\ref{problem_tcam_approx_by_rules}:

\begin{problem}[Length-Bounded Approximation - Non-integer Partition]
\label{problem_tcam_approx_by_rules_non_integer}
Given a partition $P = [p_1, \ldots, p_k]$ where $\forall i: 0 < p_i \in \mathbb{R}$ and $\sum_{i=1}^{k}{p_i}=2^W$, an integer $n$, and a distance-function $D$, find a partition $P' \in \Lambda_n$ such that $D(P',P)$ is minimized, $\sum_{i=1}^{k}{p'_i}=2^W$ and $p'_i$ are non-negative integers for all $1 \le i \le k$.
\end{problem}

The obvious heuristic to solve this generalized problem is simply to round some of the input weights $p_i$ up and some down such that we get an integer partition, and then solve the problem with the rounded partition as an input. While this rounding is likely to produce small error, we can in fact solve this problem optimally with minor modifications to our algorithms, for $\errLinf$, $\errLinfPos$ and $\errLinfPosRel$. In all three cases, we require an adaptation of Theorem~\ref{theorem_equivalent_rule_bound_error_bound1}: While the claim remains true, the details of the reduction slightly change.

\textbf{First, consider $\errLinfPos$.} In this case, we need to slightly extend the reduction described in Theorem~\ref{theorem_equivalent_rule_bound_error_bound1} from Problem~\ref{problem_tcam_approx_by_rules_non_integer} to Problem~\ref{problem_tcam_approx_by_error}. The binary search on the value of the error starts with integers, and proceed as usual until we narrowed the search to an interval of length one, say $e \in (a,a+1]$ for some integer $a$. If the input partition was integer, then the $\errLinfPos$ distance is integer, and since only $a+1$ is possible we could finish the search. However, in the non-integer case, there may still be possible values within this range that produce different lifting problems. For example, if $P = [5.7,5.2,5.1]$, there is a difference between $e=6$ which results in a lifting problem with capacities of $C = [11,11,11]$, $e=5.5$ yielding $C = [11,10,10]$ and $e=5.1$ yielding $C=[10,10,10]$.

Each weight $p_i$ has a single threshold $t_i \in (a,a+1]$ such that $e \le t_i$ gives a different lifting problem than $e > t_i$. Specifically, $t_i = \ceil{p_i} - p_i + a$. We get at most $k$ additional critical values in the range $(a,a+1]$, and we continue the binary search with additional $O(\lg k)$ steps over these thresholds. This results in a total running time of $O((W+\lg k) \cdot (T_A + T_C))$ compared to $\runtimeReductionTaTc{}$ in Theorem~\ref{theorem_equivalent_rule_bound_error_bound1}. Since $k \le 2^W$ to begin with, this means we still get $\runtimeReductionTaTc{}$.

The reduction to a lifting problem is exactly according to Lemma~\ref{lemma_reduction_to_one_sided_lifting}, this is why we needed the floor function, and the lifting problem itself is agnostic to whether the original input partition was integer or not.

\textbf{Next, consider $\errLinf$.} The reduction from Problem~\ref{problem_tcam_approx_by_rules_non_integer} to Problem~\ref{problem_tcam_approx_by_error} is extended just like for $\errLinfPos$ as described above, in two phases: first we narrow-down the error to $e \in (a,a+1]$ for an integer $a$, and then we focus on the critical thresholds within this range. In the two-sided case each coordinate introduces two thresholds, $t_i = \ceil{p_i} - p_i + a$ and $t'_i = p_i - \floor{p_i} + a$. This still only adds $O(\lg k)$ iterations to the binary search. Notice that if $a > 0$ then there exists a non-negative integer $h$ such that $2^h \le e < 2^{h+1}$, and so we can apply Theorem~\ref{theorem_special_case_lifting_reduction} verbatim. We emphasize that Theorem~\ref{theorem_special_case_lifting_reduction}, and the two lemmas it relies upon (Lemma~\ref{lemma_fine_rules_unnecessary} which states that we do not require transactions smaller than $2^h$, and Lemma~\ref{lemma_spread_values} which counts the number of multiples of $2^h$ within range), do not assume that $P$ or $e$ are integer, and therefore still apply. Otherwise, $a=0$ means that $e \le 1$. In this case, for each coordinate $i$ we have at most two integers in the range $(p_i-e,p_i+e)$, simply because the length of this range is $(p_i+e)-(p_i-e) < 2$. Thus, we get a lifting problem with capacities vector $C \in \{0,1\}^k$, and solve it using Algorithm~\ref{alg_lift_max_capacity_1}.

\textbf{Last, consider $\errLinfPosRel$.} The reduction is affected because the separation between different values of distance could be smaller than $2^{-2W}$, and we address that in Remark~\ref{remark_improved_reduction_for_relative_error} below. The second step is the reduction from Problem~\ref{problem_tcam_approx_by_error} to a lifting problem is done exactly as in Lemma~\ref{lemma_reduction_to_one_sided_lifting}. Finally, the lifting problem itself has the form of $(\vec{0},C,W)$ and Algorithm~\ref{alg_lift_one_sided_dualversion} is agnostic to the the original input partition.

Note that in all three adaptations, we may have scenarios that reduce to an infeasible lifting problem, such as $P = [1.7,1.7,0.6]$ or $P = [1.3,1.3,1.4]$ with $e = 0.5$ for $\errLinf$, or do not reduce at all, such as $P = [1.7,1.7,0.6]$ with $e = 0.35$ for $\errLinf$. It means that the bounding error is too small, and we interpret these cases in the binary search the same as ``too many rules are required''.

\begin{remark}[Robust Proof of Theorem~\ref{theorem_equivalent_rule_bound_error_bound1} for $\errLinfPosRel$]
\label{remark_improved_reduction_for_relative_error}
First, observe that there are at most $k \cdot 2^W$ possible values in the range of $\errLinfPosRel(\cdot,P)$, because the distance is a fraction $\frac{a-b}{b} = \frac{a}{b} - 1$ where $a$ is an integer and is at most $2^W$, and $b$ is one of $p_1,\ldots,p_k$. If $p_i$ are all very different we should not expect much fewer distinct values than $k \cdot 2^W$, so the best we should aim for is a binary search with $\lg (k \cdot 2^W) = W + \lg k$ steps. This is achievable:

\begin{enumerate}[label=(\arabic*),topsep=0pt,nosep]
    \item Let $p_j$ be the largest weight. We begin the binary search over error thresholds of the form $\frac{a}{p_j}$ for integer $a$. That is, we apply binary search over the integers $a \in [p_j,2^W]$, and associate them with the error thresholds $\frac{a}{p_j} - 1$. This stage ends when the range containing the target distance is of the form $[\frac{a^*}{p_j}-1,\frac{a^* + 1}{p_j}-1]$.
    
    Because $p_j$ is the largest weight, $\forall i \ne j: \frac{1}{p_j} \le \frac{1}{p_i}$, and therefore the open segment $(\frac{a^*}{p_j}-1,\frac{a^* + 1}{p_j}-1)$ contains at most a single value $\frac{a_i}{p_i}-1$ where $a_i \in [0,2^W]$ is an integer (regardless of $p_i,p_j$ being integers or not). This means that there are at most $k-1$ values in the image of $\errLinfPos(\cdot,P)$ in $(\frac{a^*}{p_j}-1,\frac{a^* + 1}{p_j}-1)$. These values can be computed efficiently.
    
    \item Now, at every iteration we first find the median value out of the remaining candidates, which takes time linear in the number of values (we did not sort them). Then we use this median value for our binary search step. We started this stage with $O(k)$ values, so there are $O(\lg k)$ iterations until we determine the exact error threshold. Each step takes $T_A + T_C$ time, and finding the median in all steps takes $O(k)$ time.
\end{enumerate}

The total running time of the fine-grained binary-search is therefore $(W + \lg k) \cdot (T_A + T_C) + O(k)$, and since $O(k)$ is dominated by both $T_A$ and $T_C$, and $\lg k \le W$, we get total running time of $\runtimeReductionTaTc{}$.
\end{remark}

\section{Experimental Results}
\label{section_experiments}
In this section we analyze the expected error resulting from approximating a partition by a fixed number of rules, where partitions are sampled uniformly from the set of ordered-partitions with $k$ positive parts that sum to $2^W$ (ordered means $[1,3] \ne [3,1]$). That is, the number of targets, $k$, and the TCAM width, $W$, define a distribution over partitions of $2^W$ into $k$ positive parts. Given a maximum number of rules $n$, each partition can be approximated up to some error. Thus, each triplet $(n,k,W)$ defines a distribution over the error. For example, for $k=2,W=2$ the partitions are $[1,3]$,$[2,2]$,$[3,1]$. For $n=1$ the $\errLinf$ and $\errLinfPos$ errors are $1$, $2$, and $1$, respectively, and the $\errLinfPosRel$ errors are $\frac{1}{3}$,$1$ and $\frac{1}{3}$, respectively, while for $n=2$ all the errors are $0$.\footnote{As explained in Section~\ref{section_model}, the algorithms assume all addresses to be equally likely, and we will evaluate them in this way. In the non-uniform case, the resulting error may be larger than expected.}

For a given triplet $(n,k,W)$ we estimated the expectation and the standard deviation of the error as follows. We sampled 1000 partitions uniformly from all ordered-partitions of $2^W$ into $k$ parts, and used our algorithm to compute $n$ rules that induce a partition closest in $\errLinf$ to each of the sampled partitions. Then, we calculated the empirical expectation and variance of
the list of outcomes. We sampled the random ordered-partitions uniformly using the following technique: choose uniformly a subset of $k-1$ different values $b_1,\ldots,b_{k-1} \in \{1,\ldots,2^W-1\}$, and define $b_0 = 0$ and $b_k = 2^W$. Then the $i^{th}$ part is $b_{i} - b_{i-1}$, and it is positive.

In Section~\ref{section_experiments_var_n} we evaluate the dependence of the error on $n$, while fixing $k$ and $W$. Due to lack of space, the following additional experiments are provided in the supplementary material. In Sections~\ref{section_experiments_var_k} and \ref{section_experiments_var_W} we evaluate the dependence of the error on $k$, and $W$, respectively. That is, we fix $n$ and the other parameter ($W$ or $k$, respectively), and analyze how the expected error changes as a function of the third parameter. In Section~\ref{subsection_experiments_ratio_nk} we test how the error depends on $n$ when we fix $W$ and the ratio between $n$ and $k$; In Section~\ref{subsection_experiments_onesided_vs_twosided} we compare the expected $\errLinfPos$ error vs.\ the expected $\errLinf$ error; In Section~\ref{subsection_experiments_niagara_ratio} we compare the error of a heuristic that truncates Niagara's TCAM,  to the optimal error as computed by our algorithms; In Section~\ref{subsection_experiments_running_time} we measure the running times of our algorithms, and  compare it to the time it takes to compute the truncated Niagara TCAM; In Section~\ref{subsection_experiments_degeneracy} we check when we may get a degenerate approximate partition that does not assign any addresses to one or more targets; In Section~\ref{subsection_experiments_real_data_partitions} we analyze the approximation error of ``real partitions'' that we derive from real data.

In all the sections we present only the values of the expectation $\mathbb{E}$. Regarding the variance we note that in almost all our experiments the standard deviation was such that $\sigma / \mathbb{E} \le 1.2$. Exceptions occur in simulations with parameters that result in many partitions that could be represented exactly. In these simulations $\mathbb{E}$ was close to zero and consequently the ratio $\sigma / \mathbb{E}$ blows up. However, in these cases the ratio is no longer meaningful.

Table~\ref{fig_parameters} summarizes the values of $n$, $k$ and $W$ which we used in our experiments. The value of $W=32$ corresponds to the width of an IPv4 address. The value of $k$ depends on the physical setup. For instance, \cite{Niagara} used $k\in\{8,16\}$. We chose the values for $n$ such that exact representation
of a random partition by $n$ rules is unlikely (or impossible), in order to have a nonzero approximation error to analyze.

The implementations of our algorithms for computing the error, as well as an implementation of sampling random ordered partitions and other utilities, are provided in the supplementary material. See it for more details.

\begin{table}[ht]
    \begin{center}
        \begin{tabular}{|c||c|c|c|} 
        \hline
        Figure & $W$ (\#bits) & $k$ (\#targets) & $n$ (\#max rules) \\ 
        \hline
        Fig.~\ref{figure_var_n} & 32 & 5,10,11,20 & [1,70] \\
        \hline
        Fig.~\ref{figure_var_k} {*} & 32 & [4,100] & 25,50,100  \\
        \hline
        Fig.~\ref{figure_var_W} {*} & [10,50] & 10 & 25,50 \\
        \hline
        Fig.~\ref{figure_var_ratio_nk} {*} & 32 & [4,100] & $\frac{ck}{2}, c\in[1,11]$ \\
        \hline
        Fig.~\ref{figure_var_ratio_nk_large} {*} & 32 & $1000x, x\in[1,20]$ & $ck, c \in [1,5]$ \\
        \hline
        Fig.~\ref{figure_oneisded_two_sided_ratio} {*} & 32 & 10 & [10,65] \\
        \hline
        Fig.~\ref{figure_ratio_niagara_all} {*} & 16,32 & 16 & [1,100] \\
        \hline
        Fig.~\ref{figure_running_time_all} {*} & 16,32 & 16 & [1,100] \\
        \hline
        Fig.~\ref{figure_degeneracy} {*} & 32 & $5x, x\in[1,20]$ & $5y, y\in[1,120]$ \\
        \hline
        \end{tabular}
    \end{center}
    \caption{Summary of all simulations' parameters. A universe of $2^W$ addresses is partitioned to $k$ targets. We approximate the partition with $n$ TCAM rules. Items marked with * are in the supplementary material.} \label{fig_parameters}
\end{table}

\subsection{Error as a function of \texorpdfstring{$n$}{n} available Rules}
\label{section_experiments_var_n}
In this section we show the expected $\errLinf$ approximation error, as a function of the number $n$ of TCAM rules. 
The expected error drops exponentially with $n$. This happens since we need a similar number of rules to represent each bit-level. So adding a constant number of rules allows the rules to represent one additional bit-level which in turn reduces the error by a factor of two.
Fig.~\ref{figure_var_n} shows this data in logarithmic scale, for fixed values $W=32$ and $k=5,10,11,20$. The error decreases slower for larger $k$, because more targets require more rules to achieve the same error.

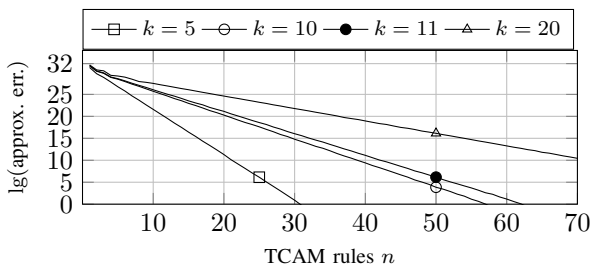
\begin{figure}[t]
	\centering
    \begin{tikzpicture}
        \begin{axis}[
            width=0.45\textwidth,
    		height=0.2\textwidth, 
    		xmin=0,
    		xmax=70,
    		xtick={10,20,30,40,50,60,70,80,90},
    		xlabel=TCAM rules $n$,
    		ymin=0,
            ymax=35,
            ytick={-10,0,5,10,15,20,25,32},
            ylabel=$\lg(\text{approx. err.})$,
            legend style={at={(0.0,1.27)}, anchor=north west,font=\footnotesize,legend columns=-1,},
    		label style={font=\footnotesize},
    		grid=both
        ]
        \addplot [black,mark=square] coordinates{ (25, 6.137) }; \addlegendentry{$k=5$}
        \addplot [black,mark=o] coordinates{ (50, 3.854) }; \addlegendentry{$k=10$}
        \addplot [black,mark=*] coordinates{ (50, 6.158) }; \addlegendentry{$k=11$}
        \addplot [black,mark=triangle] coordinates{ (50, 16.135) }; \addlegendentry{$k=20$}
        
        \addplot [black,mark=.] coordinates{
           (1, 31.126) (2, 29.702) (3, 28.849) (4, 27.817) (5, 26.730) (6, 25.765) (7, 24.648) (8, 23.664) (9, 22.601) (10, 21.583) (11, 20.553) (12, 19.531) (13, 18.460) (14, 17.481) (15, 16.404) (16, 15.420) (17, 14.344) (18, 13.367) (19, 12.352) (20, 11.300) (21, 10.138) (22, 9.136) (23, 8.137) (24, 7.103) (25, 6.137) (26, 5.068) (27, 4.041) (28, 3.015) (29, 1.956) (30, 0.902) (31, -0.265) (32, -1.466) (33, -3.540) (34, -5.573) (35, -6.381)
        }; 
        
        \addplot [black,mark=.] coordinates{
            (1, 31.505) (2, 30.260) (3, 29.788) (4, 28.868) (5, 28.452) (6, 27.916) (7, 27.332) (8, 26.824) (9, 26.261) (10, 25.703) (11, 25.161) (12, 24.628) (13, 24.040) (14, 23.552) (15, 22.950) (16, 22.461) (17, 21.905) (18, 21.311) (19, 20.829) (20, 20.284) (21, 19.730) (22, 19.147) (23, 18.600) (24, 18.131) (25, 17.612) (26, 17.034) (27, 16.490) (28, 15.865) (29, 15.398) (30, 14.786) (31, 14.261) (32, 13.727) (33, 13.159) (34, 12.733) (35, 12.161) (36, 11.536) (37, 10.984) (38, 10.486) (39, 9.926) (40, 9.401) (41, 8.804) (42, 8.314) (43, 7.779) (44, 7.231) (45, 6.679) (46, 6.160) (47, 5.515) (48, 5.041) (49, 4.516) (50, 3.854) (51, 3.425) (52, 2.897) (53, 2.288) (54, 1.835) (55, 1.139) (56, 0.647) (57, 0.014) (58, -0.732) (59, -1.531) (60, -2.322) (61, -3.120)
        }; 
     		
        \addplot [black,mark=.] coordinates{
           (1, 31.540) (2, 30.321) (3, 29.890) (4, 28.953) (5, 28.570) (6, 28.081) (7, 27.536) (8, 27.053) (9, 26.576) (10, 26.048) (11, 25.537) (12, 25.039) (13, 24.594) (14, 24.079) (15, 23.523) (16, 23.043) (17, 22.620) (18, 22.025) (19, 21.571) (20, 21.137) (21, 20.567) (22, 20.076) (23, 19.574) (24, 19.057) (25, 18.583) (26, 18.143) (27, 17.565) (28, 17.100) (29, 16.529) (30, 16.044) (31, 15.596) (32, 15.058) (33, 14.586) (34, 14.146) (35, 13.614) (36, 13.114) (37, 12.592) (38, 12.130) (39, 11.596) (40, 11.127) (41, 10.586) (42, 10.051) (43, 9.635) (44, 9.134) (45, 8.623) (46, 8.140) (47, 7.570) (48, 7.154) (49, 6.649) (50, 6.158) (51, 5.588) (52, 5.089) (53, 4.635) (54, 4.159) (55, 3.702) (56, 3.177) (57, 2.600) (58, 2.260) (59, 1.646) (60, 1.169) (61, 0.602) (62, 0.079) (63, -0.639) (64, -1.311) (65, -2.035) (66, -2.806) (67, -4.059) (68, -5.718) (69, -6.796) (70, -6.966)
        }; 
     	
     	\addplot [black,mark=.] coordinates{
           (1, 31.714) (2, 30.566) (3, 30.360) (4, 29.336) (5, 29.179) (6, 28.925) (7, 28.456) (8, 27.920) (9, 27.741) (10, 27.501) (11, 27.194) (12, 26.915) (13, 26.619) (14, 26.322) (15, 26.060) (16, 25.764) (17, 25.465) (18, 25.185) (19, 24.888) (20, 24.616) (21, 24.327) (22, 24.064) (23, 23.774) (24, 23.495) (25, 23.198) (26, 22.934) (27, 22.617) (28, 22.353) (29, 22.051) (30, 21.774) (31, 21.505) (32, 21.232) (33, 20.931) (34, 20.655) (35, 20.383) (36, 20.089) (37, 19.735) (38, 19.496) (39, 19.266) (40, 18.953) (41, 18.647) (42, 18.350) (43, 18.146) (44, 17.802) (45, 17.484) (46, 17.280) (47, 17.001) (48, 16.651) (49, 16.411) (50, 16.135) (51, 15.828) (52, 15.567) (53, 15.275) (54, 14.994) (55, 14.703) (56, 14.405) (57, 14.138) (58, 13.854) (59, 13.567) (60, 13.271) (61, 12.971) (62, 12.718) (63, 12.408) (64, 12.189) (65, 11.810) (66, 11.616) (67, 11.312) (68, 10.995) (69, 10.736) (70, 10.443) (71, 10.154) (72, 9.868) (73, 9.557) (74, 9.307) (75, 9.043) (76, 8.762) (77, 8.495) (78, 8.188) (79, 7.904) (80, 7.589) (81, 7.340) (82, 7.029) (83, 6.711) (84, 6.459) (85, 6.244) (86, 5.871) (87, 5.643) (88, 5.402) (89, 5.085) (90, 4.781) (91, 4.561) (92, 4.189) (93, 3.964) (94, 3.751) (95, 3.395) (96, 3.145) (97, 2.850) (98, 2.472) (99, 2.271) (100, 1.955)
        }; 
     	
        \end{axis}
    \end{tikzpicture}
    \caption{\looseness=-1 Expected $\errLinf$ approximation error (log-scale) as a function of the number of TCAM rules $n \in [1,70]$, for address-width $W=32$ and $k=5,10,11,20$ targets.\label{figure_var_n}}
\end{figure}

\section{Related Work}
\label{section_related_work}

\textbf{Matching-based Implementations:}
\looseness=-1
The work of \cite{AccurateExp} showed that for two possible targets, the complexity of a partition $[x,2^W-x]$  is exactly $\min(\phi(x), \phi(2^W-x)) + 1$ rules, where $\phi(y)$  is the number of powers  in a signed bit representation without adjacent powers with a non-zero coefficient~\cite{Encyclopedia}.  An earlier work \cite{WangBR11} considered only restricted TCAM encodings in which rules are \emph{disjoint}. For instance, the partition $P=[4,3,1]$ for $W=3$ is implemented with the four rules $(\textsc{0**} \to 1, \textsc{10*} \to 2, \textsc{110} \to 2, \textsc{111} \to 3)$. Since TCAMs allow overlapping rules and resolve overlaps by ordering the rules, this early approach does not take full advantage of them. For example $P$ can also be implemented by prioritizing longer prefix rules as $(\textsc{0**} \to 1, \textsc{111} \to 3, \textsc{1**} \to 2)$.

\textbf{Hashing-based Implementations:}
Hash-based solutions for load-balancing use an array,
each of its cells contains a target. The fraction of the cells containing a particular target determines the fraction of the addresses that this target gets. 
This solution is also known as WCMP~\cite{WCMP, Zegura} or as 
ECMP~\cite{RFC2992} when traffic is split equally. \cite{WuTang} studies the relation between the size of the array and how good it approximates a desired distribution.
While the above works studied a fixed output distribution, in a dynamic scenario mapping has to be updated following a change in the required  distribution. ~\cite{Chao08, Chim, KandulaKSB07} considered such updates for load balancing over multiple paths. They suggested update schemes that reduce transient negative impact of packet reordering.
A recent approach~\cite{DASH_alg} refrains from memory blowup by comparing the hash to range-boundaries. Since the hash is tested sequentially against each range, it restricts the total number of load-balancing targets.

\textbf{Partitions vs. Functions:} This paper studies efficient representations of partitions. A \emph{partition} specifies the number of addresses that have to be mapped to each possible target but leaves the freedom to choose these addresses. In contrast a \emph{function} specifies exactly the target of each address. Note that there may still be multiple ways to implement a function with a TCAM. Finding the smallest list of prefix rules that realizes a given function can be done in polynomial time with dynamic programming~\cite{DravesKSZ99, suri03}.  When we are not restricted to prefix rules the problem is  NP-hard~\cite{McGeerY09}. The particular family of ``range functions'' where the preimage of each target is an interval was carefully studied due to its popularity in packet classifiers for access control~\cite{Gray, ORange}. Going back to implementing partitions, ~\cite{AccurateExp} proved that any partition to two targets has an optimal realization as a range function.

\section{Conclusions and Future Work}
\label{section_conclusions}
In this paper we gave efficient algorithms to find a TCAM with
$n$ rules that induces a partition that best approximates a
given partition in $\errLinf$, its one-sided variant \errLinfPos, and the relative one-sided variant $\errLinfPosRel$. 
A relative distance measure is desired when the deviation could increase with the load. An absolute distance measure is appropriate when we try to keep all deviations below the same fixed threshold. Finding the closest partition with respect to the two sided relative measure $\errLinfRel$ is left as an open problem.

We did so by formalizing a lifting problem, and solving some special cases of it. We observed experimentally that truncating an optimal TCAM of the exact partition gives a good approximate partition. We conjecture that it is never off by a factor of more than $2$ than the closest partition we can get with $n$ rules. We also observed experimentally that the \errLinfPos and $\errLinf$ errors are similar on average, and derived empirically an expression of the error as a function of $n$, $k$, and $W$.

An interesting direction for future research is to consider
additional distances. For example, we can prove that a truncated Niagara sequence gives a closest partition to the target partition in $\errLsum$ distance. The two-sided maximum relative error is also interesting to study. In the bounded error version of this problem we search in a neighbourhood which is a hyper-box instead of a hyper-cube. The lifting formalization can come handy here: we will get different capacities for different coordinates and will need to solve more general lifting problems than the ones we solved here (Theorem~\ref{theorem_special_case_lifting_reduction} and Lemma~\ref{lemma_reduction_to_one_sided_lifting}).
Other ``more-practical'' interesting questions would be to solve the same problems for non-uniform address-space (when each address may have a different weight), or consider how to update the tables per changing demand on-the-fly, while incurring minimal impact.

\section{Acknowledgment}
The work of Haim Kaplan and Yaniv Sadeh was partially supported by Israel Science Foundation (ISF) grant numbers 1841-14 and 1595-19, German Science Foundation (GIF) grant number 1367 and the Blavatnik research fund at Tel Aviv University. The work of Ori Rottenstreich was partially supported by the Taub Family Foundation as well as by the Technion Hiroshi Fujiwara Cyber Security Research Center and the Israel National Cyber Directorate, by Alon fellowship, by German-Israeli Foundation (GIF) Young Scientists Program and by the Gordon Fund for System Engineering.

\bibliographystyle{IEEEtran}
\bibliography{IEEEabrv, reference}

\vspace{-0.4in}
\begin{IEEEbiography}[{\includegraphics[width=1.15in,height=1.42in,clip,keepaspectratio]{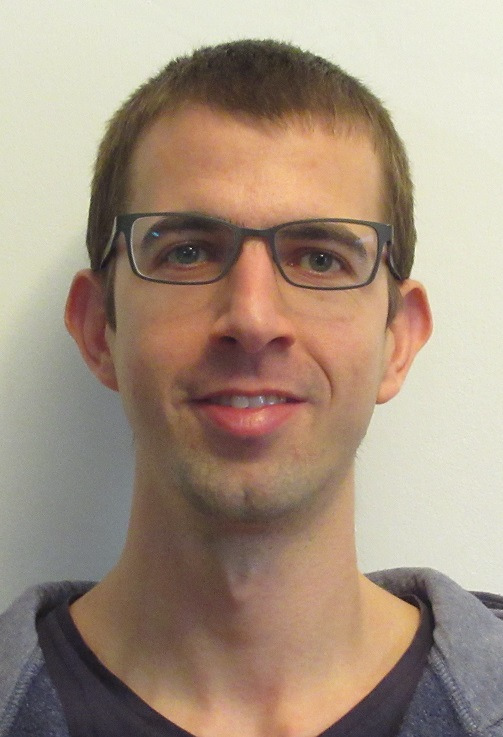}}]
{Yaniv Sadeh} is a PhD student in Computer Science at Tel Aviv University. He received his BSc in Mathematics and Computer Science from the Open University of Israel in 2017 and his MSc in Computer Science from Tel Aviv University in 2021.
\end{IEEEbiography}
\vspace{-0.4in}
\begin{IEEEbiography}[{\includegraphics[width=1.15in,height=1.42in,clip,keepaspectratio]{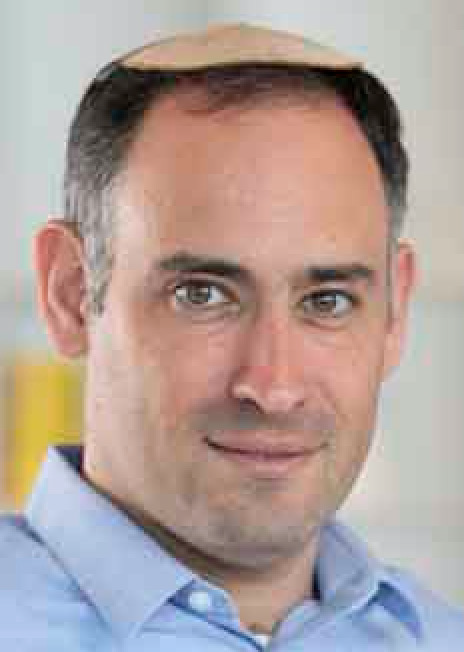}}]
{Ori Rottenstreich} is an assistant professor at the department of Computer Science and the department of Electrical Engineering of the Technion, Haifa, Israel.  In 2015-2017 he was a Postdoctoral Research Fellow at Princeton university. Earlier, he received the BSc in Computer Engineering (summa cum laude), and PhD degree from the Technion in 2008 and 2014, respectively. 
\end{IEEEbiography}
\vspace{-0.4in}
\begin{IEEEbiography}[{\includegraphics[width=1.10in,height=1.42in,clip,keepaspectratio]{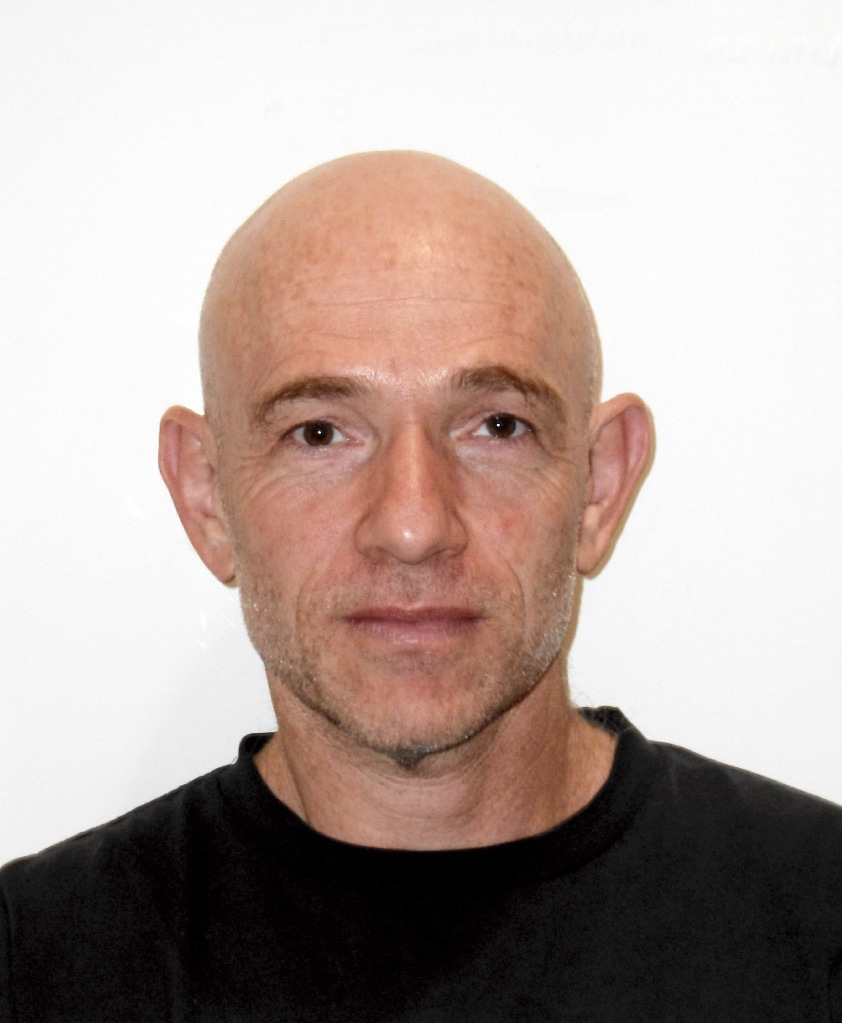}}]
{Haim Kaplan} received his PhD degree from Princeton University at 1997. He was a member of technical stuff at AT\&T research from 1996 to 1999. Since 1999 he is a Professor in the School of Computer Science at Tel Aviv University. His research interests are design and analysis of algorithms and data structures.
\end{IEEEbiography}

{
\newpage 
\setcounter{page}{1}
\section*{Optimal Weighted Load Balancing in TCAMs - SUPPLEMENTARY MATERIALS}


\section{Reviewing Bit Matcher and Niagara}
\label{section_review_bitmatcher_and_niagara}

In this Section we provide pseudo-code (see Algorithm~\ref{algorithms_bitmatcher_and_niagara}) and a short description of the  algorithms Bit Matcher~\cite{BitMatcher} and Niagara~\cite{Niagara}. This description is based on~\cite{BitMatcher}.

Both algorithms are given a list of weights $x_1,\ldots,x_k$ that sum to $2^W$ as input, and compute a sequence of transactions that can be mapped to a TCAM. Bit Matcher can be implemented in $O(Wk)$  time, and Niagara can be implemented in $O(Wk\log k)$  time.

The core idea of Bit Matcher is that any odd weight of the partition must participate in a transaction of size $1$. After performing these transactions all weights whose second least significant bit is $1$ must participate in a transaction of size $2$  and so on. The order $<_{lex}$ (Definition~\ref{definition_lexicorder}) helps to identify which transactions to make at lower levels such that there will be fewer $1$-bits at higher levels. Loosely speaking,
we choose transactions that
 cancel as many $1$-bits as possible in the binary representation of the weights, using the carry caused by each one of these transaction.
 
A high level overview of Niagara is as follows. 
Consider the target partition $P^0 = (x_1,\ldots,x_k)$. The algorithm maintains an implicit partition $P^1$, such that the vector $P$ in the algorithm satisfies $P = P^0 - P^1$. Initially, when the TCAM's default rule is allocated to target $i$, $P^1 = (0, \ldots, 0, 2^W, 0, \ldots, 0)$ (non-zero in index $i$). Since $P = P^0-P^1$ we have $\sum(P)=0$. Thus while $P \ne (0,\ldots,0)$ it must contain a positive and a negative difference. Let $i$ be a coordinate with maximum (positive) value in $P$ and let $j$ be a coordinate with minimum (negative) value in $P$, then Niagara refines $P$ by computing a value  $\lambda = 2^h$  such that when moving $\lambda$ from $p_i$ to $p_j$, the $L_1$ norm of $P$ following this transaction is minimized.

\begin{algorithm}[ht]
    \SetAlgoLined
    \DontPrintSemicolon
    \KwIn{
          A partition $P = [p_1, \ldots, p_k]$, each $p_i$ is a non-negative integer and $\sum_{i=1}^k p_i =2^W$.
    }
    \KwOut{
        Shortest sequence $s$ of transactions (of
        sizes which are powers of $2$), that zeroes $P$.
    }
    
    \SetKwFunction{funcBitMatcher}{BitMatcher}
    \SetKwFunction{funcNiagara}{Niagara}
    \SetKwProg{Fn}{Function}{:}{} \SetKwProg{Fn}{}{:}{} 
    
    Initialize $s$ to be an empty sequence. Then call either \funcBitMatcher{$P$} or \funcNiagara{$P$}, and \Return $s$.
    \;\;
    
    \Fn{\funcBitMatcher{$P$}}{
        \For{level $\ell = 0 \ldots W-1$} {
            \smallskip
     
            Let $A$ be the set of weights whose $\ell^{\text{th}}$ bit is $1$ (i.e. $x_i\in A$ iff bitwise $x_i\&2^\ell \ne 0$).
    
            \looseness=-1
            Let $A_h$ consists of the $|A|/2$ largest weights in $A$ in bit lexicographic order, and let $A_l$ consists of the $|A|/2$ smallest weights in $A$ in bit lexicographic order.
    
            Pair the elements of $A_h$ and $A_l$ arbitrarily and for each pair append to $s$ and apply to $P$ a transaction of value $2^\ell$ from the smallest weight to the largest weight in the pair.
        }
        Let $i$ be such that $p_i = 2^W$. Add to $s$ the transaction $[i \to_W \bot]$.
    }
    \;
    \Fn{\funcNiagara{$P$}}{
        Let $i = argmax(P)$. Subtract $2^W$ from $p_i$, and add to $s$ the transaction $[i \to_W \bot]$.
    
        \While{$P \ne (0, \ldots, 0)$} {
    	    Let $i = argmax(\Delta)$, $j = argmin(\Delta)$, and let $h$ be the largest integer minimizing $\Big(|p_i-2^h| + |p_j+2^h|\Big)$.
        
            Apply a transaction $[i\rightarrow_h j]$ to $P$ and append it to $s$.
        }
    }
    \caption{Bit Matcher and Niagara Algorithms}
    \label{algorithms_bitmatcher_and_niagara}
\end{algorithm}


\section{Supplementary Code Implementation}
\label{section_code_info}
We provide an implementation of our algorithms as supplementary material for two main purposes:
\begin{enumerate}
    \item Clarify low level details that the reader may find unclear or is curious about. For this purpose, the code includes both comments and test-cases which can be used as additional examples.
    
    \item Allowing any interested party to implement any of the algorithms described easily, in their chosen environment and language, and provide tests for verification.
\end{enumerate}

\noindent
The implementation covers the following in the paper:

\begin{enumerate}
    \item Utilities: bit-lexicographic comparison (Definition~\ref{definition_lexicorder}), partition complexity (Definition~\ref{definition_length_of_partition}), computation of the $\errLinf$, $\errLinfPos$ and $\errLinfPosRel$ distances, and sampling ordered-partitions as described in Section~\ref{section_experiments}.

    \item Lifting solvers: Algorithm~\ref{alg_lift_max_capacity_1}, Algorithm~\ref{alg_lift_capacity_1_2_3} and Algorithm~\ref{alg_lift_one_sided_dualversion}.
    
    \item Partition approximators: reductions from rule-based to error-based approximation (according to the reduction of Theorem~\ref{theorem_equivalent_rule_bound_error_bound1}), and from error-based approximation to a lifting problem (according to Lemma~\ref{theorem_special_case_lifting_reduction} and Lemma~\ref{lemma_reduction_to_one_sided_lifting}).
\end{enumerate}

\noindent
We provide python code for clarity. It allows us to deliver the logic clearly and reduce language-specific implementation details.


\section{Supplementary Experiments}
\label{section_supplementary_experiments}

\subsection{Error as a function of \texorpdfstring{$k$}{k} Targets}
\label{section_experiments_var_k}
In this section we examine the expected $\errLinf$ approximation error as a function of the number of targets $k$. There are two opposite effects to consider. First, the number of rules is fixed so the larger $k$ is, the harder it is to approximate the partition. On the other hand, the average weight in the partition is $\frac{2^W}{k}$ so the larger $k$ is, the weights get smaller and therefore the error should get smaller as well. However, since the error is defined by the maximal error of the targets rather than the average, this latter effect should be weak.
Fig.~\ref{figure_var_k} shows the results for  $W=32$ and $n=25,50,100$. For $k \le 100$, we see
that the error grows with $k$. So the increase in error due to the larger number of targets is more dominant than the decrease in error due to the decrease in the average target size.
Interestingly, the growth is approximately linear, except for small values of $k$
for which we have many zero-error partitions. For $n=25$ the growth is linear when $k \ge 20$, for $n=50$ when $k\ge 40$ and for $n=100$ when $k \ge 80$. The case $n=25$ begins to show growth slowdown, since the error approaches its maximum possible value ($W=32$, and the error already passed $2^{27}$).

\begin{figure}[t]
    \subfigure[$n=25$ Rules.] {
        \hspace*{-1.0em}  
        \begin{tikzpicture}
            \begin{axis}[
                width=0.2\textwidth,
        		height=0.2\textwidth,
        		xmin=0,
        		xmax=100,
        		xtick={0,20,40,60,80,100},
        		xlabel=Number of targets $k$,
        		ymin=0, ymax = 160,
                ytick={32,64,96,128,160},
                ylabel=$(\text{approx.\ err.}) / 2^{20}$,
                style={font=\scriptsize}, 
        		legend style={at={(0.01,0.99)}, anchor=north west,font=\footnotesize,},
        		label style={font=\footnotesize},
        		grid=both
            ]
            \addplot [black] coordinates{
            (4, 0.000) (5, 0.000) (6, 0.001) (7, 0.006) (8, 0.025) (9, 0.075) (10, 0.184) (11, 0.376) (12, 0.694) (13, 1.165) (14, 1.744) (15, 2.568) (16, 3.546) (17, 4.660) (18, 5.872) (19, 7.711) (20, 9.292) (21, 10.917) (22, 12.987) (23, 15.124) (24, 17.194) (25, 19.514) (26, 21.270) (27, 23.370) (28, 25.707) (29, 27.906) (30, 30.483) (31, 33.143) (32, 35.725) (33, 38.471) (34, 40.324) (35, 42.541) (36, 44.682) (37, 47.117) (38, 49.099) (39, 51.049) (40, 52.418) (41, 54.987) (42, 56.188) (43, 59.531) (44, 60.893) (45, 63.211) (46, 65.749) (47, 67.510) (48, 68.895) (49, 71.632) (50, 73.455) (51, 75.359) (52, 77.116) (53, 79.149) (54, 81.623) (55, 83.416) (56, 86.457) (57, 88.171) (58, 88.829) (59, 90.995) (60, 93.638) (61, 95.821) (62, 96.697) (63, 98.801) (64, 101.033) (65, 102.585) (66, 104.044) (67, 105.569) (68, 106.887) (69, 108.518) (70, 110.642) (71, 111.676) (72, 112.102) (73, 114.511) (74, 116.100) (75, 116.739) (76, 118.520) (77, 119.024) (78, 121.080) (79, 121.400) (80, 123.217) (81, 123.723) (82, 124.604) (83, 126.482) (84, 127.274) (85, 128.371) (86, 130.427) (87, 130.289) (88, 131.685) (89, 132.892) (90, 133.769) (91, 134.775) (92, 135.716) (93, 136.642) (94, 136.771) (95, 138.313) (96, 139.145) (97, 139.888) (98, 141.111) (99, 141.090) (100, 142.006)
            }; 
            
        \end{axis}
    \end{tikzpicture} 
    \label{figure_var_k_sub1}
    }
    \hspace*{-2.1em}  
    ~
    \subfigure[$n=50$ Rules.] {
        \begin{tikzpicture}
            \begin{axis}[
                width=0.2\textwidth,
        		height=0.2\textwidth,
        		xmin=0,
        		xmax=100,
        		xtick={0,20,40,60,80,100},
        		xlabel=Number of targets $k$,
        		ymin=0, ymax = 40,
                ytick={8,16,24,32,40},
                style={font=\scriptsize}, 
        		label style={font=\footnotesize},
        		grid=both
            ]
            \addplot [black] coordinates{
            (4, 0.000) (5, 0.000) (6, 0.000) (7, 0.000) (8, 0.000) (9, 0.000) (10, 0.000) (11, 0.000) (12, 0.000) (13, 0.001) (14, 0.002) (15, 0.004) (16, 0.008) (17, 0.015) (18, 0.026) (19, 0.042) (20, 0.068) (21, 0.100) (22, 0.148) (23, 0.207) (24, 0.274) (25, 0.354) (26, 0.463) (27, 0.602) (28, 0.757) (29, 0.899) (30, 1.092) (31, 1.314) (32, 1.543) (33, 1.821) (34, 2.086) (35, 2.398) (36, 2.703) (37, 3.053) (38, 3.416) (39, 3.842) (40, 4.184) (41, 4.669) (42, 5.025) (43, 5.621) (44, 6.153) (45, 6.535) (46, 7.120) (47, 7.664) (48, 8.174) (49, 8.748) (50, 9.107) (51, 9.780) (52, 10.364) (53, 10.862) (54, 11.383) (55, 12.044) (56, 12.532) (57, 13.113) (58, 13.578) (59, 14.261) (60, 14.972) (61, 15.560) (62, 16.163) (63, 16.827) (64, 17.266) (65, 17.778) (66, 18.488) (67, 19.328) (68, 19.590) (69, 20.067) (70, 20.668) (71, 21.086) (72, 21.849) (73, 22.314) (74, 22.756) (75, 23.387) (76, 23.828) (77, 24.175) (78, 24.821) (79, 25.354) (80, 25.750) (81, 26.260) (82, 26.818) (83, 26.990) (84, 27.863) (85, 28.348) (86, 28.919) (87, 29.619) (88, 29.814) (89, 30.329) (90, 31.311) (91, 31.317) (92, 32.248) (93, 32.597) (94, 33.454) (95, 33.851) (96, 34.428) (97, 35.158) (98, 35.679) (99, 36.231) (100, 36.777)
            }; 
            
        \end{axis}
    \end{tikzpicture}
    \label{figure_var_k_sub2}
    }
    \hspace*{-2.1em}  
    ~
    \subfigure[$n=100$ Rules.] {
        \begin{tikzpicture}
            \begin{axis}[
                width=0.2\textwidth,
        		height=0.2\textwidth,
        		xmin=0,
        		xmax=100,
        		xtick={0,20,40,60,80,100},
        		xlabel=Number of targets $k$,
        		ymin=0, ymax = 5,
                ytick={1,2,3,4,5},
                style={font=\scriptsize}, 
        		legend style={at={(0.01,0.99)}, anchor=north west,font=\footnotesize,},
        		label style={font=\footnotesize},
        		grid=both
            ]
            \addplot [black] coordinates{
            (4, 0.000) (5, 0.000) (6, 0.000) (7, 0.000) (8, 0.000) (9, 0.000) (10, 0.000) (11, 0.000) (12, 0.000) (13, 0.000) (14, 0.000) (15, 0.000) (16, 0.000) (17, 0.000) (18, 0.000) (19, 0.000) (20, 0.000) (21, 0.000) (22, 0.000) (23, 0.000) (24, 0.000) (25, 0.000) (26, 0.000) (27, 0.000) (28, 0.001) (29, 0.001) (30, 0.001) (31, 0.002) (32, 0.003) (33, 0.004) (34, 0.006) (35, 0.008) (36, 0.010) (37, 0.013) (38, 0.017) (39, 0.022) (40, 0.028) (41, 0.034) (42, 0.043) (43, 0.051) (44, 0.060) (45, 0.075) (46, 0.086) (47, 0.103) (48, 0.120) (49, 0.135) (50, 0.160) (51, 0.182) (52, 0.209) (53, 0.234) (54, 0.264) (55, 0.293) (56, 0.336) (57, 0.374) (58, 0.412) (59, 0.455) (60, 0.502) (61, 0.549) (62, 0.600) (63, 0.657) (64, 0.718) (65, 0.762) (66, 0.844) (67, 0.908) (68, 0.961) (69, 1.053) (70, 1.118) (71, 1.189) (72, 1.248) (73, 1.351) (74, 1.425) (75, 1.517) (76, 1.609) (77, 1.667) (78, 1.801) (79, 1.877) (80, 1.972) (81, 2.072) (82, 2.185) (83, 2.307) (84, 2.421) (85, 2.528) (86, 2.703) (87, 2.818) (88, 2.916) (89, 3.084) (90, 3.151) (91, 3.324) (92, 3.449) (93, 3.542) (94, 3.691) (95, 3.832) (96, 3.956) (97, 4.086) (98, 4.201) (99, 4.388) (100, 4.505)
            }; 
            
        \end{axis}
    \end{tikzpicture}
    \label{figure_var_k_sub3}
    }
    \caption{Expected $\errLinf$ approximation error as a function of the number of targets $k \in [4,100]$, for  $W=32$ and $n=25,50,100$ rules.
    \label{figure_var_k}}
\end{figure}
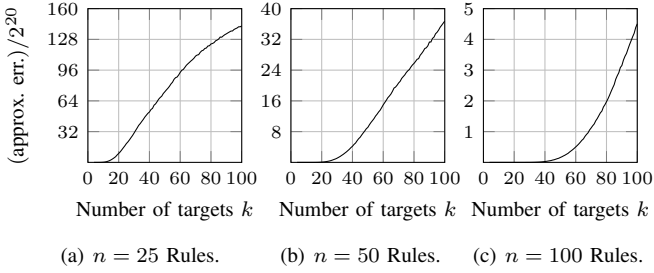

\subsection{Error as a function of Width \texorpdfstring{$W$}{W}}
\label{section_experiments_var_W}
In this section we examine the expected $\errLinf$ approximation error as a function of the address-width $W$. We expect the error to grow exponentially in proportion to $2^W$, because increasing $W$ by $1$ doubles the average weight, and therefore is likely to double the expected error because everything scales-up by a factor of $2$.
Fig.~\ref{figure_var_W} shows the results in logarithmic scale, for fixed values $k=10$ and $n=25,50$. As expected, the slopes are almost $1$: $1.0014$ (for $k=10,n=25$) and $1.0076$ (for $k=10,n=50$). The graphs are almost parallel and $n=50$ is lower, because more rules yield lower error. The slopes are not exactly $1$ since they  also depend  weakly on $n$ and $k$.

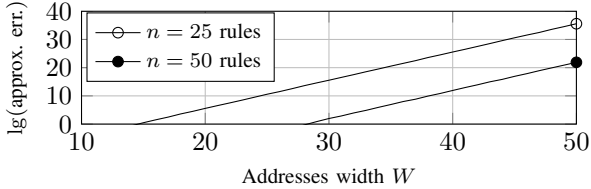
\begin{figure}[t]
	\centering
    \begin{tikzpicture}
        \begin{axis}[
            width=0.45\textwidth,
    		height=0.17\textwidth,
    		xmin=10,
    		xmax=50,
    		xtick={10,20,30,40,50},
    		xlabel=Addresses width $W$,
    		ymin=0,
            ymax = 40,
            ytick={-10,0,10,20,30,40},
            ylabel=$\lg(\text{approx. err.})$,
    		legend style={at={(0.01,0.99)}, anchor=north west,font=\footnotesize,},
    		label style={font=\footnotesize},
    		grid=both
        ]
        
        \addplot [black,mark=o] coordinates{ (50, 35.573) }; \addlegendentry{$n=25$ rules}
        \addplot [black,mark=*] coordinates{ (50, 21.892) }; \addlegendentry{$n=50$ rules}
        
         \addplot [black,mark=.] coordinates{
            (12, -6.644) (13, -2.708) (14, -0.651) (15, 0.485) (16, 1.517) (17, 2.478) (18, 3.588) (19, 4.528) (20, 5.569) (21, 6.561) (22, 7.524) (23, 8.518) (24, 9.547) (25, 10.586) (26, 11.511) (27, 12.536) (28, 13.614) (29, 14.597) (30, 15.550) (31, 16.549) (32, 17.532) (33, 18.561) (34, 19.535) (35, 20.551) (36, 21.527) (37, 22.491) (38, 23.589) (39, 24.543) (40, 25.551) (41, 26.516) (42, 27.525) (43, 28.537) (44, 29.581) (45, 30.592) (46, 31.564) (47, 32.480) (48, 33.577) (49, 34.546) (50, 35.573)
         }; 
     		
        \addplot [black,mark=.] coordinates{
            (25, -6.506) (26, -3.626) (27, -1.680) (28, -0.304) (29, 0.874) (30, 1.980) (31, 2.884) (32, 3.947) (33, 4.872) (34, 5.938) (35, 6.936) (36, 7.994) (37, 8.829) (38, 9.929) (39, 10.863) (40, 11.921) (41, 12.951) (42, 13.921) (43, 14.980) (44, 15.843) (45, 16.925) (46, 17.884) (47, 18.939) (48, 19.925) (49, 20.900) (50, 21.892)
         }; 
     		
        \end{axis}
    \end{tikzpicture}
     \caption{Expected $\errLinf$ approximation error (log-scale) as a function of the address size $W \in [10,50]$, for fixed $k=10$ targets and $n=25,50$ rules. \label{figure_var_W}}
\end{figure}

\subsection{Error vs.\ fixed Rules per Target Ratio}
\label{subsection_experiments_ratio_nk}
In this section we examine the expected $\errLinf$ approximation error as a function of the number of targets $k$, 
where the number of rules is proportional to the number of targets, that is $n=ck$ for some constant $c$.
This setting represents the scenario in which we are willing to allocate a TCAM table of size that is proportional to the number of targets.
We expect the error to decrease as $k$ (and $n$) increases. The reason is that we have freedom to use the additional rules to decrease large weights and reduce error.

Most of the simulations in this paper consider moderate values of $n$ and $k$. In this section we not only consider moderate values (Fig.~\ref{figure_var_ratio_nk}) but also consider the ratio for a very large number of targets (Fig.~\ref{figure_var_ratio_nk_large}), in the order of $1K {-} 10K$, to show that the behaviour of the error is mostly affected by the average number of rules per target rather than some absolute quantity.

Fig.~\ref{figure_var_ratio_nk} shows the data for various ratios $n/k=0.5,\ldots,5.5$ and $W=32$. When $n/k=5$ the error is typically zero, so there was no point to proceed to higher ratios ($n/k=5.5$ emphasizes that). We also plot the error with one rule ($n=1$) to provide a baseline. Although the average weight decreases proportionally to $\frac{2^W}{k}$, a single rule only deals with the maximum weight, and it is not sufficient to provide any significant reduction in the error.

The relation which we see is an inverse-power law, meaning that the error is proportional to $k^{-\alpha}$ for some power $\alpha > 0$. The graph is in log-log scale to emphasize the linearity of the relation $\lg err \approx -\alpha \cdot \lg k + const$.
Although it may look like parallel slopes, it actually steepens slightly as $n/k$ grows, so $\alpha$ depends on $n/k$. Moreover, $const$ also depends on $n/k$. Factoring this into account, the numeric relation we get from this data is close to $\lg err = -(0.16(n/k) + 0.91) \lg k + (34.24-4.92(n/k))$. The constant $34.24$ is most likely in part due to $W$ (which is $32$ in this experiment), since we know the error is exponential in $W$ (Section~\ref{section_experiments_var_W}).
Replacing $34.24$ by $W+ 2.24$ we get:
\begin{equation}
\label{equation_empiric}
	    \mathbb{E}[\errLinf \text{\ error}] = 2^{W - 4.92 \cdot (n/k) + 2.24} \cdot k^{-(0.16 (n/k) + 0.91)}
\end{equation}
The ``magic'' constants in Equation~(\ref{equation_empiric}) probably result from a more intricate dependence of the error on $n$, $k$, and $W$.

The way to interpret Equation~(\ref{equation_empiric}) is as follows: The error grows 
exponentially (base $2$) with the width $W$. For a fixed $k$, it drops exponentially with $n$. The dependence on $k$ is more complex: When $k \le n$ the ratio $n/k$ is rapidly dropping as $k$ grows, causing a quick growth in the error. When $k \ge n$, although the base grows with $k$ and the power is negative, the exponent still drops slowly as $k$ grows ($n/k \in (0,1)$). This produces a moderate error-growth that seems linear in Fig.~\ref{figure_var_k}, until the change in $n/k$ becomes so small that the growth of the base $k$ takes effect and slows the error-growth.

Using Equation~(\ref{equation_empiric}) we can derive the exponential dependence of the error as a function of $n$, per Section~\ref{section_experiments_var_n}: $\lg err = An+B$ where $A(k) = -\frac{4.92 + 0.16 \lg k}{k}$ and $B(k,W) = W+2.24- 0.91 \lg k$. The data in Section~\ref{section_experiments_var_n} for $k=5,10,11,20$ highly agrees with these expressions; $A(k)$ deviates by at most $2.2\%$, and $B(k,32)$ deviates by at most $0.5\%$.

Equation~(\ref{equation_empiric}) can also be used to revisit the graphs in Section~\ref{section_experiments_var_k}. It produces pretty accurate curves for $n=50,100$ (Fig.~\ref{figure_var_k_sub2}-\ref{figure_var_k_sub3}), less than $8\%$ deviation from the computed data. However, for $n=25$ (Fig.~\ref{figure_var_k_sub1}) the 
error predicted by Equation~(\ref{equation_empiric})
deviates from the computed data for $k\ge 2n$ (deviation goes up to $30\%$), though still qualitatively exhibiting the slowdown, when the base $k$ becomes more important than the exponent $n/k$. A visual plot of Equation~(\ref{equation_empiric}) against the graphs from Section~\ref{section_experiments_var_k} is provided in Fig.~\ref{figure_var_k_and_equation_plot}.

Fig.~\ref{figure_var_ratio_nk_large} augments the analysis for large values of $k$, in multiples of $1000$ up to $20000$, for the ratios $n/k=1,2,3,4$. Due to the large values of $k$,  we obtained each data-point by averaging the error for 100 sampled partitions (rather than 1000 as in our other experiments). It is clear that for a very large number of targets, about $4$ rules on average per target are enough for almost no error. We note that the slopes of the trend lines that were discussed in the previous paragraphs for $k \in [4,100]$ are not accurate for larger $k$. For large $k$ the slope is smaller. For instance, for $n/k=2$ the trend-line for large $k$ is $\lg(error) = -1.0635 \lg(k) + 23.434$, compared to $\lg(error) = -1.2259 \lg(k) + 24.285$ when $k \le 100$ is considered. It could be that indeed the slope is not constant and decreases with $k$ or, possibly, the fact that the parameters ($k$, $W$, $n$) of the problem are discrete makes it less suitable to represent this (discrete) function by a (continuous) line.

\begin{figure}[!t]
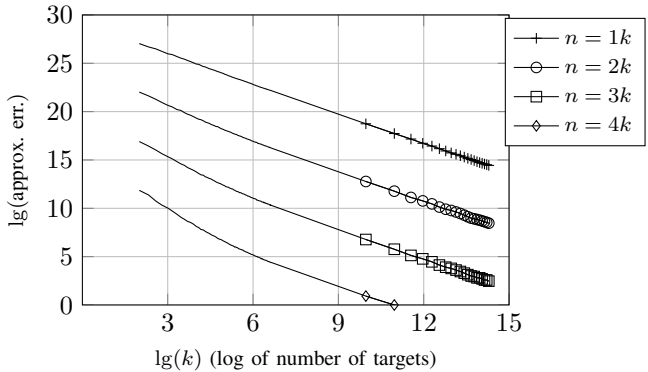

	\centering

    \caption{Expected $\errLinf$ approximation error as a function of the number of targets $k$, for large
     values of $k$, specifically, for $k=1000x$ where $x \in [1,20]$. This graph shows the expected $\errLinf$ approximation error as a function of the number of targets, in log-log-scale, for  $W=32$  and $n/k = const$ rules. For $n/k \ge 4$ there is typically no error for large $k$. (The plot is extended to $\lg k < 7$ according to Fig.~\ref{figure_var_ratio_nk}.)}
    \label{figure_var_ratio_nk_large}
\end{figure}

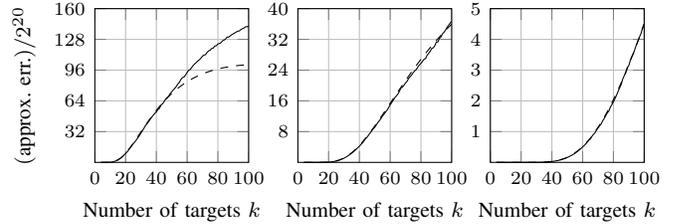
\begin{figure}[t]
    \subfigure[$n=25$ Rules.] {
        \hspace*{-1.0em}  
        \begin{tikzpicture}
            \begin{axis}[
                width=0.2\textwidth,
        		height=0.2\textwidth,
        		xmin=0,
        		xmax=100,
        		xtick={0,20,40,60,80,100},
        		xlabel=Number of targets $k$,
        		style={font=\scriptsize}, 
        		ymin=0, ymax = 160,
                ytick={32,64,96,128,160},
                ylabel=$(\text{approx.\ err.}) / 2^{20}$,
        		legend style={at={(0.01,0.99)}, anchor=north west,font=\footnotesize,},
        		label style={font=\footnotesize},
        		grid=both
            ]
            \addplot [black] coordinates{
            (4, 0.000) (5, 0.000) (6, 0.001) (7, 0.006) (8, 0.025) (9, 0.075) (10, 0.184) (11, 0.376) (12, 0.694) (13, 1.165) (14, 1.744) (15, 2.568) (16, 3.546) (17, 4.660) (18, 5.872) (19, 7.711) (20, 9.292) (21, 10.917) (22, 12.987) (23, 15.124) (24, 17.194) (25, 19.514) (26, 21.270) (27, 23.370) (28, 25.707) (29, 27.906) (30, 30.483) (31, 33.143) (32, 35.725) (33, 38.471) (34, 40.324) (35, 42.541) (36, 44.682) (37, 47.117) (38, 49.099) (39, 51.049) (40, 52.418) (41, 54.987) (42, 56.188) (43, 59.531) (44, 60.893) (45, 63.211) (46, 65.749) (47, 67.510) (48, 68.895) (49, 71.632) (50, 73.455) (51, 75.359) (52, 77.116) (53, 79.149) (54, 81.623) (55, 83.416) (56, 86.457) (57, 88.171) (58, 88.829) (59, 90.995) (60, 93.638) (61, 95.821) (62, 96.697) (63, 98.801) (64, 101.033) (65, 102.585) (66, 104.044) (67, 105.569) (68, 106.887) (69, 108.518) (70, 110.642) (71, 111.676) (72, 112.102) (73, 114.511) (74, 116.100) (75, 116.739) (76, 118.520) (77, 119.024) (78, 121.080) (79, 121.400) (80, 123.217) (81, 123.723) (82, 124.604) (83, 126.482) (84, 127.274) (85, 128.371) (86, 130.427) (87, 130.289) (88, 131.685) (89, 132.892) (90, 133.769) (91, 134.775) (92, 135.716) (93, 136.642) (94, 136.771) (95, 138.313) (96, 139.145) (97, 139.888) (98, 141.111) (99, 141.090) (100, 142.006)
            }; 
            
            \addplot [black, style=dashed] coordinates{ (4, 7.37745E-07) (5, 4.72442E-05) (6, 0.000753882) (7, 0.00542188) (8, 0.023670763) (9, 0.074061684) (10, 0.183513401) (11, 0.38378457) (12, 0.706824218) (13, 1.18067132) (14, 1.826752096) (15, 2.658680653) (16, 3.68226575) (17, 4.896307865) (18, 6.293809237) (19, 7.8633163) (20, 9.590213977) (21, 11.45787227) (22, 13.4486026) (23, 15.54441738) (24, 17.72760701) (25, 19.98115788) (26, 22.28903842) (27, 24.63637957) (28, 27.00957302) (29, 29.39630767) (30, 31.78556078) (31, 34.1675572) (32, 36.53370717) (33, 38.87653064) (34, 41.1895742) (35, 43.467325) (36, 45.70512487) (37, 47.89908689) (38, 50.04601582) (39, 52.14333332) (40, 54.1890085) (41, 56.18149403) (42, 58.11966772) (43, 60.00277953) (44, 61.8304037) (45, 63.60239573) (46, 65.3188538) (47, 66.98008445) (48, 68.58657195) (49, 70.13895118) (50, 71.63798367) (51, 73.08453635) (52, 74.479563) (53, 75.82408782) (54, 77.11919114) (55, 78.36599687) (56, 79.56566162) (57, 80.71936518) (58, 81.82830236) (59, 82.89367586) (60, 83.9166902) (61, 84.8985465) (62, 85.84043802) (63, 86.74354638) (64, 87.60903846) (65, 88.43806365) (66, 89.23175175) (67, 89.99121114) (68, 90.71752735) (69, 91.41176194) (70, 92.07495165) (71, 92.70810772) (72, 93.31221553) (73, 93.88823426) (74, 94.43709684) (75, 94.95970993) (76, 95.45695409) (77, 95.92968391) (78, 96.37872841) (79, 96.8048913) (80, 97.20895148) (81, 97.59166341) (82, 97.95375768) (83, 98.2959415) (84, 98.61889921) (85, 98.9232929) (86, 99.20976295) (87, 99.47892861) (88, 99.73138856) (89, 99.96772156) (90, 100.188487) (91, 100.3942253) (92, 100.585459) (93, 100.7626926) (94, 100.9264137) (95, 101.0770933) (96, 101.2151864) (97, 101.3411323) (98, 101.4553556) (99, 101.5582663) (100, 101.6502602) }; 
        \end{axis}
    \end{tikzpicture} 
    \label{figure_var_k_and_equation_plot_sub1}
    }
    \hspace*{-2.1em}  
    ~
    \subfigure[$n=50$ Rules.] {
        \begin{tikzpicture}
            \begin{axis}[
                width=0.2\textwidth,
        		height=0.2\textwidth,
        		xmin=0,
        		xmax=100,
        		xtick={0,20,40,60,80,100},
        		xlabel=Number of targets $k$,
        		style={font=\scriptsize}, 
        		ymin=0, ymax = 40,
                ytick={8,16,24,32,40},
        		label style={font=\footnotesize},
        		grid=both
            ]
            \addplot [black] coordinates{
            (4, 0.000) (5, 0.000) (6, 0.000) (7, 0.000) (8, 0.000) (9, 0.000) (10, 0.000) (11, 0.000) (12, 0.000) (13, 0.001) (14, 0.002) (15, 0.004) (16, 0.008) (17, 0.015) (18, 0.026) (19, 0.042) (20, 0.068) (21, 0.100) (22, 0.148) (23, 0.207) (24, 0.274) (25, 0.354) (26, 0.463) (27, 0.602) (28, 0.757) (29, 0.899) (30, 1.092) (31, 1.314) (32, 1.543) (33, 1.821) (34, 2.086) (35, 2.398) (36, 2.703) (37, 3.053) (38, 3.416) (39, 3.842) (40, 4.184) (41, 4.669) (42, 5.025) (43, 5.621) (44, 6.153) (45, 6.535) (46, 7.120) (47, 7.664) (48, 8.174) (49, 8.748) (50, 9.107) (51, 9.780) (52, 10.364) (53, 10.862) (54, 11.383) (55, 12.044) (56, 12.532) (57, 13.113) (58, 13.578) (59, 14.261) (60, 14.972) (61, 15.560) (62, 16.163) (63, 16.827) (64, 17.266) (65, 17.778) (66, 18.488) (67, 19.328) (68, 19.590) (69, 20.067) (70, 20.668) (71, 21.086) (72, 21.849) (73, 22.314) (74, 22.756) (75, 23.387) (76, 23.828) (77, 24.175) (78, 24.821) (79, 25.354) (80, 25.750) (81, 26.260) (82, 26.818) (83, 26.990) (84, 27.863) (85, 28.348) (86, 28.919) (87, 29.619) (88, 29.814) (89, 30.329) (90, 31.311) (91, 31.317) (92, 32.248) (93, 32.597) (94, 33.454) (95, 33.851) (96, 34.428) (97, 35.158) (98, 35.679) (99, 36.231) (100, 36.777)
            }; 
            
            \addplot [black, style=dashed] coordinates{ (4, 1.00076E-16) (5, 5.03202E-13) (6, 1.51351E-10) (7, 9.01226E-09) (8, 1.94059E-07) (9, 2.11556E-06) (10, 1.43012E-05) (11, 6.82377E-05) (12, 0.000250606) (13, 0.000752285) (14, 0.001927011) (15, 0.004347391) (16, 0.008845672) (17, 0.016530634) (18, 0.028777666) (19, 0.047194337) (20, 0.073567242) (21, 0.109797452) (22, 0.157831768) (23, 0.219595915) (24, 0.296934173) (25, 0.391558343) (26, 0.505007468) (27, 0.638618585) (28, 0.793507947) (29, 0.970561637) (30, 1.170434201) (31, 1.39355383) (32, 1.640132649) (33, 1.91018077) (34, 2.203522946) (35, 2.51981681) (36, 2.85857187) (37, 3.21916862) (38, 3.600877234) (39, 4.002875487) (40, 4.424265625) (41, 4.864090012) (42, 5.321345457) (43, 5.794996173) (44, 6.283985379) (45, 6.787245564) (46, 7.303707488) (47, 7.832307983) (48, 8.371996644) (49, 8.921741498) (50, 9.480533735) (51, 10.04739161) (52, 10.62136359) (53, 11.2015308) (54, 11.78700894) (55, 12.37694963) (56, 12.97054131) (57, 13.56700977) (58, 14.16561833) (59, 14.76566773) (60, 15.3664958) (61, 15.96747688) (62, 16.56802114) (63, 17.16757372) (64, 17.76561375) (65, 18.36165332) (66, 18.95523638) (67, 19.54593758) (68, 20.13336106) (69, 20.71713929) (70, 21.2969318) (71, 21.87242403) (72, 22.44332608) (73, 23.00937156) (74, 23.57031643) (75, 24.12593784) (76, 24.67603305) (77, 25.22041834) (78, 25.75892801) (79, 26.29141334) (80, 26.81774163) (81, 27.3377953) (82, 27.851471) (83, 28.35867873) (84, 28.85934106) (85, 29.35339233) (86, 29.84077794) (87, 30.32145361) (88, 30.79538474) (89, 31.26254577) (90, 31.72291957) (91, 32.17649686) (92, 32.62327567) (93, 33.06326087) (94, 33.4964636) (95, 33.92290091) (96, 34.34259523) (97, 34.75557404) (98, 35.16186944) (99, 35.5615178) (100, 35.95455943) }; 
        \end{axis}
    \end{tikzpicture}
    \label{figure_var_k_and_equation_plot_sub2}
    }
    \hspace*{-2.1em}  
    ~
    \subfigure[$n=100$ Rules.] {
        \begin{tikzpicture}
            \begin{axis}[
                width=0.2\textwidth,
        		height=0.2\textwidth,
        		xmin=0,
        		xmax=100,
        		xtick={0,20,40,60,80,100},
        		xlabel=Number of targets $k$,
        		style={font=\scriptsize}, 
        		ymin=0, ymax = 5,
                ytick={1,2,3,4,5},
        		legend style={at={(0.01,0.99)}, anchor=north west,font=\footnotesize,},
        		label style={font=\footnotesize},
        		grid=both
            ]
            \addplot [black] coordinates{
            (4, 0.000) (5, 0.000) (6, 0.000) (7, 0.000) (8, 0.000) (9, 0.000) (10, 0.000) (11, 0.000) (12, 0.000) (13, 0.000) (14, 0.000) (15, 0.000) (16, 0.000) (17, 0.000) (18, 0.000) (19, 0.000) (20, 0.000) (21, 0.000) (22, 0.000) (23, 0.000) (24, 0.000) (25, 0.000) (26, 0.000) (27, 0.000) (28, 0.001) (29, 0.001) (30, 0.001) (31, 0.002) (32, 0.003) (33, 0.004) (34, 0.006) (35, 0.008) (36, 0.010) (37, 0.013) (38, 0.017) (39, 0.022) (40, 0.028) (41, 0.034) (42, 0.043) (43, 0.051) (44, 0.060) (45, 0.075) (46, 0.086) (47, 0.103) (48, 0.120) (49, 0.135) (50, 0.160) (51, 0.182) (52, 0.209) (53, 0.234) (54, 0.264) (55, 0.293) (56, 0.336) (57, 0.374) (58, 0.412) (59, 0.455) (60, 0.502) (61, 0.549) (62, 0.600) (63, 0.657) (64, 0.718) (65, 0.762) (66, 0.844) (67, 0.908) (68, 0.961) (69, 1.053) (70, 1.118) (71, 1.189) (72, 1.248) (73, 1.351) (74, 1.425) (75, 1.517) (76, 1.609) (77, 1.667) (78, 1.801) (79, 1.877) (80, 1.972) (81, 2.072) (82, 2.185) (83, 2.307) (84, 2.421) (85, 2.528) (86, 2.703) (87, 2.818) (88, 2.916) (89, 3.084) (90, 3.151) (91, 3.324) (92, 3.449) (93, 3.542) (94, 3.691) (95, 3.832) (96, 3.956) (97, 4.086) (98, 4.201) (99, 4.388) (100, 4.505)
            }; 
            
            \addplot [black, style=dashed] coordinates{ (4, 1.84154E-36) (5, 5.70862E-29) (6, 6.10026E-24) (7, 2.49001E-20) (8, 1.3043E-17) (9, 1.72618E-15) (10, 8.68525E-14) (11, 2.15724E-12) (12, 3.15031E-11) (13, 3.05414E-10) (14, 2.14434E-09) (15, 1.1624E-08) (16, 5.10461E-08) (17, 1.88422E-07) (18, 6.01644E-07) (19, 1.70004E-06) (20, 4.32909E-06) (21, 1.00825E-05) (22, 2.17385E-05) (23, 4.38252E-05) (24, 8.33068E-05) (25, 0.000150365) (26, 0.000259245) (27, 0.000429112) (28, 0.000684886) (29, 0.001057995) (30, 0.001587016) (31, 0.00231816) (32, 0.003305593) (33, 0.004611566) (34, 0.006306351) (35, 0.008468004) (36, 0.01118195) (37, 0.014540412) (38, 0.018641718) (39, 0.023589496) (40, 0.029491793) (41, 0.036460132) (42, 0.044608551) (43, 0.054052617) (44, 0.064908464) (45, 0.077291856) (46, 0.091317288) (47, 0.107097144) (48, 0.12474092) (49, 0.144354515) (50, 0.166039592) (51, 0.189893018) (52, 0.216006384) (53, 0.244465593) (54, 0.275350524) (55, 0.308734772) (56, 0.344685447) (57, 0.383263038) (58, 0.42452134) (59, 0.468507424) (60, 0.515261665) (61, 0.564817804) (62, 0.61720306) (63, 0.67243826) (64, 0.730538013) (65, 0.791510899) (66, 0.855359683) (67, 0.922081543) (68, 0.991668316) (69, 1.06410675) (70, 1.139378766) (71, 1.217461725) (72, 1.298328696) (73, 1.381948723) (74, 1.468287101) (75, 1.557305634) (76, 1.6489629) (77, 1.743214507) (78, 1.840013342) (79, 1.939309815) (80, 2.041052094) (81, 2.145186332) (82, 2.251656883) (83, 2.360406514) (84, 2.471376604) (85, 2.584507335) (86, 2.69973787) (87, 2.817006532) (88, 2.936250957) (89, 3.057408256) (90, 3.180415155) (91, 3.305208132) (92, 3.431723542) (93, 3.559897738) (94, 3.689667182) (95, 3.820968546) (96, 3.953738807) (97, 4.087915338) (98, 4.223435985) (99, 4.360239146) (100, 4.498263838) }; 
        \end{axis}
    \end{tikzpicture}
    \label{figure_var_k_and_equation_plot_sub3}
    }
    \caption{Expected $\errLinf$ approximation error as a function of the number of targets $k \in [4,100]$, for  $W=32$ and $n=25,50,100$ rules. The dashed line plots Equation~(\ref{equation_empiric}) of Section~\ref{subsection_experiments_ratio_nk} for comparison.
    \label{figure_var_k_and_equation_plot}}
\end{figure}

\subsection{\texorpdfstring{$\errLinfPos$}{L-infinity+} error vs. \texorpdfstring{$\errLinf$}{L-infinity} error}
\label{subsection_experiments_onesided_vs_twosided}

In this section we compare the expectation of the $\errLinfPos$ error which we considered in Section~\ref{section_one_sided_metric} to
the expectation of the $\errLinf$ error, as a function of the number of rules $n$.
We note that there are partitions for which this difference is large. For example, let $c > k$ be a power of $2$, and define $P = [k-1, c-1, \ldots, c-1, 2^W - (k-2)c - 1]$, and 
 let $P' = [0, c, \ldots, c, 2^W - (k-2)c]$.  One can verify that the $\errLinf$ error of $P'$ is $k-1$, which is 
 the smallest possible with $k-1$ rules
since at least one weight of $P'$ must be $0$. The $\errLinfPos$ error of $P'$ is only $1$, which is also the smallest possible with $k-1$ rules. The example can be scaled-up by a factor of $2^d$ for any integer $d>0$, though the ratio of the errors will not change.

For a random partition the expected difference between these errors is much less noticeable. Fig.~\ref{figure_oneisded_two_sided_ratio} shows the ratio between the expected $\errLinfPos$ error and the expected $\errLinf$ error for fixed $k=10$, and $W=32$.
The $\errLinfPos$ error is, of course, never larger than the $\errLinf$ error, and we see that in fact the ratio is around $0.89$ for most values of $n$. It starts at $1.0$ since with one rule there is no difference:
The overload of the server that gets the rule is equal to the sum of the underloads of all the other servers.
As long as $n \le k$ it is still likely
for these errors to be equal, until the overloads and underloads spread over enough different weights. Even with two rules there can be a difference, for example $P=[2,3,3]$ is best approximated by $P'=[0,4,4]$ for both errors
(these partitions are obtained by substituting $k=3$, and $c=4$ in the general example at the beginning of this section), which yields $\errLinfPos$ error of $1$ compared to $\errLinf$ error of $2$. However, for $n > k$ (See Fig.~\ref{figure_oneisded_two_sided_ratio} for  $n > k=10$), there is enough freedom for the difference between the errors to show up. The ratio climbs back to  $1.0$ for $n \ge 55$ due to increasing probability to achieve exact representation of the partitions. We conclude that on average, when there are not too many or too few rules ($n$) compared to the number of targets ($k$), the $\errLinfPos$ error is approximately $11\%$ smaller than the $\errLinf$ error.

\begin{figure}[!t]
	\centering
    \begin{tikzpicture}
        \begin{axis}[
            width=0.45\textwidth,
    		height=0.17\textwidth,
    		xmin=0,
    		xmax=65,
    		xtick={10,20,30,40,50,60},
    		xlabel=$n$ TCAM rules,
    		ymin=0.85,
            ymax =1.0,
            ytick={0.85,0.9,0.95,1},
            ylabel= ratio of errors, 
    		legend style={at={(0.99,0.99)}, anchor=north east,font=\footnotesize,},
    		label style={font=\footnotesize},
    		grid=both
        ]
          
         \addplot [black,mark=o] coordinates{
            (1, 1.000000) (2, 1.000000) (3, 0.999458) (4, 0.984523) (5, 0.981701) (6, 0.952732) (7, 0.907828) (8, 0.893221) (9, 0.910348) (10, 0.900306) (11, 0.896299) (12, 0.888149) (13, 0.892425) (14, 0.876268) (15, 0.894018) (16, 0.896648) (17, 0.887059) (18, 0.896399) (19, 0.882834) (20, 0.885826) (21, 0.886754) (22, 0.894880) (23, 0.897189) (24, 0.888391) (25, 0.878349) (26, 0.883497) (27, 0.896134) (28, 0.889211) (29, 0.881904) (30, 0.890634) (31, 0.882112) (32, 0.896815) (33, 0.892864) (34, 0.881165) (35, 0.886521) (36, 0.876369) (37, 0.883481) (38, 0.882866) (39, 0.874994) (40, 0.894832) (41, 0.877543) (42, 0.894413) (43, 0.891436) (44, 0.882172) (45, 0.875610) (46, 0.883655) (47, 0.883596) (48, 0.884305) (49, 0.893497) (50, 0.906049) (51, 0.881170) (52, 0.889650) (53, 0.891278) (54, 0.905243) (55, 0.900136) (56, 0.906130) (57, 0.919802) (58, 0.928571) (59, 0.950867) (60, 0.945000) (61, 0.956522) (62, 1.000000) (63, 1.000000) (64, 1.000000) (65, 1.000000)
         };
     		
        \end{axis}
    \end{tikzpicture}
    \caption{Ratio between the expected $\errLinfPos$ error and the expected $\errLinf$ error,  as a function of $n \in [1,65]$  rules, for  $W=32$ and $k=10$ targets. When $n \le k=10$ or $n \ge 60$ the errors are close due to inability to approximate well or the ability to achieve zero-error, respectively. In between, the ratio is around $0.89$. \label{figure_oneisded_two_sided_ratio}}
\end{figure}
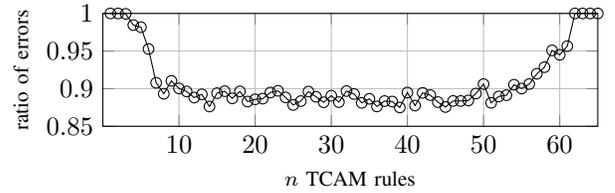

\subsection{Approximating with a prefix of Niagara}
\label{subsection_experiments_niagara_ratio}
In this subsection we evaluate the approximation error of following simple heuristic:
Compute a sequence of TCAM rules that induce the target partition exactly and use the last $n$  rules (with most don't-cares) in this sequence. 
\cite{BitMatcher} explains and shows experimentally that applying this heuristic to the sequence produced by Niagara gives better results than when applied to the sequence of Bit Matcher. For this reason, in this paper we only apply truncation to the sequence of Niagara.

Prefixes of the Niagara sequence may be sub-optimal. Fig.~\ref{figure_tcam_trie_all} shows the sub-optimality of a prefix of such a sequence. This example can be generalized by considering the partition $[2^{m+1},1,\ldots,1]$ with $k=2^{m+1}+1$ parts.
For this partition and $n=2$ rules, the ratio between the $\errLinf$ error of the truncated Niagara sequence and the optimal error is $2-\frac{1}{2^m}$. Namely, the error obtained by the truncated Niagara can be close to twice the optimal error. We conjecture that the error of such a truncation is at most twice the optimal error. Specifically, our conjecture is as follows.

\begin{conjecture}
\label{conjecture_approx2_error}
Let $P$ be a partition and let $P'$ be its optimal approximation in $\errLinf$, with $n$ TCAM rules. Choose $h$ such that $2^h \le |P-P'|_\infty < 2^{h+1}$. Then truncating a Niagara TCAM to $n$ rules induces a partition $P''$ such that $|P-P''|_\infty < 2^{h+1}$.
\end{conjecture}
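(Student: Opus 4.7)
The plan is to combine the reduction given by Lemma~\ref{lemma_fine_rules_unnecessary} with a competitive analysis of Niagara's greedy choices at each size level. First, apply Lemma~\ref{lemma_fine_rules_unnecessary} to $P'$: there exists $P^*\in B_{2^{h+1}}(P)$ whose coordinates are all multiples of $2^h$, with $\lambda(P^*)\le \lambda(P')\le n$. Since every $p^*_i$ is divisible by $2^h$, any shortest sequence realizing $P^*$ consists entirely of transactions of size $\ge 2^h$. This provides a benchmark within $L_\infty$-distance $<2^{h+1}$ of $P$ that is achievable with at most $n$ ``coarse'' rules. The goal then reduces to showing that truncating Niagara to its $n$ widest rules cannot leave any coordinate farther than $2^{h+1}$ from its target value.

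Next, I would bucket Niagara's transactions by size, letting $N_\ell$ denote the number of Niagara transactions of size exactly $2^\ell$. Truncating to the $n$ widest rules amounts to keeping all transactions of sizes $>2^m$ and a subset of those of size $2^m$, for some cutoff $m$. The key lemma to establish is that $\sum_{\ell\ge h} N_\ell \le n$; that is, Niagara uses no more ``coarse'' (size $\ge 2^h$) transactions than the benchmark $P^*$ needs. Given that lemma, the widest-$n$ truncation discards only transactions of size $<2^h$, and each coordinate's accumulated displacement from $P$ becomes a sum over the discarded fine levels, bounded by $2^{h-1}+\cdots+1 = 2^h-1$ per sign, plus at most one boundary transaction of size $2^h$ that may be discarded to make room for widest-first rules, yielding the desired $<2^{h+1}$ bound. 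This is closely analogous to the ``carry propagation'' reasoning underlying Bit Matcher, where balancing bit $\ell$ introduces at most one unit of magnitude $2^\ell$ per coordinate.

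The main obstacle is proving the budget $\sum_{\ell\ge h} N_\ell \le n$. Niagara is locally greedy and could, \emph{a priori}, commit to a coarse transaction that a globally planned algorithm would avoid. A promising route is an exchange argument: show that any coarse Niagara transaction absent from a shortest sequence for $P^*$ can be swapped for a strictly smaller transaction without increasing the overall sequence length, eventually aligning Niagara's coarse layer with $P^*$. The $L_1$-minimality of each Niagara step, together with the fact that $P^*$ lies within $L_\infty$-radius $<2^{h+1}$ of $P$, should make such exchanges cost-neutral. The factor $2$ in the conjecture is essentially tight --- the $[2^{m+1},1,\ldots,1]$ family mentioned just before the conjecture exhibits ratios approaching $2$ --- so the analysis must lose no more than a constant factor of $2$, leaving little slack and suggesting that brute-force verification on small instances would be a useful preliminary step to isolate the extremal cases that drive the argument.
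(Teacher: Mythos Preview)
This statement is a \emph{conjecture} in the paper, not a theorem: the authors explicitly leave it open (``We conjecture that the error of such a truncation is at most twice the optimal error''), and there is no proof in the paper to compare against. So your proposal is an attempt to close an open problem, and must be judged on its own merits.

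On those merits, there are two genuine gaps. First, your ``budget lemma'' $\sum_{\ell\ge h} N_\ell \le n$ is the entire crux, and you do not prove it; you only sketch an exchange argument and note that ``Niagara is locally greedy and could, \emph{a priori}, commit to a coarse transaction that a globally planned algorithm would avoid.'' That sketch does not overcome the obstacle you yourself identify. Niagara's $L_1$-greedy step chooses a size that minimizes $|p_i-2^\ell|+|p_j+2^\ell|$ for the current extremal pair, and nothing in that criterion obviously prevents it from spending more than $\lambda(P^*)$ coarse transactions before descending below level $h$. The exchange you propose (swapping a coarse Niagara transaction for a finer one without increasing length) would need a structural invariant tying Niagara's intermediate states to some partition within $B_{2^{h+1}}(P)$, and you have not supplied one.

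Second, even granting the budget lemma, your per-coordinate bound $2^{h-1}+\cdots+1=2^h-1$ on the discarded mass implicitly assumes that each coordinate participates in at most one Niagara transaction per size level. That is a property of Bit Matcher (which processes bit levels and pairs odd weights once per level), not of Niagara: Niagara can select the same coordinate as argmax or argmin repeatedly and issue several transactions of the same size involving it. So the telescoping sum you invoke does not directly apply, and the ``plus at most one boundary transaction of size $2^h$'' patch is ad hoc. A correct argument would need either a separate bound on how many fine transactions touch a single coordinate in a Niagara run, or a different route that avoids per-coordinate accounting altogether.
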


Assuming this conjecture, it is not surprising that Niagara performs extremely well compared to the optimal algorithm, on a random partition.
For instance, Fig.~\ref{figure_ratio_1sided_2sided} shows that the expectation of the ratio between the error of a truncated Niagara sequence and the optimal error as a function of $n$, when $W=16$, and $k=16$, for both $\errLinfPos$ and $\errLinf$, is very close to $1$. The plotted values were averaged over $10,000$ random ordered-partitions. When $n \ge 16$, i.e.\ we have at least 1 rule per target, the ratios do not rise above $1.00015$ (less than $0.015\%$ extra error).\footnote{For $n < 16$, the approximation error is much more sensitive to the choices of Niagara, and as a result the expectation can get as high as $1.015$ ($\times 100$ the scale of $n \ge 16$). For clarity of the graphs in Fig.~\ref{figure_ratio_1sided_2sided} and Fig.~\ref{figure_ratio_1sided_2sided_w32} we limited the $y$-axis such that values of $n<16$ might not be presented.}
In the case of $\errLinfPosRel$, the approximation is not as good. The expectation of the ratio of errors can get up to $\approx 21$, as shown in Fig.~\ref{figure_ratio_relative}. This ratio is worse in $\errLinfPosRel$ than in $\errLinf$ and $\errLinfPos$ since Niagara reduces the sum of the deviations, regardless of the identity of the deviating target. Therefore, it may not differentiate between having some error in a target with small weight while it can have instead the same error in a target with larger weight. We note that while this leads to the conclusion that when approximating the partition for $\errLinfPosRel$ the Niagara-truncation heuristic is much worse, this should be taken with a grain of salt because the relative errors are quite small, even if their ratios are large.

Fig.~\ref{figure_ratio_1sided_2sided_w32} and Fig.~\ref{figure_ratio_relative_w32} shows the results for $W=32$ and $k=16$. 
The behaviour is similar.
When $16 \le n$, the ratios of $\errLinf$ and $\errLinfPos$ do not rise above $1.0001$ (less than $0.01\%$ difference), and the ratios of $\errLinfPosRel$ form a curve similar to Fig.~\ref{figure_ratio_relative}, but more noisy. The curve peaks at about $n=70$ with a ratio of $60$.
For $n=64$ and $n=69$ (excluded from the plot) we got much higher ratios of $119.1$ and $779.6$, respectively.

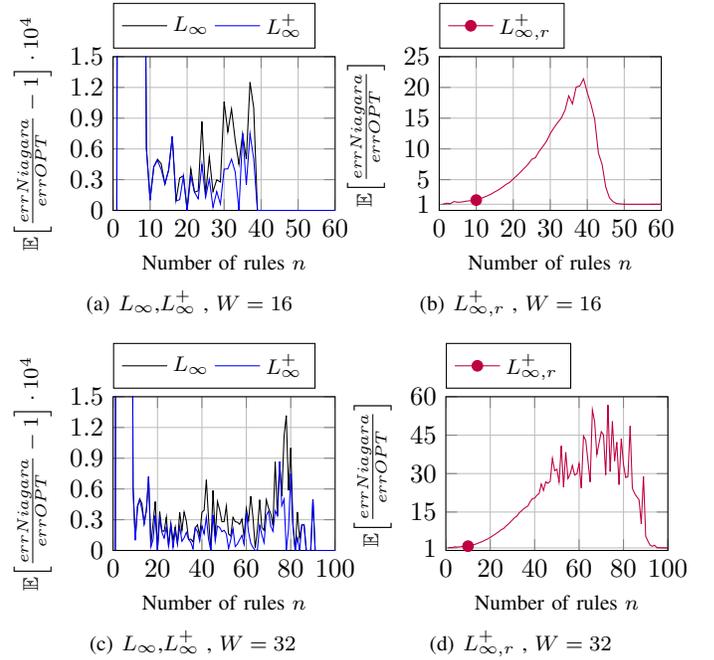
\begin{figure}[t]
    \subfigure[$\errLinf$,$\errLinfPos$ , $W=16$] {
        \hspace*{-1.0em}  
        \begin{tikzpicture}
            \begin{axis}[
                width=0.25\textwidth,
        		height=0.2\textwidth,
        		xmin=0, xmax=60,
        		xtick={0,10,20,30,40,50,60},
        		xlabel=Number of rules $n$,
        		ymin=0, ymax = 1.5,
                ytick={0,0.3,0.6,0.9,1.2,1.5},
                ylabel={$\mathbb{E} \Big [\frac{errNiagara}{errOPT}  - 1 \Big ] \cdot {10}^4$},
        		legend style={at={(0.00,1.35)}, anchor=north west,font=\footnotesize,legend columns=-1,},
        		label style={font=\footnotesize},
        		grid=both
            ]
            \addplot [black] coordinates{ (32,0.989379085) }; \addlegendentry{$\errLinf$}
            \addplot [blue] coordinates{ (32,0.5) }; \addlegendentry{$\errLinfPos$} 
            
            \addplot [black] coordinates{ (1,0) (2,16.81446118) (3,2.28862606) (4,154.3576891) (5,56.80217883) (6,23.26155748) (7,17.82473539) (8,6.738601267) (9,0.607719695) (10,0.10081556) (11,0.426579191) (12,0.499660696) (13,0.455474639) (14,0.256931535) (15,0.392360121) (16,0.721879735) (17,0.098984826) (18,0.317075452) (19,0.337564276) (20,0.007042254) (21,0.404968281) (22,0.177504837) (23,0.191018836) (24,0.866314733) (25,0.159346847) (26,0.522330693) (27,0.184659091) (28,0.298411914) (29,0.27662656) (30,1.059659091) (31,0.766414141) (32,0.989379085) (33,0.683823529) (34,0.45) (35,0.75) (36,0.5) (37,1.25) (38,1) (39,0) (40,0) (41,0) (42,0) (43,0) (44,0) (45,0) (46,0) (47,0) (48,0) (49,0) (50,0) (51,0) (52,0) (53,0) (54,0) (55,0) (56,0) (57,0) (58,0) (59,0) (60,0) (61,0) (62,0) (63,0) (64,0) (65,0) (66,0) (67,0) (68,0) (69,0) (70,0) (71,0) (72,0) (73,0) (74,0) (75,0) (76,0) (77,0) (78,0) (79,0) (80,0) (81,0) (82,0) (83,0) (84,0) (85,0) (86,0) (87,0) (88,0) (89,0) (90,0) (91,0) (92,0) (93,0) (94,0) (95,0) (96,0) (97,0) (98,0) (99,0) (100,0) };
            
            \addplot [blue] coordinates{ (1,0) (2,16.81446118) (3,2.28862606) (4,154.3576891) (5,56.80217883) (6,23.26155748) (7,17.82473539) (8,6.738601267) (9,0.607719695) (10,0.10081556) (11,0.426579191) (12,0.487002468) (13,0.399008337) (14,0.256931535) (15,0.392360121) (16,0.721879735) (17,0.092470168) (18,0.105130156) (19,0.337564276) (20,0) (21,0.332200423) (22,0.177504837) (23,0.11375) (24,0.447280642) (25,0.128096847) (26,0.304979947) (27,0.03030303) (28,0.183712121) (29,0.0625) (30,0.403409091) (31,0.40530303) (32,0.5) (33,0.375) (34,0) (35,0.75) (36,0.25) (37,0.75) (38,0.5) (39,0) (40,0) (41,0) (42,0) (43,0) (44,0) (45,0) (46,0) (47,0) (48,0) (49,0) (50,0) (51,0) (52,0) (53,0) (54,0) (55,0) (56,0) (57,0) (58,0) (59,0) (60,0) (61,0) (62,0) (63,0) (64,0) (65,0) (66,0) (67,0) (68,0) (69,0) (70,0) (71,0) (72,0) (73,0) (74,0) (75,0) (76,0) (77,0) (78,0) (79,0) (80,0) (81,0) (82,0) (83,0) (84,0) (85,0) (86,0) (87,0) (88,0) (89,0) (90,0) (91,0) (92,0) (93,0) (94,0) (95,0) (96,0) (97,0) (98,0) (99,0) (100,0) }; 
            
        \end{axis}
    \end{tikzpicture} 
    \label{figure_ratio_1sided_2sided}
    }
    \hspace*{-2.1em}  
    ~
    \subfigure[$\errLinfPosRel$ , $W=16$] {
        \begin{tikzpicture}
            \begin{axis}[
                width=0.25\textwidth,
        		height=0.2\textwidth,
        		xmin=0, xmax=60,
        		xtick={0,10,20,30,40,50,60},
        		xlabel=Number of rules $n$,
        		ymin=0, ymax = 25,
                ytick={1,5,10,15,20,25},
                ylabel={$\mathbb{E} \Big [\frac{errNiagara}{errOPT} \Big ]$},
        		legend style={at={(0.00,1.35)}, anchor=north west,font=\footnotesize,legend columns=-1,},
        		label style={font=\footnotesize},
        		grid=both
            ]
            \addplot [purple,mark=*] coordinates{ (10,1.680585683) }; \addlegendentry{$\errLinfPosRel$}
            \addplot [purple] coordinates{ (1,1) (2,1.175133409) (3,1.09787264) (4,1.445871405) (5,1.323695651) (6,1.364065373) (7,1.479845948) (8,1.556778501) (9,1.578111293) (10,1.680585683) (11,1.882460916) (12,2.119955358) (13,2.369031145) (14,2.649321035) (15,3.013036285) (16,3.335528792) (17,3.745033534) (18,4.212176787) (19,4.544360526) (20,5.11561828) (21,5.670586241) (22,6.255260377) (23,6.879805921) (24,7.497152222) (25,8.311632908) (26,8.581045961) (27,9.613406763) (28,10.40185351) (29,11.24131202) (30,12.50416239) (31,13.43521012) (32,14.27429981) (33,15.02846765) (34,16.32102287) (35,18.62867119) (36,17.32594048) (37,20.05695193) (38,20.248786) (39,21.40273218) (40,19.1568527) (41,17.26511767) (42,14.83354594) (43,9.350585875) (44,7.499807427) (45,3.779141556) (46,2.045280685) (47,1.399050802) (48,1.095376756) (49,1.059314038) (50,1.00393177) (51,1.000785228) (52,1) (53,1) (54,1) (55,1) (56,1) (57,1) (58,1) (59,1) (60,1) (61,1) (62,1) (63,1) (64,1) (65,1) (66,1) (67,1) (68,1) (69,1) (70,1) (71,1) (72,1) (73,1) (74,1) (75,1) (76,1) (77,1) (78,1) (79,1) (80,1) (81,1) (82,1) (83,1) (84,1) (85,1) (86,1) (87,1) (88,1) (89,1) (90,1) (91,1) (92,1) (93,1) (94,1) (95,1) (96,1) (97,1) (98,1) (99,1) (100,1) }; 
        \end{axis}
    \end{tikzpicture}
    \label{figure_ratio_relative}
    }
    
        \subfigure[$\errLinf$,$\errLinfPos$ , $W=32$] {
        \hspace*{-1.0em}  
        \begin{tikzpicture}
            \begin{axis}[
                width=0.25\textwidth,
        		height=0.2\textwidth,
        		xmin=0, xmax=100,
        		xtick={0,20,40,60,80,100},
        		xlabel=Number of rules $n$,
        		ymin=0, ymax = 1.5,
                ytick={0,0.3,0.6,0.9,1.2,1.5},
                ylabel={$\mathbb{E} \Big [\frac{errNiagara}{errOPT}  - 1 \Big ] \cdot {10}^4$},
        		legend style={at={(0.00,1.35)}, anchor=north west,font=\footnotesize,legend columns=-1,},
        		label style={font=\footnotesize},
        		grid=both
            ]
            \addplot [black] coordinates{ (78, 1.316176471) }; \addlegendentry{$\errLinf$}
            \addplot [blue] coordinates{ (78, 0) }; \addlegendentry{$\errLinfPos$} 
            
            \addplot [black] coordinates{ (1, 0) (2, 16.83024335) (3, 2.291326631) (4, 154.2957953) (5, 56.89059852) (6, 23.09783296) (7, 17.87218466) (8, 6.673676571) (9, 0.606859014) (10, 0.100867171) (11, 0.430582273) (12, 0.503625708) (13, 0.457641601) (14, 0.261931168) (15, 0.391989152) (16, 0.721288104) (17, 0.046039738) (18, 0.261315932) (19, 0.476529595) (20, 0.010125895) (21, 0.37767638) (22, 0.191299246) (23, 0.186775588) (24, 0.324713227) (25, 0.042712734) (26, 0.265800336) (27, 0.101370739) (28, 0.228617828) (29, 0.066929484) (30, 0.372287525) (31, 0.19551523) (32, 0.374904443) (33, 0.287694659) (34, 0.10818096) (35, 0.111055101) (36, 0.083967211) (37, 0.220137624) (38, 0.250354849) (39, 0.140993878) (40, 0.371349626) (41, 0.393106034) (42, 0.688427429) (43, 0.308995225) (44, 0) (45, 0.584679065) (46, 0.141219549) (47, 0.47826228) (48, 0.39531394) (49, 0.28077723) (50, 0.281639918) (51, 0.432719338) (52, 0.161429384) (53, 0.139755223) (54, 0.322998626) (55, 0.280536581) (56, 0.275829303) (57, 0.255733625) (58, 0.304342158) (59, 0.134745121) (60, 0.408936866) (61, 0.28894911) (62, 0.528445391) (63, 0.321366327) (64, 0.026315789) (65, 0.198339495) (66, 0.49456345) (67, 0.224592075) (68, 0.19693498) (69, 0.493137828) (70, 0.413074406) (71, 0.358400178) (72, 0.247214795) (73, 0.863450683) (74, 0.338235294) (75, 0.867647059) (76, 0.53125) (77, 1.117647059) (78, 1.316176471) (79, 0.586111111) (80, 1) (81, 0.25) (82, 0) (83, 0.375) (84, 0.125) (85, 0.25) (86, 0.25) (87, 0) (88, 0) (89, 0) (90, 0.5) (91, 0) (92, 0) (93, 0) (94, 0) (95, 0) (96, 0) (97, 0) (98, 0) (99, 0) (100, 0) };
            
            \addplot [blue] coordinates{ (1, 0) (2, 16.83024335) (3, 2.291326631) (4, 154.2957953) (5, 56.89059852) (6, 23.09783296) (7, 17.87218466) (8, 6.673676571) (9, 0.606859014) (10, 0.100867171) (11, 0.430582273) (12, 0.491105423) (13, 0.399147414) (14, 0.261931168) (15, 0.391989152) (16, 0.721288104) (17, 0.034595424) (18, 0.102460129) (19, 0.341768665) (20, 0) (21, 0.312768164) (22, 0.175794963) (23, 0.121188593) (24, 0.227309297) (25, 0.042241358) (26, 0.242263316) (27, 0.029459733) (28, 0.19195041) (29, 0.066929484) (30, 0.234005913) (31, 0.089559593) (32, 0.123328281) (33, 0.175576473) (34, 0.10818096) (35, 0.080047162) (36, 0.017351244) (37, 0.220137624) (38, 0.185049954) (39, 0.091010828) (40, 0.238575321) (41, 0.162646127) (42, 0.324723826) (43, 0.109322475) (44, 0) (45, 0.346128078) (46, 0.091794262) (47, 0.237063209) (48, 0.224460696) (49, 0.183514968) (50, 0.196115426) (51, 0.162629068) (52, 0.019055845) (53, 0.106331555) (54, 0.211260791) (55, 0.038017932) (56, 0.062264151) (57, 0.106497953) (58, 0.146041) (59, 0.059273423) (60, 0.372384444) (61, 0.182485231) (62, 0.071263728) (63, 0.02962963) (64, 0) (65, 0.005042017) (66, 0.245225701) (67, 0.151515152) (68, 0.075722859) (69, 0.340384615) (70, 0.382771376) (71, 0.235294118) (72, 0.1875) (73, 0.388368984) (74, 0.338235294) (75, 0.867647059) (76, 0.4375) (77, 0.5) (78, 0) (79, 0.211111111) (80, 0.75) (81, 0.25) (82, 0) (83, 0.125) (84, 0) (85, 0.25) (86, 0.25) (87, 0) (88, 0) (89, 0) (90, 0.5) (91, 0) (92, 0) (93, 0) (94, 0) (95, 0) (96, 0) (97, 0) (98, 0) (99, 0) (100, 0) }; 
            
        \end{axis}
    \end{tikzpicture} 
    \label{figure_ratio_1sided_2sided_w32}
    }
    \hspace*{-2.1em}  
    ~
    \subfigure[$\errLinfPosRel$ , $W=32$] {
        \begin{tikzpicture}
            \begin{axis}[
                width=0.25\textwidth,
        		height=0.2\textwidth,
        		xmin=0, xmax=100,
        		xtick={0,20,40,60,80,100},
        		xlabel=Number of rules $n$,
        		ymin=0, ymax = 60,
                ytick={1,15,30,45,60},
                ylabel={$\mathbb{E} \Big [\frac{errNiagara}{errOPT} \Big ]$},
        		legend style={at={(0.00,1.35)}, anchor=north west,font=\footnotesize,legend columns=-1,},
        		label style={font=\footnotesize},
        		grid=both
            ]
            \addplot [purple,mark=*] coordinates{ (10,1.680585683) }; \addlegendentry{$\errLinfPosRel$}
            \addplot [purple] coordinates{ (1, 1) (2, 1.175164476) (3, 1.097899161) (4, 1.445881236) (5, 1.323740684) (6, 1.36351253) (7, 1.479867646) (8, 1.556875277) (9, 1.578023944) (10, 1.68066007) (11, 1.882895476) (12, 2.121086636) (13, 2.369401995) (14, 2.650004957) (15, 3.009939943) (16, 3.330833599) (17, 3.736448557) (18, 4.200721194) (19, 4.545212224) (20, 5.113381647) (21, 5.653496631) (22, 6.207050042) (23, 6.836661955) (24, 7.417870722) (25, 8.167364103) (26, 8.51603331) (27, 9.476146536) (28, 10.15305622) (29, 11.03931887) (30, 11.94881966) (31, 12.84944513) (32, 13.3704769) (33, 14.68533011) (34, 15.38970141) (35, 16.24415257) (36, 16.8590372) (37, 17.38504957) (38, 19.12658901) (39, 20.23722953) (40, 20.41472256) (41, 21.54809782) (42, 22.19974332) (43, 25.92378207) (44, 23.44496755) (45, 26.76068775) (46, 26.21649914) (47, 26.97246857) (48, 35.93956281) (49, 31.13677335) (50, 31.69647671) (51, 26.75689406) (52, 40.82819799) (53, 24.74606973) (54, 38.30670286) (55, 28.07839508) (56, 29.87242114) (57, 33.19727138) (58, 29.42125203) (59, 29.91650249) (60, 33.8348624) (61, 24.37774657) (62, 44.68629415) (63, 42.98214063)
            (65, 26.73362518) (66, 54.8812091) (67, 50.35539076) (68, 37.44272298)
            (70, 46.38467605) (71, 45.14134299) (72, 29.48596058) (73, 56.76632559) (74, 30.78663094) (75, 50.37807742) (76, 34.9300725) (77, 42.25694647) (78, 25.67906442) (79, 43.66805864) (80, 33.56095856) (81, 28.46198605) (82, 28.82184288) (83, 48.55026991) (84, 24.25952216) (85, 22.28838939) (86, 20.95419562) (87, 19.82224672) (88, 10.44928207) (89, 28.86727057) (90, 5.565863852) (91, 3.659893351) (92, 2.056476414) (93, 1.628858106) (94, 2.041008787) (95, 1.153413757) (96, 1.098018857) (97, 1.022209649) (98, 1.019415077) (99, 1.001772014) (100, 1.000273283) }; 
        \end{axis}
    \end{tikzpicture}
    \label{figure_ratio_relative_w32}
    }

    \caption{Comparing the error of heuristic of truncating a Niagara sequence to the optimal error, for $k=16$ and $W=16$ (a),(b) or $W=32$ (c),(d). The cases of $\errLinf$ and $\errLinfPos$ are very similar in both (a) and (c), and the ratio is almost $1$ except for when the number of rules is very small. In the relative error (b),(d), the ratio is not negligibly close to $1$, and it keeps rising until $n$ is sufficiently large to allow exact representation of the partitions. The slow drop in (b),(d) is because of an increasing percentage of partitions where both errors are $0$, and we consider the ratio to be $1$ in this case. In (d) two outlier points are omitted from the graph: (69, 779.6) and (64, 119.1)}
    \label{figure_ratio_niagara_all}
\end{figure}

\subsection{Running Time}
\label{subsection_experiments_running_time}
Theorem~\ref{theorem_runtime_l_infinity} and Theorem~\ref{theorem_runtime_one_sided} state an analytic bound on the running time. To put matters in practical perspective, we implemented the algorithms in C++ and measured their actual running time. The code was compiled by Visual Studio 2019 with Ox optimization, and ran on a Windows-10 computer with 64bit core i5-9300H processor. We emphasize that we did not try to optimize the code to improve the performance, so the resulting data should be regarded as a good estimation. 

Fig.~\ref{figure_running_time_micros_w16} shows the running time of each of the three algorithms for $\errLinf$, $\errLinfPos$ and $\errLinfPosRel$, in micro-seconds, for $W=16$ and $k=16$. The times were averaged over $10,000$ random ordered-partitions, the same partitions for each of the algorithms. The running time grows approximately linearly in $n$, until $n = 40$ where it rapidly drops.
Fig.~\ref{figure_running_time_relative_w16} shows the ratio between the average running time of each of the algorithms, versus the average time of the Niagara truncation heuristic.
We get similar results for $W=32$ (with $k=16$), as shown in Fig.~\ref{figure_running_time_micros_w32} and Fig.~\ref{figure_running_time_relative_w32}: for a fixed $n$ the running time is about twice slower, and the ratio compared to Niagara is approximately doubled as well.

To understand the graphs, we note that the rapid drop at about $n=40$ for $W=16$ and $n=80$ for $W=32$ happens because these values of $n$ are large enough to allow an exact representation of most of the partitions, and no lifting or binary-search happens (recall the reduction in Theorem~\ref{theorem_equivalent_rule_bound_error_bound1}). When $n$ is too small for an exact representation, then we execute a fixed number of binary-search iterations (for example, $W$ iterations for $\errLinf$). Each such iteration consists of solving a lifting problem which is independent of $n$, and then testing the solution by running $n$ steps of Niagara.
These lifting problems cause the running time to be far from $0$ even for a very small value of $n$. The subsequent tests make it increase linearly in $n$ for $n<40$ when $W=16$ and for $n<80$ when $W=32$. For the same reasons, if we divide the running time of our algorithm by the running time of Niagara for the same $n$ (Fig.~\ref{figure_running_time_relative_w16} and Fig.~\ref{figure_running_time_relative_w32}), the ratio is large for very small $n$, but stabilizes quickly. As expected, the stable ratio is around the number of binary-search iterations, in fact it is slightly larger because in each iteration we solve a lifting problem in addition to running Niagara.

It is important to note that our algorithms are  meant to run once in a period of time when either the traffic pattern or the load-balancing topology change (e.g. link failure or new server added). Such changes do not happen in the order of milliseconds, so the fact that the computation time of our algorithms takes in the order of a millisecond means that we should not be concerned of delay which they may cause. The only concern is due to remapping of addresses, which may require graceful transitioning of connections. This issue is orthogonal to the running-time question, and out of the scope of this paper.

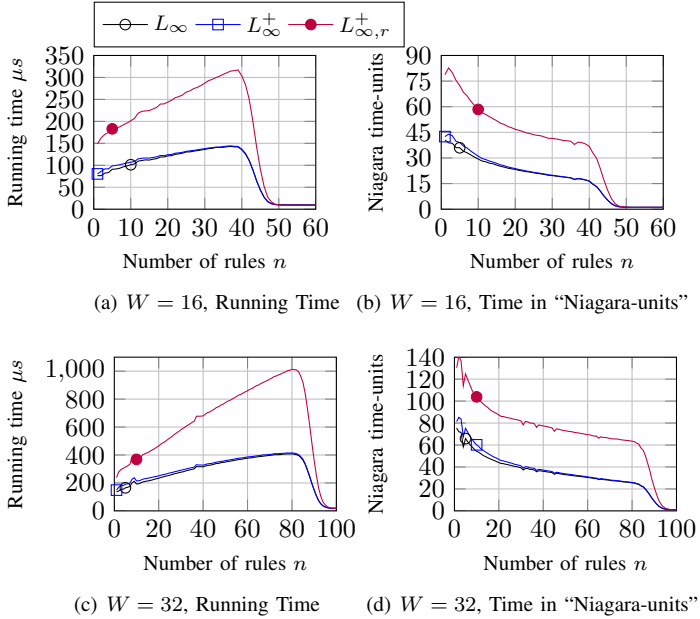
\begin{figure}[t]
    \subfigure[$W=16$, Running Time] {
        \hspace*{-1.5em}  
        \begin{tikzpicture}
            \begin{axis}[
                width=0.25\textwidth,
        		height=0.2\textwidth,
        		xmin=0, xmax=60,
        		xtick={0,10,20,30,40,50,60},
        		xlabel=Number of rules $n$,
        		ymin=0, ymax = 350,
                ytick={0,50,100,150,200,250,300,350},
                ylabel=Running time ${\mu}s$,
        		legend style={at={(0.00,1.35)}, anchor=north west,font=\footnotesize,legend columns=-1,},
        		label style={font=\footnotesize},
        		grid=both
            ]
            \addplot [black,mark=o] coordinates{ (10, 101.2432) }; \addlegendentry{$\errLinf$}
            \addplot [blue,mark=square] coordinates{ (1, 80.33169) }; \addlegendentry{$\errLinfPos$} 
            \addplot [purple,mark=*] coordinates{ (5, 182.99078) }; \addlegendentry{$\errLinfPosRel$} 
            
            \addplot [black] coordinates{ (1, 76.47261) (2, 76.29653) (3, 82.19798) (4, 83.37688) (5, 90.87292) (6, 91.53037) (7, 93.24698) (8, 96.51762) (9, 98.54793) (10, 101.2432) (11, 103.06503) (12, 109.45569) (13, 110.90822) (14, 112.29704) (15, 111.85443) (16, 113.81238) (17, 115.12773) (18, 118.81202) (19, 118.90799) (20, 121.38164) (21, 121.88423) (22, 124.74284) (23, 126.05962) (24, 127.61507) (25, 130.27323) (26, 130.54176) (27, 132.09013) (28, 133.39199) (29, 134.72207) (30, 136.15734) (31, 137.98953) (32, 138.53399) (33, 140.57383) (34, 140.64258) (35, 140.99751) (36, 142.09965) (37, 143.01682) (38, 141.91114) (39, 141.666) (40, 135.45549) (41, 125.4202) (42, 110.41623) (43, 88.96924) (44, 64.6365) (45, 43.9541) (46, 27.80583) (47, 17.3867) (48, 12.86432) (49, 10.77993) (50, 10.17731) (51, 9.97713) (52, 10.10959) (53, 9.85213) (54, 9.8682) (55, 10.03652) (56, 9.84802) (57, 9.82582) (58, 9.91262) (59, 9.73791) (60, 9.87208) (61, 9.89761) (62, 9.84588) (63, 9.75879) (64, 9.78252) (65, 9.86931) (66, 9.7726) (67, 9.87376) (68, 9.83454) (69, 9.83996) (70, 10.04253) (71, 9.72376) (72, 9.83232) (73, 10.05945) (74, 9.75966) (75, 9.84539) (76, 9.88262) (77, 10.15176) (78, 9.80105) (79, 9.79192) (80, 9.74715) (81, 9.88924) (82, 10.09338) (83, 9.83252) (84, 9.88088) (85, 9.76539) (86, 9.73948) (87, 9.80944) (88, 9.8753) (89, 9.85203) (90, 10.0982) (91, 9.81429) (92, 9.80795) (93, 9.89037) (94, 9.82327) (95, 9.86911) (96, 10.19399) (97, 9.85723) (98, 9.83723) (99, 9.80518) (100, 10.09856) };
            
            \addplot [blue] coordinates{ (1, 80.33169) (2, 86.18684) (3, 91.74168) (4, 91.93455) (5, 98.60003) (6, 99.75227) (7, 101.1638) (8, 103.34259) (9, 105.30903) (10, 106.97091) (11, 108.67537) (12, 114.6284) (13, 116.11259) (14, 116.48741) (15, 116.34678) (16, 117.80613) (17, 119.11079) (18, 122.33181) (19, 121.88421) (20, 123.98048) (21, 124.67411) (22, 127.4545) (23, 128.2983) (24, 129.61233) (25, 131.85396) (26, 132.87081) (27, 133.9316) (28, 135.28131) (29, 136.757) (30, 137.74777) (31, 139.32981) (32, 140.1199) (33, 141.39385) (34, 141.46333) (35, 142.59748) (36, 143.61403) (37, 144.1905) (38, 143.12614) (39, 142.73116) (40, 136.50446) (41, 127.07156) (42, 111.61012) (43, 89.82241) (44, 64.86775) (45, 44.34084) (46, 27.54614) (47, 16.97474) (48, 12.43558) (49, 10.37145) (50, 9.71895) (51, 9.54373) (52, 9.64554) (53, 9.36988) (54, 9.40049) (55, 9.55126) (56, 9.35716) (57, 9.3609) (58, 9.444) (59, 9.26629) (60, 9.34745) (61, 9.40387) (62, 9.36213) (63, 9.26825) (64, 9.34311) (65, 9.39266) (66, 9.2996) (67, 9.69713) (68, 9.33992) (69, 9.36284) (70, 9.56166) (71, 9.28774) (72, 9.3285) (73, 9.60434) (74, 9.25864) (75, 9.31008) (76, 9.42321) (77, 9.29923) (78, 9.3288) (79, 9.32637) (80, 9.26843) (81, 9.38669) (82, 9.61333) (83, 9.34444) (84, 9.4079) (85, 9.32933) (86, 9.65958) (87, 9.30317) (88, 9.43503) (89, 9.4368) (90, 9.31633) (91, 9.33709) (92, 9.33487) (93, 9.46744) (94, 9.33889) (95, 9.39934) (96, 9.39257) (97, 9.35473) (98, 9.3113) (99, 9.3643) (100, 9.57866) }; 
            
            \addplot [purple] coordinates{ (1, 148.85042) (2, 162.37942) (3, 170.38004) (4, 174.46248) (5, 182.99078) (6, 184.48581) (7, 187.43386) (8, 192.13722) (9, 196.16432) (10, 201.38137) (11, 206.57956) (12, 218.64197) (13, 222.78024) (14, 223.82574) (15, 224.55309) (16, 228.16538) (17, 231.89434) (18, 239.3537) (19, 239.82676) (20, 243.76086) (21, 247.75107) (22, 253.75717) (23, 256.49019) (24, 260.1583) (25, 266.48172) (26, 269.61051) (27, 274.51824) (28, 277.41204) (29, 281.81643) (30, 287.00059) (31, 292.6331) (32, 295.6521) (33, 299.46579) (34, 303.00377) (35, 307.09835) (36, 310.745) (37, 314.50736) (38, 315.47501) (39, 316.98145) (40, 303.65622) (41, 283.71724) (42, 250.12824) (43, 200.00097) (44, 141.85184) (45, 92.08632) (46, 52.83467) (47, 27.56744) (48, 16.50733) (49, 11.33285) (50, 10.13134) (51, 9.35342) (52, 9.3051) (53, 8.9964) (54, 9.05847) (55, 9.19489) (56, 9.30532) (57, 9.00077) (58, 9.0904) (59, 8.94894) (60, 8.98874) (61, 9.05121) (62, 9.05575) (63, 8.88787) (64, 8.932) (65, 9.03284) (66, 8.96375) (67, 9.02396) (68, 8.97127) (69, 9.00532) (70, 9.2026) (71, 8.88389) (72, 8.99647) (73, 9.27393) (74, 8.91391) (75, 8.93572) (76, 9.07681) (77, 8.91519) (78, 8.95951) (79, 8.99897) (80, 8.87123) (81, 9.01856) (82, 9.24062) (83, 8.96668) (84, 9.44579) (85, 8.91932) (86, 8.94957) (87, 8.93952) (88, 8.98448) (89, 9.01503) (90, 8.96317) (91, 8.976) (92, 8.95763) (93, 9.05418) (94, 8.95606) (95, 9.02755) (96, 9.03569) (97, 8.9828) (98, 8.94892) (99, 8.98945) (100, 9.21508) }; 
        \end{axis}
    \end{tikzpicture} 
    \label{figure_running_time_micros_w16}
    }
    \hspace*{-3.1em}  
    ~
    \subfigure[$W=16$, Time in ``Niagara-units''] {
        \begin{tikzpicture}
            \begin{axis}[
                width=0.25\textwidth,
        		height=0.2\textwidth,
        		xmin=0, xmax=60,
        		xtick={0,10,20,30,40,50,60},
        		xlabel=Number of rules $n$,
        		ymin=0, ymax = 90,
                ytick={0,15,30,45,60,75,90},
                ylabel=Niagara time-units,
        		legend style={at={(0.01,0.99)}, anchor=north west,font=\footnotesize,},
        		label style={font=\footnotesize},
        		grid=both
            ]
            \addplot [black,mark=o] coordinates{ (5, 35.950975) }; 
            \addplot [blue,mark=square] coordinates{ (1, 42.50106608) }; 
            \addplot [purple,mark=*] coordinates{ (10, 58.40323713) }; 
            
            \addplot [black] coordinates{ (1, 40.45934364) (2, 38.85958979) (3, 38.5726728) (4, 36.04956677) (5, 35.950975) (6, 33.9489229) (7, 32.99516645) (8, 31.76029142) (9, 30.47651048) (10, 29.36185516) (11, 28.2309616) (12, 27.77027099) (13, 26.99291519) (14, 26.59088735) (15, 25.79430634) (16, 25.1196541) (17, 24.63283873) (18, 24.04264507) (19, 23.65532587) (20, 23.29401921) (21, 22.73535348) (22, 22.37656129) (23, 21.96704063) (24, 21.71228998) (25, 21.31831363) (26, 20.92102556) (27, 20.66355621) (28, 20.2945457) (29, 19.83820888) (30, 19.67844708) (31, 19.39515281) (32, 19.08108089) (33, 18.87930974) (34, 18.58935256) (35, 18.25749347) (36, 17.33347442) (37, 17.85788295) (38, 17.52935102) (39, 17.06003273) (40, 16.42418264) (41, 14.52020983) (42, 13.13179819) (43, 10.53729657) (44, 7.654072604) (45, 5.22187017) (46, 3.256043495) (47, 2.041200393) (48, 1.465471596) (49, 1.267335451) (50, 1.194533973) (51, 1.159222888) (52, 1.14530175) (53, 1.149794599) (54, 1.157781488) (55, 1.157210341) (56, 1.159298442) (57, 1.155399764) (58, 1.152677144) (59, 1.157486408) (60, 1.162808338) (61, 1.15915125) (62, 1.159685423) (63, 1.161538086) (64, 1.157793327) (65, 1.155991253) (66, 1.15423955) (67, 1.155734097) (68, 1.151514016) (69, 1.117266954) (70, 1.149484636) (71, 1.157352117) (72, 1.160466579) (73, 1.11771915) (74, 1.161307135) (75, 1.167983686) (76, 1.149314258) (77, 1.204267707) (78, 1.155010559) (79, 1.160210337) (80, 1.164443621) (81, 1.160183297) (82, 1.154126651) (83, 1.15798635) (84, 1.150252963) (85, 1.161911527) (86, 1.156545757) (87, 1.158519561) (88, 1.160943633) (89, 1.156490861) (90, 1.197495011) (91, 1.154866913) (92, 1.151266482) (93, 1.15518359) (94, 1.158045323) (95, 1.158770583) (96, 1.196341496) (97, 1.151721817) (98, 1.161931996) (99, 1.156111716) (100, 1.150296899) };
            
            \addplot [blue] coordinates{ (1, 42.50106608) (2, 43.89695374) (3, 43.05120155) (4, 39.74963681) (5, 39.00795984) (6, 36.99845333) (7, 35.79650965) (8, 34.00613043) (9, 32.56741929) (10, 31.02296614) (11, 29.76771265) (12, 29.0826519) (13, 28.25955817) (14, 27.58312772) (15, 26.83026935) (16, 26.00111901) (17, 25.48505804) (18, 24.75490517) (19, 24.24740933) (20, 23.7927555) (21, 23.25575639) (22, 22.86298301) (23, 22.35715107) (24, 22.05210164) (25, 21.57698916) (26, 21.2942863) (27, 20.95162708) (28, 20.58199093) (29, 20.13785812) (30, 19.90830757) (31, 19.5835362) (32, 19.29951737) (33, 18.98943985) (34, 18.69783472) (35, 18.46467048) (36, 17.51820019) (37, 18.00443523) (38, 17.67943199) (39, 17.18830391) (40, 16.55137184) (41, 14.7113919) (42, 13.27378748) (43, 10.63834391) (44, 7.681456579) (45, 5.267815965) (46, 3.225633976) (47, 1.992836246) (48, 1.416630593) (49, 1.219312766) (50, 1.14073522) (51, 1.108867004) (52, 1.092730154) (53, 1.093513526) (54, 1.102907653) (55, 1.101259883) (56, 1.101514925) (57, 1.100730693) (58, 1.098184228) (59, 1.101427794) (60, 1.101013444) (61, 1.101327256) (62, 1.102707497) (63, 1.103151658) (64, 1.105787712) (65, 1.100161288) (66, 1.098373628) (67, 1.135059368) (68, 1.093599578) (69, 1.063092912) (70, 1.094443459) (71, 1.105455662) (72, 1.101002864) (73, 1.06715126) (74, 1.101690498) (75, 1.104478497) (76, 1.095886476) (77, 1.103135061) (78, 1.099357977) (79, 1.105048947) (80, 1.10725332) (81, 1.101225266) (82, 1.099235376) (83, 1.100504649) (84, 1.095192417) (85, 1.110027973) (86, 1.147057776) (87, 1.0987278) (88, 1.109185342) (89, 1.107748653) (90, 1.104776959) (91, 1.098713845) (92, 1.095735902) (93, 1.105785863) (94, 1.10094275) (95, 1.10361306) (96, 1.102288824) (97, 1.093009561) (98, 1.099811369) (99, 1.104128322) (100, 1.091076638) }; 
            
            \addplot [purple] coordinates{ (1, 78.75225251) (2, 82.70359939) (3, 79.95346764) (4, 75.43214403) (5, 72.39447084) (6, 68.42640906) (7, 66.32291372) (8, 63.22507848) (9, 60.6649369) (10, 58.40323713) (11, 56.58504758) (12, 55.47218931) (13, 54.22040065) (14, 52.99983898) (15, 51.78329721) (16, 50.35862903) (17, 49.61633378) (18, 48.43530188) (19, 47.71067243) (20, 46.77948127) (21, 46.21359261) (22, 45.51934899) (23, 44.69575923) (24, 44.26305178) (25, 43.60789152) (26, 43.20861286) (27, 42.9443372) (28, 42.20606743) (29, 41.49827272) (30, 41.47940848) (31, 41.13111837) (32, 40.72185921) (33, 40.21877619) (34, 40.04934997) (35, 39.76556835) (36, 37.90502306) (37, 39.27115442) (38, 38.96855588) (39, 38.17227784) (40, 36.81877507) (41, 32.84665354) (42, 29.74774241) (43, 23.68761984) (44, 16.79769608) (45, 10.9401127) (46, 6.186903379) (47, 3.23642033) (48, 1.880474307) (49, 1.332339131) (50, 1.18913837) (51, 1.086755264) (52, 1.054162168) (53, 1.049926476) (54, 1.062780332) (55, 1.060170436) (56, 1.095412375) (57, 1.058383681) (58, 1.057066276) (59, 1.06370632) (60, 1.058761864) (61, 1.060025742) (62, 1.066620888) (63, 1.057877002) (64, 1.057131495) (65, 1.058015609) (66, 1.058706462) (67, 1.056264105) (68, 1.050434809) (69, 1.022498714) (70, 1.053344856) (71, 1.057388181) (72, 1.061814787) (73, 1.030438957) (74, 1.060670892) (75, 1.060067217) (76, 1.055601363) (77, 1.057577742) (78, 1.055838778) (79, 1.066256466) (80, 1.059801808) (81, 1.058037086) (82, 1.056617883) (83, 1.056015452) (84, 1.099603268) (85, 1.061243916) (86, 1.062745363) (87, 1.055779819) (88, 1.056218531) (89, 1.05823874) (90, 1.062897482) (91, 1.056223671) (92, 1.051455113) (93, 1.057517581) (94, 1.0558117) (95, 1.059959751) (96, 1.060406268) (97, 1.049553144) (98, 1.057008576) (99, 1.059930411) (100, 1.049662323) }; 
        \end{axis}
    \end{tikzpicture}
    \label{figure_running_time_relative_w16}
    }
    
    \subfigure[$W=32$, Running Time] {
        \hspace*{-1.5em}  
        \begin{tikzpicture}
            \begin{axis}[
                width=0.25\textwidth,
        		height=0.2\textwidth,
        		xmin=0, xmax=100,
        		xtick={0,20,40,60,80,100},
        		xlabel=Number of rules $n$,
        		ymin=0, ymax = 1100,
                ytick={0,200,400,600,800,1000},
                ylabel=Running time ${\mu}s$,
        		legend style={at={(0.00,1.35)}, anchor=north west,font=\footnotesize,legend columns=-1,},
        		label style={font=\footnotesize},
        		grid=both
            ]
            \addplot [black,mark=o] coordinates{ (5, 165.05593) }; 
            \addplot [blue,mark=square] coordinates{ (1, 148.93495) }; 
            \addplot [purple,mark=*] coordinates{ (10, 368.45916) }; 
            
            \addplot [black] coordinates{ (1, 136.81479) (2, 142.45338) (3, 154.31372) (4, 154.93881) (5, 165.05593) (6, 168.7069) (7, 174.50182) (8, 200.93901) (9, 211.65727) (10, 192.45977) (11, 194.26729) (12, 200.93081) (13, 208.01412) (14, 211.30305) (15, 214.04735) (16, 217.41275) (17, 221.08474) (18, 226.01619) (19, 230.11478) (20, 234.12943) (21, 239.79596) (22, 243.32662) (23, 248.1673) (24, 251.44946) (25, 255.44919) (26, 259.99979) (27, 265.42243) (28, 268.84813) (29, 273.02223) (30, 276.91753) (31, 279.82488) (32, 285.35715) (33, 289.74344) (34, 293.24313) (35, 297.25639) (36, 300.79816) (37, 317.69447) (38, 315.81252) (39, 316.05159) (40, 315.14702) (41, 320.99671) (42, 325.0727) (43, 327.03814) (44, 330.07365) (45, 334.34384) (46, 339.7822) (47, 339.86881) (48, 345.05709) (49, 346.82412) (50, 349.42882) (51, 353.23029) (52, 355.50431) (53, 358.26361) (54, 361.22269) (55, 364.16646) (56, 367.72208) (57, 369.13325) (58, 371.72685) (59, 374.63092) (60, 376.90614) (61, 379.51665) (62, 380.08211) (63, 383.46204) (64, 386.16597) (65, 387.49307) (66, 389.87912) (67, 391.28158) (68, 393.75626) (69, 395.23982) (70, 397.32877) (71, 398.9587) (72, 401.84153) (73, 401.86884) (74, 402.66328) (75, 403.126) (76, 406.37754) (77, 406.57785) (78, 407.98049) (79, 407.55482) (80, 407.95267) (81, 405.51573) (82, 402.46515) (83, 396.29454) (84, 381.85157) (85, 361.61999) (86, 330.92225) (87, 288.37388) (88, 242.81548) (89, 193.06991) (90, 148.04145) (91, 107.37695) (92, 73.66954) (93, 50.43711) (94, 35.71356) (95, 28.01387) (96, 23.05942) (97, 20.38499) (98, 19.27488) (99, 19.21971) (100, 18.62) };
            
            \addplot [blue] coordinates{ (1, 148.93495) (2, 169.80336) (3, 181.56012) (4, 181.91435) (5, 190.46236) (6, 194.6619) (7, 198.31484) (8, 227.32911) (9, 237.10528) (10, 215.06152) (11, 215.46645) (12, 220.64361) (13, 227.54833) (14, 231.21373) (15, 233.44506) (16, 236.87704) (17, 240.26219) (18, 244.27163) (19, 247.1847) (20, 250.81165) (21, 255.35377) (22, 259.57767) (23, 262.81252) (24, 266.97722) (25, 271.74925) (26, 274.56981) (27, 279.54925) (28, 282.81888) (29, 287.5818) (30, 291.29965) (31, 294.47813) (32, 298.74771) (33, 302.8215) (34, 306.43225) (35, 309.76289) (36, 313.36541) (37, 330.93685) (38, 328.04597) (39, 328.44723) (40, 327.19199) (41, 332.16545) (42, 336.47238) (43, 338.70961) (44, 340.09411) (45, 345.08397) (46, 348.94389) (47, 351.06569) (48, 354.79549) (49, 355.88065) (50, 359.98664) (51, 362.31826) (52, 365.10796) (53, 367.72687) (54, 370.80589) (55, 374.14355) (56, 377.01462) (57, 378.21581) (58, 380.48718) (59, 383.54651) (60, 385.84728) (61, 388.20076) (62, 388.48687) (63, 391.45596) (64, 393.86291) (65, 395.30015) (66, 397.6372) (67, 401.14865) (68, 400.81211) (69, 402.4633) (70, 404.00647) (71, 405.77504) (72, 409.50074) (73, 409.16173) (74, 409.88501) (75, 410.85912) (76, 412.40412) (77, 412.86939) (78, 414.84654) (79, 413.70134) (80, 415.27563) (81, 413.22165) (82, 409.52176) (83, 402.42761) (84, 387.67484) (85, 368.94793) (86, 335.84588) (87, 293.12193) (88, 247.30269) (89, 196.0505) (90, 150.36897) (91, 108.5088) (92, 73.85288) (93, 50.43389) (94, 35.62099) (95, 27.55902) (96, 22.54448) (97, 19.72429) (98, 18.64588) (99, 18.56001) (100, 17.96761) }; 
            
            \addplot [purple] coordinates{ (1, 238.32324) (2, 279.64704) (3, 296.0306) (4, 304.68877) (5, 313.72076) (6, 322.35931) (7, 333.66126) (8, 381.8363) (9, 403.89559) (10, 368.45916) (11, 372.93961) (12, 386.73408) (13, 403.11898) (14, 410.73362) (15, 416.89315) (16, 425.79915) (17, 436.00337) (18, 443.72967) (19, 454.7701) (20, 464.88017) (21, 478.39363) (22, 487.49399) (23, 497.28662) (24, 510.54332) (25, 521.00108) (26, 533.1096) (27, 546.07685) (28, 557.62633) (29, 568.40471) (30, 581.11112) (31, 591.08121) (32, 602.03235) (33, 612.31369) (34, 621.7447) (35, 633.44666) (36, 641.4068) (37, 677.43084) (38, 677.37554) (39, 678.08032) (40, 679.56819) (41, 692.08498) (42, 703.1355) (43, 708.2178) (44, 718.48428) (45, 727.89195) (46, 739.90074) (47, 745.24675) (48, 757.4221) (49, 763.67254) (50, 774.37312) (51, 784.67514) (52, 789.97573) (53, 799.55662) (54, 808.9567) (55, 817.96368) (56, 828.00523) (57, 834.20904) (58, 843.20856) (59, 852.05188) (60, 859.91964) (61, 869.24207) (62, 875.4893) (63, 886.38969) (64, 895.5684) (65, 900.68202) (66, 911.47093) (67, 918.39267) (68, 926.54927) (69, 933.39586) (70, 943.21033) (71, 950.70559) (72, 961.92467) (73, 965.6894) (74, 973.56838) (75, 979.754) (76, 987.49782) (77, 994.09217) (78, 1003.50576) (79, 1006.4694) (80, 1013.50131) (81, 1011.569) (82, 1008.74157) (83, 995.68677) (84, 962.42502) (85, 916.58938) (86, 835.89729) (87, 728.51203) (88, 610.10326) (89, 481.75821) (90, 361.42497) (91, 254.4793) (92, 164.06069) (93, 103.39515) (94, 64.01946) (95, 42.7172) (96, 29.44001) (97, 22.36452) (98, 19.34609) (99, 18.87257) (100, 17.52711) }; 
        \end{axis}
    \end{tikzpicture}
    \label{figure_running_time_micros_w32}
    }
    \hspace*{-1.6em}  
    ~
    \subfigure[$W=32$, Time in ``Niagara-units''] {
        \begin{tikzpicture}
            \begin{axis}[
                width=0.25\textwidth,
        		height=0.2\textwidth,
        		xmin=0, xmax=100,
        		xtick={0,20,40,60,80,100},
        		xlabel=Number of rules $n$,
        		ymin=0, ymax = 140,
                ytick={0,20,40,60,80,100,120,140},
                ylabel=Niagara time-units,
        		legend style={at={(0.01,0.99)}, anchor=north west,font=\footnotesize,},
        		label style={font=\footnotesize},
        		grid=both
            ]
            \addplot [black,mark=o] coordinates{ (5, 65.83311113) }; 
            \addplot [blue,mark=square] coordinates{ (10, 60.16043968) }; 
            \addplot [purple,mark=*] coordinates{ (10, 103.8505581) }; 
            
            \addplot [black] coordinates{ (1, 75.49587478) (2, 72.06044119) (3, 71.54419479) (4, 58.3874529) (5, 65.83311113) (6, 62.96133234) (7, 60.38027379) (8, 58.20964286) (9, 56.08524842) (10, 54.31430109) (11, 53.33175894) (12, 51.88427326) (13, 50.43119221) (14, 49.52154497) (15, 48.42012162) (16, 47.49578984) (17, 46.59105827) (18, 45.80329543) (19, 44.89235295) (20, 44.09515193) (21, 43.12964478) (22, 43.03922195) (23, 42.50389187) (24, 42.15022245) (25, 41.61709054) (26, 41.20241456) (27, 40.53541251) (28, 40.30940868) (29, 39.72415566) (30, 39.20090737) (31, 37.29508496) (32, 38.63567912) (33, 38.57179329) (34, 37.76775057) (35, 37.6625083) (36, 37.26902298) (37, 35.62022339) (38, 36.63306722) (39, 36.38473854) (40, 36.07189495) (41, 35.82492748) (42, 35.41811252) (43, 35.21393032) (44, 34.82747583) (45, 33.34924468) (46, 34.2310947) (47, 33.99405038) (48, 33.53647851) (49, 33.26002596) (50, 32.94718653) (51, 32.7570289) (52, 32.42967351) (53, 32.22668532) (54, 31.90584343) (55, 31.67790815) (56, 31.46624251) (57, 31.2153726) (58, 30.98284257) (59, 30.73752509) (60, 30.39150706) (61, 30.18072064) (62, 29.85144474) (63, 29.41827179) (64, 29.41503088) (65, 28.49462894) (66, 28.88849176) (67, 28.50122402) (68, 28.35831594) (69, 28.1102516) (70, 27.80009202) (71, 27.49167308) (72, 27.32918335) (73, 27.22612509) (74, 27.03337219) (75, 26.82044579) (76, 26.39678421) (77, 26.27760175) (78, 26.07112272) (79, 25.89530591) (80, 25.52984109) (81, 25.30081375) (82, 24.6309016) (83, 24.18263739) (84, 22.70550026) (85, 21.79649951) (86, 19.78823385) (87, 17.19326029) (88, 14.44715955) (89, 11.31711558) (90, 8.762813689) (91, 6.400971532) (92, 4.359731545) (93, 3.016825362) (94, 2.141705012) (95, 1.653863517) (96, 1.360930263) (97, 1.206618029) (98, 1.137864079) (99, 1.12101096) (100, 1.100653206) };
            
            \addplot [blue] coordinates{ (1, 81.38303901) (2, 85.16229346) (3, 83.55117413) (4, 67.54365808) (5, 75.2015716) (6, 71.57621565) (7, 67.93132786) (8, 65.13407191) (9, 62.22728141) (10, 60.16043968) (11, 58.53732829) (12, 56.40893595) (13, 55.0814749) (14, 53.63492624) (15, 52.42516316) (16, 51.14908798) (17, 49.95632597) (18, 48.87660272) (19, 48.02716592) (20, 46.99040451) (21, 45.79284972) (22, 45.62047053) (23, 44.88243665) (24, 44.48131033) (25, 43.81170503) (26, 43.35987962) (27, 42.49659872) (28, 42.1124583) (29, 41.62329872) (30, 40.9843184) (31, 38.92053194) (32, 40.27696473) (33, 39.97889492) (34, 39.30448964) (35, 39.00100196) (36, 38.7999052) (37, 36.86113199) (38, 37.85728098) (39, 37.58288874) (40, 37.12339575) (41, 36.88120727) (42, 36.50074423) (43, 36.18933853) (44, 35.83759287) (45, 34.31384317) (46, 35.1554262) (47, 34.86463396) (48, 34.37634902) (49, 34.09319963) (50, 33.77996546) (51, 33.54329169) (52, 33.19718707) (53, 33.05020432) (54, 32.68051953) (55, 32.3421825) (56, 32.1315478) (57, 31.88822902) (58, 31.61013767) (59, 31.27153312) (60, 31.00136033) (61, 30.6840431) (62, 30.4539815) (63, 29.97598687) (64, 29.92055742) (65, 28.91451812) (66, 29.49683634) (67, 28.99611347) (68, 28.91439841) (69, 28.59001838) (70, 28.18302367) (71, 27.96910087) (72, 27.80904615) (73, 27.67646684) (74, 27.38680506) (75, 27.13410969) (76, 26.77905415) (77, 26.61581446) (78, 26.40247315) (79, 26.1955221) (80, 25.91719674) (81, 25.72463656) (82, 25.00965434) (83, 24.55613403) (84, 23.01123599) (85, 22.14980341) (86, 20.08545196) (87, 17.40384669) (88, 14.70726835) (89, 11.42132769) (90, 8.867986242) (91, 6.453685813) (92, 4.381187993) (93, 3.006279772) (94, 2.104994321) (95, 1.622039147) (96, 1.329020651) (97, 1.177892383) (98, 1.10395434) (99, 1.096699515) (100, 1.060767603) }; 
            
            \addplot [purple] coordinates{ (1, 130.3609526) (2, 140.8129393) (3, 136.5294985) (4, 114.029939) (5, 124.7279532) (6, 119.39658) (7, 114.6887509) (8, 109.6922104) (9, 106.1909633) (10, 103.8505581) (11, 101.8230024) (12, 99.30982754) (13, 97.28404808) (14, 95.61630306) (15, 94.21082458) (16, 92.22472177) (17, 90.79346814) (18, 89.73595761) (19, 88.69491339) (20, 87.38398999) (21, 86.05780231) (22, 85.96866656) (23, 85.36280431) (24, 84.86004926) (25, 84.53049848) (26, 83.92700542) (27, 83.4533367) (28, 83.0764159) (29, 82.66539322) (30, 81.86639521) (31, 78.36071654) (32, 81.54210327) (33, 81.06151419) (34, 80.05657004) (35, 79.93768709) (36, 79.33054982) (37, 76.06731165) (38, 78.3406728) (39, 77.75940452) (40, 77.34906599) (41, 77.02699996) (42, 76.59267234) (43, 76.06636136) (44, 75.57423564) (45, 72.54264518) (46, 74.72846269) (47, 74.3563101) (48, 73.74261503) (49, 73.31771692) (50, 72.95326709) (51, 72.61041323) (52, 72.22746849) (53, 72.07476932) (54, 71.48118707) (55, 71.18579064) (56, 70.97645327) (57, 70.50977612) (58, 70.21053923) (59, 69.79623432) (60, 69.46249971) (61, 69.23801105) (62, 68.59573889) (63, 68.07610694) (64, 68.15673882) (65, 66.21367301) (66, 67.5255518) (67, 66.85541869) (68, 66.8544023) (69, 66.48479279) (70, 65.99521021) (71, 65.60524747) (72, 65.50592318) (73, 65.50190878) (74, 65.24475185) (75, 64.76765156) (76, 64.55107122) (77, 64.27660834) (78, 64.13906768) (79, 63.83382137) (80, 63.45212069) (81, 63.05134369) (82, 61.84147968) (83, 60.91473033) (84, 57.27767185) (85, 55.25506357) (86, 50.29335726) (87, 43.48114267) (88, 36.60048923) (89, 28.1544804) (90, 21.41755698) (91, 15.17862688) (92, 9.784921211) (93, 6.14692986) (94, 3.77900055) (95, 2.514342717) (96, 1.762873119) (97, 1.348940816) (98, 1.168752331) (99, 1.094845842) (100, 1.045118239) }; 
        \end{axis}
    \end{tikzpicture}
    \label{figure_running_time_relative_w32}
    }

    \caption{Running time of the algorithms, for fixed $k=16$, $W=16$ (top) or $W=32$ (bottom) and $n \in [1,60]$ or $n \in [1,100]$ respectively. For each $n$, the running time was averaged over $10,000$ random ordered-partitions (same set for all three algorithm). (a)+(c) show the running time in $\mu$s, (b)+(d) normalize the running time in ``units'' of Niagara, to show the loss in running time compared to the heuristic of truncating a Niagara sequence. These units grow from $0.8{\mu}s$ ($n\approx 1$) to $18{\mu}s$ ($n\approx 100$), monotonous in $n$, not affected much by $W$. At about $n \ge 40$ for $W=16$ and $n \ge 80$ for $W=32$ many partitions can be represented exactly for, so the approximation-logic is not required and the running-time drops.}
    \label{figure_running_time_all}
\end{figure}

\subsection{Unreachable Targets (Degeneracy Concerns)}
\label{subsection_experiments_degeneracy}
Our algorithms find the closest partition (in one of $\errLinf$, $\errLinfPos$ or $\errLinfPosRel$) that can be represented by $n$ rules to a given partition. As in Example~\ref{figure_tcam_trie_all}, such an optimal partition may be ``degenerate'' in the sense that it does not map any address to some of the $k$ targets.

Such a degeneracy is likely only when $\frac{n}{k}$ is very small. Fig.~\ref{figure_degeneracy} shows that for $n < 2k$ there is high probability to encounter such a degeneracy, but for $n \ge 3k$ this probability is already low. We note that if the objective is $\errLinfPosRel$ then it is slightly more likely to get a degenerate partition than for $\errLinfPos$, which in turn has higher chances for degeneracy than $\errLinf$, but overall these chances for the different objectives are  similar, and the degeneracy happens mostly when $\frac{n}{k}$ is no more than $2.5$. One can avoid degenerate partitions by incorporating various heuristics that postprocess the rules.

We note that in real-life situations the desired partitions are likely to be more balanced than in a uniform ordered-partition (which is skewed toward having few heavy parts), and therefore the probability of degeneracy is even smaller because loosely speaking each target is more likely to get ``its fair share'' of rules.

\begin{figure}[!t]
	\centering
	
	\subfigure[$\errLinf$]{
    \includegraphics[width=1.0\linewidth]{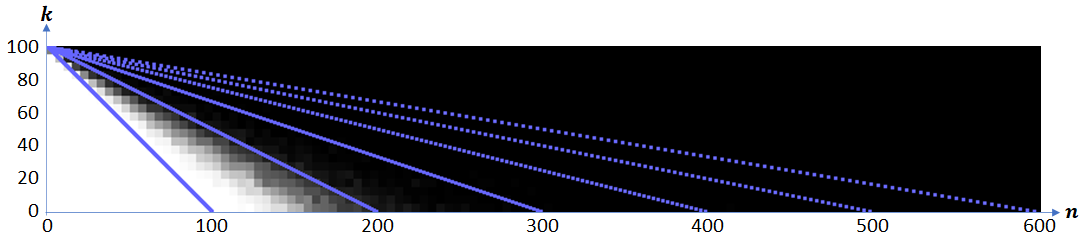}
    \label{figure_degeneracy_2sided}
    }
    ~
    \subfigure[$\errLinfPos$]{
    \includegraphics[width=1.0\linewidth]{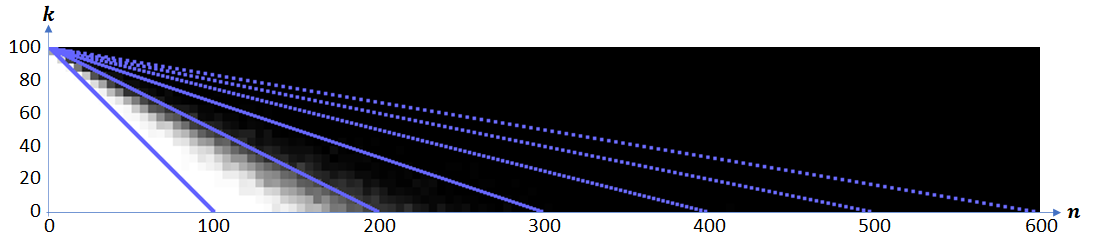}
    \label{figure_degeneracy_1sided}
    }
    ~
    \subfigure[$\errLinfPosRel$]{
    \includegraphics[width=1.0\linewidth]{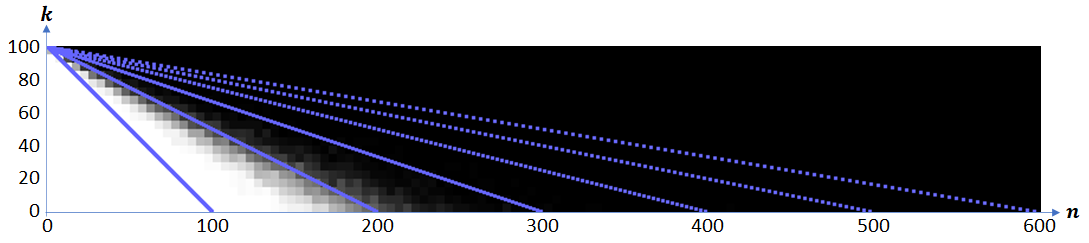}
    \label{figure_degeneracy_1sidedRelative}
    }

    \caption{A visualization of the probability that approximating a partition $P$ with $k$ targets using $n$ rules will result in a partition $P'$ with less than $k$ targets. $W=32$ is fixed, each large pixel represent a pair of $n$ and $k$ for $k=5,10,\ldots,100$ and $n=5,10,\ldots,600$. The dashed blue lines mark a constant integer ratio $\frac{n}{k} = c$ for $c=1,\ldots,6$. White ``pixels'' represent 100\% degeneracy, black represent 0\%, gray are intermediate values. Each pixel was computed by sampling 100 random ordered-partitions, note that when $n<k$ degeneracy is guaranteed for any partition.}
    \label{figure_degeneracy}
\end{figure}

\subsection{Real Data Partitions}
\label{subsection_experiments_real_data_partitions}
In this section we provide results for ``real data partitions''. 
We use the data of \cite{ftp_packet_captures},\footnote{Available for download at: https://ee.lbl.gov/anonymized-traces.html} containing 3.2 million packets in 22 thousand connections between 5832 distinct clients to 320 distinct servers. 

We made the assumption that the data traffic approximates the target partition in the following concrete sense.\footnote{Using any captured data can only be done if we assume that the data represents closely enough a desired partition.} We sliced the data into windows of one hour each, to get a total of 240 time frames. From each frame we extracted three partitions, according to three types of loads on the servers that communicated in the particular frame as follows: (1) the number of unique clients per server (``load balancing sessions''); (2) the number of incoming packets (``load balancing requests''); (3) the number of outgoing bytes (``load balancing data-processing''). Overall, we get 720 partitions with sums that range in $[18,298]$ (connections), $[2143,16306]$ (packets) and $[196637,1618691]$ (sent bytes). The number of parts in the partitions (targets) varies among $k \in \{4{-}18, 20, 21, 23, 28, 58, 64, 67, 260\}$.

Most of the partitions do not sum to a power of $2$, as could be expected. Since our model requires a universe of $2^W$ addresses, we normalized the partitions to be non-integer partitions that sum to a power of $2$ as explained in Section~\ref{section_non_integer_partitions_approximation} (Problem \ref{problem_tcam_approx_by_rules_non_integer}). On one hand we didn't want to normalize partitions with small sums using a large $W$, and on the other hand we wanted to have only a few values of $W$, to be able to compare partitions with similar parameters. Therefore, we normalized the sum of each partition to a multiple of $256$. All but one of the connections-partitions end up with $W=8$ (all sums are $\le 112$ except for one exception of $298$), all packets-partitions end up with $W=16$, and all bytes-partitions end up with $W=24$.

For each normalized partition we check the trade-off between error and percentage of rules out of the maximum necessary $n^*(P)$ for the best possible integer representation.\footnote{$n^*(P) = \lambda(P)$ if $P$ is integer, or $n^*(P) = \lambda(P')$ for an integer partition $P'$ closest to $P$.} The percentages that we sampled were multiples of $10$, that is, $\floor{\frac{i \cdot n^*(P)}{10}}$ for $i=1,\ldots,10$. Fig.~\ref{figure_real_data_all} shows this trade-off. To reduce cluttering, the data in the figure only relies on 72 out of the 240 time-frames, covering the first three days. The $x$-axis is the number of rules used to approximate a partition, in percentage ($100$ means ``best representation''). The $y$-axis for $\errLinf$ and $\errLinfPos$ is $\frac{\lg (\text{error})}{W}$, where we divide by $W$ to normalize the presentation for partitions with different sums. We see that just like the results for randomly ordered-partitions, the error decreases exponentially with the increase in the number of rules (linear in the graph, due to log-scale). The division by $W=8$, $W=16$ or $W=24$ (depending on the case) can be thought of as ``error per width unit''.

Regarding $\errLinfPosRel$, in all of the 720 partitions we derived, $\max_{1 \le i \le k}{p_i} > 0.0838 \sum_{i=1}^{k}{p_i}$, which means that even if we only use one rule, the $\errLinfPosRel$ error would be less than $\frac{\sum_{i=1}^{k}{p_i} - \max_{1 \le i \le k}{p_i}}{\max_{1 \le i \le k}{p_i}} \le \frac{1}{0.0838} - 1 < 11$ (allocating the match-all rule to the maximum weight). In each of the approximations that we computed, the maximum relative error is less than $2.6$, so taking $\lg$ of the relative error yields values that are mostly negative. Since the relative error is computed by dividing the absolute error by a part of the partition, which can be of the order of $2^W$, a non-zero relative error can get as low as $\approx 2^{-W}$. This is why its logarithm, divided by $W$, reaches values of approximately $-1$. (when the error is zero, its logarithm is undefined no matter whether the error is relative or absolute).

\begin{figure}[t]
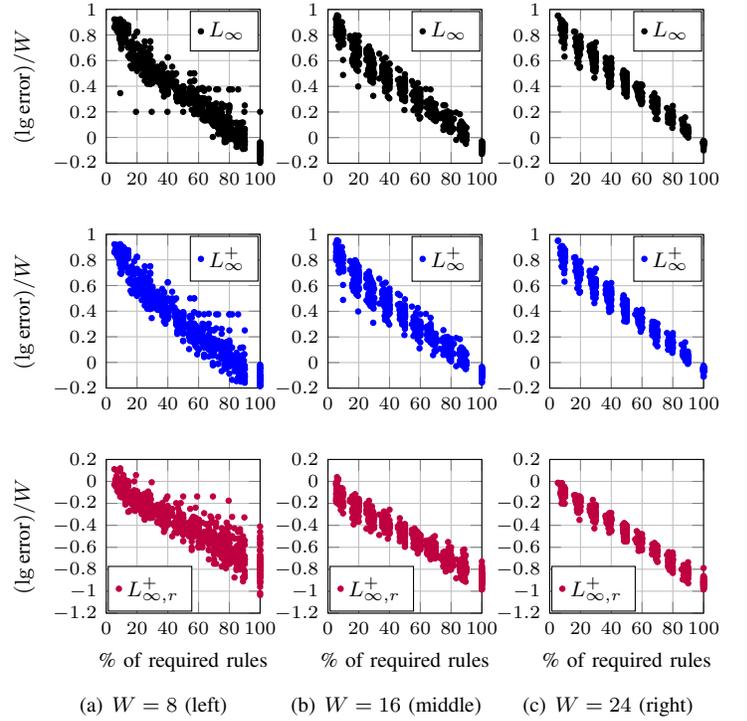

    \subfigure {
        \hspace*{-1.0em}  

    \label{figure_real_data_relative_bytes}
    }
    
    \caption{Scatter plots of error versus number of rules for ``real data'' partitions. The number of rules is normalized by $n^*(P)$, the maximum number of rules needed for the lowest possible error (e.g.\ zero-error if the partition is integer).
    All three kinds of errors, $\errLinf$, $\errLinfPos$ and $\errLinfPosRel$, are presented in log-scale normalized by the width $W$. The columns plot, from left to right: session-partitions ($W=8$), packets-partitions ($W=16$), bytes-partitions ($W=24$) as explained in Section~\ref{subsection_experiments_real_data_partitions}. To reduce cluttering, only the first 72 partition out of 240 are used for the plot. Each partition corresponds to 10 points: the $i$th point gives the error when we approximate this partition with  $\floor{\frac{i \cdot n^*(P)}{10}}$ rules for $i=1,\ldots,10$.
    }
    \label{figure_real_data_all}
\end{figure}

}

\end{document}